\documentclass[12pt,reqno]{amsart}
\usepackage{amsmath,amssymb,graphicx,amscd,bm}
\usepackage{amsaddr}
\usepackage{mathrsfs}
\usepackage{enumerate}
\usepackage{mathrsfs}
\usepackage{dsfont}
\usepackage{amsfonts}
\usepackage{fullpage}

\usepackage[pagewise]{lineno}

\usepackage{hyperref} 
\hypersetup{
    colorlinks=true,
    linkcolor=blue,
    filecolor=magenta,      
    urlcolor=magenta,
    }

\usepackage{caption}
\usepackage{subcaption} 

\usepackage{float} 
\usepackage{color} 

\usepackage[ruled]{algorithm2e} 
\vfuzz2pt 
\hfuzz2pt 

\renewcommand{\P}{\mathbb{P}}
\newcommand{\Q}{\mathbb{Q}}
\newcommand{\E}{\mathbb{E}}
\newcommand{\I}{\mathbf{1}}
\newcommand{\Y}{{\mathbf Y}}
\newcommand{\FF}{\mathbb{F}}
\newcommand{\GG}{\mathbb{G}}
\newcommand{\F}{\mathcal{F}}
\newcommand{\G}{\mathcal{G}}

\newcommand{\HH}{{\mathbb H}}
\newcommand{\RR}{{\mathbb R}}
\newcommand{\NN}{{\mathbb{N}}}

\newcommand{\C}{\mathcal{C}}
\renewcommand{\S}{\mathcal{S}}
\newcommand{\N}{\mathcal{N}}
\newcommand{\D}{\mathcal{D}}

\newcommand{\finproof}{$\square$}



\newtheorem{thm}{Theorem}[section]

\newtheorem{lem}[thm]{Lemma}
\newtheorem{prop}[thm]{Proposition}

\theoremstyle{definition}
\newtheorem{defn}[thm]{Definition}

\theoremstyle{remark}
\newtheorem*{rem}{Remark}

\theoremstyle{definition}
\newtheorem{notation}{Notation}

\theoremstyle{remark}

\numberwithin{equation}{section}

\numberwithin{equation}{section}

\begin{document}

\title{A default system with overspilling contagion}

\author[1]{Delia Coculescu*}
\thanks{*Corresponding author.}
\address[1]{Institut f\"ur Mathematik, University of Z\"{u}rich,\\Winterthurerstrasse 190, 8057 Z\"{u}rich, Switzerland\\
Department of Banking and Finance\\
Plattenstrasse 14, 8032 Z\"urich, Switzerland.}
\email{delia.coculescu@math.uzh.ch}

\author[2]{Gabriele Visentin}
\address[2]{Department of Mathematics, ETH Z\"urich,\\R\"{a}mistrasse 101, 8092 Zurich, Switzerland.}
\email{gabriele.visentin@math.ethz.ch}

\date{\today}

\keywords{Credit risk, contagion, non-Markovian processes, enlargement of filtrations, credit derivatives, stochastic differential equations.}

\subjclass[2020]{60G07, 60H15, 91G45, 91G40, 91B05}

\begin{abstract}
%
Some dynamical contagion models for default risk have been proposed in the literature, where a system (composed of individual debtors) evolves as a Markov process conditionally on the observation of its stochastic environment, with interacting intensities.

The Markovian assumption necessitates that the environment evolves autonomously and is not influenced by  the transitions of the system. We extend this classical literature and allow a default system to have a contagious impact on its environment. With a certain probability, the transition of a debtor to the default state has an impact on the system's environment. This in turn affects the transition intensities of the other debtors inside the system. 
 
Therefore, in our framework, contagion can either be contained within the default system (i.e., direct contagion from a counterparty to another) or spill from the default system over its environment (indirect contagion).
This type of model is of interest whenever one wants to capture within a model possible impacts of the defaults of a class of debtors on the more global economy and vice versa.
 
\end{abstract}
\maketitle

\section{Motivation and aims}Default events tend to cluster in time, but this phenomena can be a manifestation of diverse causes. The literature on dynamic modelling of defaults proposed so far two major mechanisms that produce this effect. First, there is the so-called cyclical correlation, i.e., the dependence of the debtors' financial situation on some common factors. One can naturally think of some macroeconomic factors that impact the default probabilities of many debtors at a time, as for instance the level of interest rates, the prices of some commodities or the business cycle; a purely statistical approach using abstract or unobserved factors is also possible, when the aim is to fit some market data, such as credit spread observations. This type of dependence between defaults has been modelled in the standard reduced-form credit risk models with conditionally independent defaults; see for instance Duffie and Singleton \cite{DuffSing03} or Lando \cite{Lando94} for an overview.

Secondly, there is the so-called counterparty risk or direct contagion, i.e., the default of one debtor represents itself a destabilising factor impacting the default rates of surviving debtors (the counterparties of a defaulted debtor). 
 
In order to have these two mechanisms of contagion operational simultaneously it is necessary to  distinguish  within the model the default system from its environment. The role of the random environment is to carry the cyclical correlation. It is generally assumed that the common factors affecting debtors' default probabilities are stochastic processes; the random environment is their natural filtration.  The default system is simply formed by the default indicator processes of each debtor, that keeps track of the default events as they occur. 

A general approach so far (a review of the literature follows in Section \ref{sec2}) is to consider that conditionally on a given realisation of the random environment, the vector of default indicator processes of the different debtors is a time inhomogeneous Markov chain. 

While the Markovian assumption is convenient, it necessitates that the environment evolves autonomously and is not influenced by the history of the default events. Our aim here is  to introduce a new source of contagion, that we call overspilling (or indirect) contagion: the one that transmits from  the default system to its environment, subsequently having a feedback effect on the system itself. The construction, by its nature is not Markovian, the default probabilities depend not only on the current state of the default system, but also on the circumstances of the occurrence of the past defaults, more precisely  the knowledge of their impact on the environment. 

The overspilling contagion can be interpreted as the impact of default events on some economic factors that themselves are default drivers. For instance interest rates, collateral values, some commodity prices are known to both impact the solvability of debtors and be impacted under extreme circumstances by some default events. We believe that this mechanism of contagion has been mostly ignored in the previous literature not because it was deemed as unimportant, but rather due to technical reasons. We propose a tractable framework to capture the transmission of risk from defaults to factors driving defaults and back.

The paper is organised as follows. In Section \ref{sec2} we introduce the precise Markovian setting that we aim to extend, and review the existing literature. Section \ref{sec3} presents the model with overspilling contagion. Starting from a conditionally independent default system, the construction is obtained via a suitable change of the probability measure.

Section \ref{sec4} introduces and comments on the main result of the paper, that is, the survival probabilities for arbitrary sets of debtors can be obtained from a system of stochastic differential equations that can be solved recursively. Importantly, these equations are depending on the initial state of the system and the evolution of the environment, as in the Markovian setting.

In Section \ref{sec:numerical_implementation} we propose numerical implementations. In particular, we show the impact of overspilling contagion on the term structure of default probabilities in a small system of entities. We also show how to use the model for the pricing of single-name credit default swaps and $k$th-to-default swaps.

Section \ref{sec5} is dedicated to the proof of the main result. In the appendices the reader can find more details on the correspondence between our results and the Markovian setting. By considering a particular case of our setting that is Markovian,  we show that the main result can be derived from the Kolmogorov forward equations. 

\section{Default models with interacting intensities: the Markovian approach}\label{sec2}

In some default risk models, the mechanism of contagion is specified via local interactions among defaultable entities; these models are conceptually and mathematically close to models of interacting particle systems developed in statistical physics.  

We here describe a model with $n<\infty$ debtors, following Frey and Backhaus \cite{FreyBack08}; more related literature is found at the end of this section. Our aim is to introduce already the notation and framework that will be used in our extension,  while reviewing the Markovian setup. 

Let $(\Omega, \G, \FF=(\F_t)_{t\geq 0}, \P)$ be a filtered probability space satisfying the usual assumptions. The filtration $\FF$ carries the relevant information about the environment of the default system. The default system itself is modelled by a multivariate process $\Y=(Y_t(1),...,Y_t(n))_{t\geq 0}$ with state space $I:=\{0,1\}^n$, where  0 is the no default (or survival) state and 1 is the default state, so that  $(Y_t(k))_{t\geq 0}$ is the indicator process of the default of the debtor $k$. 

We denote $\N:=\{1,...,n\}$. The global information  $\GG^\mathcal N=(\G^\mathcal N_t)_{t\geq 0}$ contains both the environment and the default system:
\begin{equation}\label{int}
\G^\N_t:=\mathcal H^\N_{t^+}\text{ with }\mathcal H^\N_t:=\F_t\underset{k\in\mathcal N}{\bigvee}\sigma(Y_s(k),s\leq t).
\end{equation}
It is assumed that  conditionally on $\F_\infty$, the default process $\Y$ is a time inhomogeneous Markov chain (see Appendix \ref{AppendixDef} for a definition).

The instantaneous transition rates of $\Y$ from any state $\mathbf x\in I$ to any state $\mathbf y\in I$ ($\mathbf x\neq \mathbf y$)  at time $t$ and conditionally at $\F_\infty$ are assumed to exist and to satisfy:
\begin{equation}\label{q}
\begin{cases}
q_t( \mathbf x,\mathbf y)>0 \text{ if for some }k\in\N:  \mathbf y=\mathbf x^k \text{ and }  x(k)=0\\
q_t( \mathbf x,\mathbf y)=0 \text{ else},
\end{cases}
\end{equation}where $\mathbf x^k\in I$ is obtained from $ \mathbf x=(x(i))_{i=1,...,n}\in I$ by flipping the $k^{th}$ coordinate, $x(k)$. In other words, the transition rate is non zero only  when $\mathbf y$  can be obtained from $ \mathbf x$ by flipping a single element of $\mathbf x$ from 0 to 1.  

 For any  $k\in\N$ and $\mathbf x\in I$ with  $x(k)=0$,  $q( \mathbf x, \mathbf x^k):= (q_t( \mathbf x, \mathbf x^k)(\omega),t\geq 0)$ is  a stochastic process, and is considered $\FF$ adapted. It represents the default rate of the $k^{th}$ debtor at any time $t$, given that $\Y_t=\mathbf x$.
 
 Every component of the system (that is, debtor) $k\in\N$ has a single transition time, which is from $0$ to $1$, which is interpreted as the default time: 
\[
\tau(k):=\inf\{t\geq 0\;|\; Y_t(k)=1\},\quad k\in\N.
\]

\begin{rem}
\begin{enumerate}
\item We say that debtor $k$ is in the default state, when the process $\Y$ is in any state $\mathbf x \in I$ satisfying $x(k)=1$. 
\item  We observe that with the specification in (\ref{q}), only one default event can occur at a time and for any debtor, its default state is absorbing, in the sense that no coordinate of the process $\Y$ can be reversed from 1 to 0. We shall keep these features in our extension.
\end{enumerate}
\end{rem}

\begin{defn}
We call the default intensity of  debtor $k$  (or alternatively the intensity of $\tau(k)$) with respect to $(\GG^\N,\P)$ the nonnegative process $\lambda^\N(k)$ such that 
$$\left(Y_t(k)-\int_0^{t} \lambda^\N_s(k) ds\right)$$
 is a $(\GG^\N,\P)$-martingale, whenever such a process exists.  We denote $\bm\lambda^\N:=(\lambda_t^\N(1) ,\cdots, \lambda_t^\N(n))_{t\geq 0}$ the vector of default intensities.
\end{defn}
The intensity of one debtor depends implicitly on the set of contagious debtors $\N$. More exactly, given the above transition rates, it can be shown that
 \begin{equation}\label{l0}
 \lambda_t^\N(k)=q_t(\Y_t, \Y_t^k).
 \end{equation}
In this context, most existing models either directly assume, or are consistent with,  the following  representation of the intensities: there exist stochastic processes $\bm{\lambda}:=(\lambda_t(k), k\in\N)_{t\geq 0}$ and $\bm \xi:=(\xi_t(i,j)\; i,j\in\N)_{t\geq 0}$, all being $\FF$ adapted and such that 
 \begin{equation}\label{l1}
 \lambda^\N_t(k)=\lambda_t(k) + \sum_{i\in\N} \xi_t(k,i)Y_{t}(i).
 \end{equation}
Each debtor having a single transition time, we can rewrite (\ref{l1}) as:
\begin{equation}\label{l2}
 \lambda^\N_t(k)=\lambda_t(k) + \sum_{i\in\N} \xi_t(k,i)\I_{\{\tau(i)\leq t\}}.
 \end{equation}
Default intensities as in (\ref{l2}) will arise as a special case in the construction that we propose in the next section.

 Early default models with interacting intensities are  Kusuoka \cite{Kusuok}, Davis and Lo \cite{DaviLo01}, Jarrow and Yu \cite{JarrYu01}, Yu \cite{Yu07}, Bielecki and Rutkowski \cite{BielRutk03}; in more recent years, we mention for instance Frey and Backhaus \cite{FreyBack08}, \cite{FreyBack10}, Herbertsson  \cite{Herb08}, Herbertsson and Rootz\'en \cite{HerbRoot08}, Jian and Zen \cite{JianZhen09}, Bielecki, Cr\'epey and Jeanblanc \cite{BielCrepJean10}, Bo and Capponi \cite{BoCapp16} to name only a few.
Other models of contagion are variants of the above described framework. We mention some of the variants: non absorbing default states (Giesecke and Weber \cite{GiesWebe04}, \cite{GiesWebe06}); credit migration models with more than two states for each debtor (Davis and Esparragoza-Rodriguez \cite{DaviEspa07}, Egloff et al. \cite{EgloLeipVani04},  Bielecki et al. \cite{BielCrepJeanRutk07}, Horst \cite{Horst07}); the so-called frailty models where the filtration $\FF$ is (partially) unavailable for pricing and filtering techniques are used (Frey and Schmidt \cite{FreySchmidt}, Duffie et al. \cite{DuffEckHoreSait09}); more than one default is allowed to occur at a time (Bielecki et al. \cite{BielCousCrepHerb14}).  We recommend the survey paper by  Bielecki, Cr\'epey and  Herbertsson \cite{HerbBielCerp11} for a more detailed presentations of  the Markovian setting.

All models with counterparty risk/direct contagion that fit in the mathematical framework developed above, are part of the so-called bottom-up approach within the intensity-based models. We do not mention models within the so-called top-down approach, nor models that use copulas to describe dependence, as they are not linked to the current approach.

While introducing a generalisation of the above Markovian framework in the next section, we will be close in the spirit to another part of the default risk literature, namely the enlargements of a filtration approach initiated by Elliott et al. \cite{elliotjeanbyor}. Our approach needs to allow default events to occur simultaneously with events in the filtration $\FF$; this property, even though not standard in credit risk,  has been explored in some recent papers.  We mention the following: Coculescu \cite{coculescu},  Aksamit et al. \cite{AksaChouJean14} investigated the mathematical implications of this property; while a financial application is given in Jiao and Li \cite{JiaoLi16}. The existing models deal with the case of single defaults and hence are not  studying contagion among debtors. 

The paper by Frey and Runggaldier \cite{FreyRung10} deserves a particular attention. They present a default model, where, as in our approach, default events can occur simultaneously with some external events, which are determined by a factor process $\mathbf X$.  The couple $(\mathbf X,\mathbf Y)$ is considered to be a Markov process. The process $\mathbf X$ can be considered to belong to a subfiltration  $\FF$, even though the model is not explicitly built that way, and in this case it could be interpreted as the environment of the default system, within our framework. But their model involves unobserved factors and the focus is to develop the filtering techniques that are appropriate in this particular framework. Accordingly, here is no use of the enlargement of a filtration but rather the opposite, that is, projecting on subfiltrations.  Another difference lies in the fact that we do not need a specific factor process $\mathbf X$ and its exact dynamics for deriving our results.  As far as we know,  \cite{FreyRung10} is the only preexisting paper to model explicitly the vector process $\mathbf Y$ as having direct contagion between its components and at the same time  indirect contagion driven by an exterior  process $\mathbf X$ (that they call factor process), itself depending on $\mathbf Y$.

Let us also point out the paper of  El Karoui et al. \cite{ElKaJiaoJean17}, which analyses the effects of changes of a probability measure for a default system. Their framework  is very general and flexible to encompass many possible concrete applications: the default times do not necessarily admit an intensity, they can be either ordered or not ordered, finally it accommodates many possible information sets (i.e., observations of the default system). On the opposite, our objective in this paper is very applied: we propose a specific example of a default system that "contaminates" its environment which is a generalisation of the Markovian model presented above; being specific, we are able to characterise the corresponding survival probabilities.

\section{Interacting intensities and overspilling contagion}\label{sec3}

As in the previous section, we consider a group $\N=\{1,...,n\}$ of debtors. We shall introduce the dependence structure within the group $\mathcal N$ in two steps, as follows. To begin with, we build the model under a measure $\P^0$ where the default events are independent conditionally on $\FF$, that is, we have cyclical correlation but no contagion. The channels for the transmission of the contagion from the default system to its environment are already present, but inactive under $\P^0$; they are materialised in a sequence of $\FF$ stopping times $T(k)_{k\geq 0}$, where default events can occur with positive probability.
We then shape the wished contagion (direct and indirect) via a change of the probability measure.

Consider a set $\S\subset \N$ of all debtors that are systemically contagious, i.e., their default can produce a direct or an indirect contagion. The contagion mechanism that we propose is generating default intensities of the following form:
\begin{align}\label{lambdaex}
\lambda^\mathcal S_t(i)&=\lambda_t(i)+\sum_{j\in \mathcal S}\xi^{X(j)}_t(i,j)\I_{\{\tau(j)< t\}}\quad\text{for $i\in\N$,}
\end{align}
where $X(j)\in\{A,B\}$ is a random variable, $X(j)=A$ if the default $j$ is producing a direct contagion, while $X(j)=B$ will indicate that we have indirect contagion.  The quantities $\xi^A_t(i,j)$ and $\xi^B_t(i,j)$ are a priori different quantities, but more  importantly, when $X(j)=B$ some changes are occurring in the environment, i.e., some $\FF$ adapted processes are impacted at the default event $\tau(j)$. The fact that the intensity of a surviving debtor $i$ is augmented by $\xi^B(i,j)$ as shown in (\ref{lambdaex}) is in fact a consequence of the modification of the environment. No impact on the environment occurs in the alternative case where $X(j)=A$.

We see that in such a framework, the environment does not evolve autonomously from the default system, which is precisely our objective. 

\begin{rem}
In a Markovian model as the one in the previous section, the vector of intensity processes encodes the necessary and sufficient information about the distribution of the default process $\Y$ conditionally on $\FF$ and given $\Y_0$ (the $\FF$ conditional transition rates can be obtained from $\bm \lambda^\N$ and vice-versa). For this reason, these models are also called "intensity based".
This is not the case in our framework, where we need to rely on the so-called hazard processes; a given intensity process can arise from different hazard processes (as explained in \cite{coculescunikeghbali}).  For this reason we do not provide immediately more details on the processes in (\ref{lambdaex}), that we consider to be by-products of the model. 
\end{rem}

\subsection{The model under \texorpdfstring{$\P^0$}{P0}: conditional independence}

We begin with a filtered probability space $(\Omega,\G,\FF=(\F_t)_{t\geq 0},\P^0)$ and an $\FF$-adapted and increasing process $\mathbf \Gamma=(\Gamma (k), k\in\N)$, with $\mathbf \Gamma_0=(0,..,0)$ a.s. and $\lim_{t\to \infty} \Gamma_t(k)=+\infty$ a.s., for all $k\in\N$. 

We assume the probability space supports a sequence of random variables $e(k)$, $k\in\N$ which are i.i.d. with exponential distribution with parameter $1$, and which are independent of $\F_\infty$. We define:
\[
\tau(k)=\inf\left \{t\geq 0; \; \Gamma_t(k) \geq e(k)\right \},\;k\in\N. 
\]The process $\mathbf \Gamma$ is known as the hazard process in the credit risk literature (see \cite{elliotjeanbyor}, \cite{jenbrutk1}, \cite{coculescunikeghbali}); it synthesises all the necessary information about the default time; the compensator  process of the default time can be computed starting from the hazard process, as we shall see in a moment.

In this paper, we work under the following assumptions:

\bigskip

\textbf{Assumptions.} For all $k\in\N$, there exist $\FF$-predictable processes, $\alpha(k)$, $\gamma(k)$ and $\eta(k)$, that are  nonnnegative  and bounded and such that:
\begin{itemize}
\item[A1.] The hazard process of the default time $\tau(k)$ has the representation:
\[
\Gamma_t (k)=\int_0^t \alpha_s(k)ds +\eta_{T(k)}(k)\I_{\{T(k)\leq t\}},
\]where $T(k)$ is an $\FF$ stopping time;

\item[A2.]  The $\FF$ stopping time  $T(k)$ is totally inaccessible, with intensity process $(\gamma_t(k))_{t\geq 0}$. We define the $(\FF,\P^0)$ martingales:
\begin{equation}
\label{eq:n}
n_t(k):=\I_{\{T(k)\leq t\}}-\int_0^{t\wedge T(k)}\gamma _s(k)ds.
\end{equation}
We assume that the martingales $n(k)$ and $n(j)$ are orthogonal for any $k,j\in\N$ with $k\neq j$.

 
 \end{itemize}

These assumptions permit to have a simple model, where the default times admit an intensity. The more general framework appears in Coculescu \cite{coculescu}, where only the case of a single debtor is treated. 
We point out that this model is a generalization of the so-called Cox process. Indeed, by taking $\eta \equiv 0$, we obtain a Cox process. The impact of varying $\eta$ on the survival probability of a single entity is shown in Figure \ref{fig:conditional_independence_setting}, for a small selection of (deterministic) values.

Our construction under the measure $\P^0$ leads to $\FF$ conditional survival probabilities (also known as the Az\'ema's supermartingales) that have simple forms, and where the intensities are gives as follows:

\begin{lem}
The Az\'ema's supermartingale $Z(k)=(Z_t(k),t\geq 0)$, defined as:
\[
Z_t(k):=\P^0(\tau(k)>t|\F_t)=e^{-\Gamma_t(k)}
\] 
 has a multiplicative decomposition with respect to $(\FF,\P^0)$ (i.e., in the form of a $(\FF,\P^0)$-local martingale times a decreasing and $\FF$-predictable process) given by: 
\begin{equation}\label{Z(k)}
Z_t(k)=\mathcal E_t(\nu (k))e^{-\Lambda_t (k)},
\end{equation}where:
\begin{align}  \label{nuk}
\nu_t(k):&=-\int_0^tg_s(k)dn_s(k)\\
g_t(k): &=(1-e^{-\eta_t(k)})\I_{\{T(k)\geq t\}} \label{eq:g} \\
\Lambda_t(k):& =\int_0^t \lambda_s(k)ds
\end{align}
with $\lambda(k)$, the intensity of $\tau(k)$, being:

\begin{equation}
\label{eq:lambda}
\lambda_t(k):=\alpha_t(k)+g_t(k)\gamma_t(k).    
\end{equation}

\end{lem}
 \proof
 The expression (\ref{Z(k)}) is trivial.  The fact that the process $\lambda(k)$ is precisely the intensity of $\tau(k)$ follows from a result by Jeulin and Yor (1978) that we recall in the Appendix (Theorem \ref{calccomp}). 
 \finproof
 
From the expression \eqref{eq:lambda}, we see that whenever  $\eta(k)$ strictly positive, the intensity of a default time $\tau(k)$ is higher then $\alpha(k)$ before the arrival of the stopping time $T(k)$, then it drops to the level of $\alpha(k)$. With our construction, the default can actually occur with positive probability at the time $T(k)$,  as will be shown below. Hence $T(k)$ may be interpreted as  a time of economic shock, that may trigger the default of debtor $k$. However, under $\P^0$, we notice that the intensities of different debtors do not interact, that is, an intensity $\lambda(k)$ is not affected by  a default time, say  $\tau(i)$ $i\neq k$.  Hence we have non-contagious defaults under $\P^0$.

\begin{figure}[H]
    \includegraphics[width=0.8\textwidth]{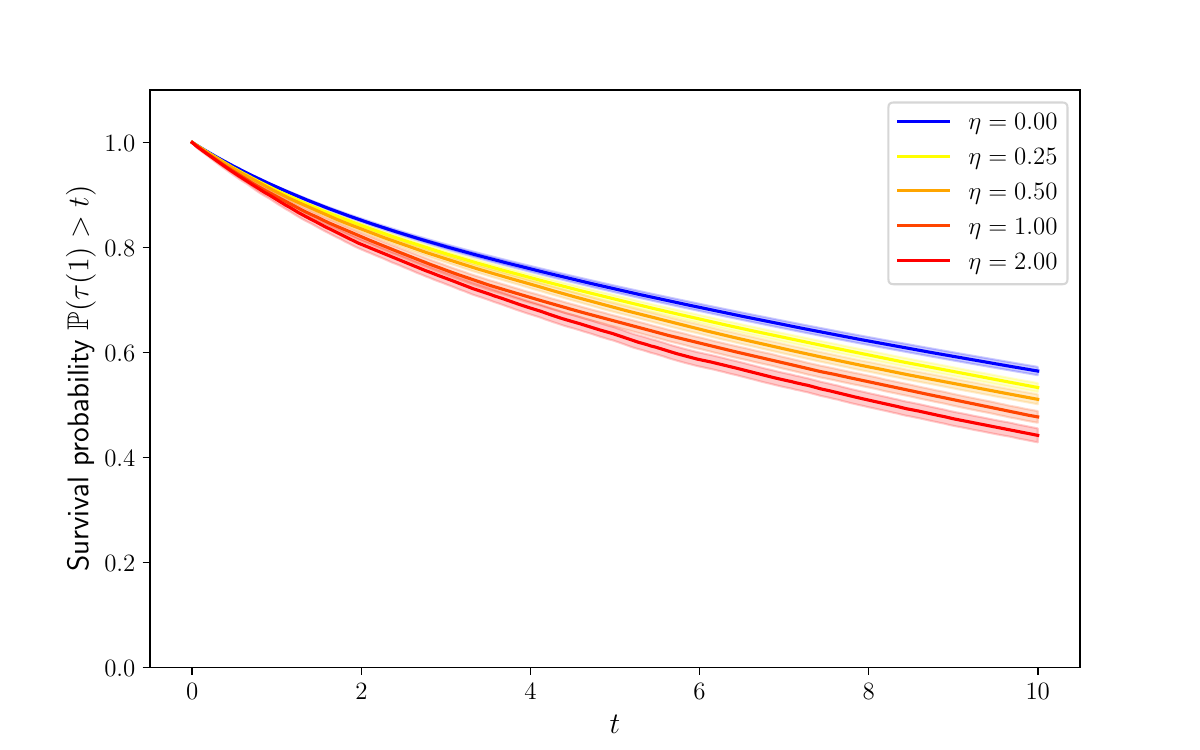}
    \caption{\footnotesize Monte Carlo estimate of survival probability $\P(\tau(1) > t)$ for a single entity with varying values of $\eta$ (10'000 MC samples, asymptotic 99.99\% confidence band). Factor process and other parameters chosen as in Section \ref{sec:numerical_implementation}.}
    \label{fig:conditional_independence_setting}    
\end{figure}
 
In order to  be able later to discriminate between direct resp. indirect contagions, we need to decompose a default time $\tau(k)$ in its ''specific" ($\tau^A(k)$) resp "systematic" ($\tau^B(k)$) counterparts, as follows:
\begin{prop}Let us consider a set $\mathcal C\subset \mathcal N$  and fix some $k\in\mathcal C$.  We define the $\GG^\mathcal C$ stopping times $\tau^A(k)$ and $\tau^B(k)$:
\begin{align*}
\tau^A(k):&=\tau(k)\I_{\{\tau(k)\neq T(k)\}}+\infty \I_{\{\tau(k)=  T(k)\}} \\
\tau^B(k):&=\tau(k)\I_{\{\tau(k) = T(k)\}}+\infty\I_{\{\tau(k)\neq T(k)\}},
\end{align*}
 so that:
\[
\tau(k)=\tau^A(k)\wedge\tau^B(k).
\]
Then,  the compensators of both $\tau^A$ and $\tau^B$ admit $(\GG^\mathcal C, \P^0)$ intensities on $\RR_+$. These are given as follows.
\begin{itemize}  
\item[(i)] For $\tau^A(k)$  the $(\GG^\mathcal C, \P^0)$-compensator is $(\int_0^{t\wedge\tau(k)}\alpha_s(k)ds)$, i.e., the intensity is $\I_{\{\tau(k)>t\}}\alpha_t(k)$.
\item[(ii)] For $\tau^B(k)$ the $(\GG^\mathcal C, \P^0)$-compensator is $(\int_0^{t\wedge\tau(k)}\beta_s(k)ds)$, i.e., the intensity is $\I_{\{\tau(k)>t\}}\beta_t(k)$, with:
\begin{align*}
\beta_t(k):&=g_t(k)\gamma_t(k).
\end{align*}
\end{itemize} \end{prop}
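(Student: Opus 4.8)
The plan is to decompose the single default $\tau(k)$ and transfer the known intensity of $\tau(k)$ computed in the previous proposition into separate intensities for $\tau^A(k)$ and $\tau^B(k)$, using the fact that these two stopping times have disjoint graphs and that $T(k)$ is totally inaccessible. First I would work in the filtration $\GG^k$ (the single-name enlargement), compute the intensities there, and then invoke immersion/stability of intensities when passing to the larger filtration $\GG^\C$; the orthogonality Assumptions 1 and 2 are exactly what guarantees that enlarging by the other coordinates $\tau(j)$, $j\in\C\setminus\{k\}$, does not alter the $\GG^k$-compensator of the $k$-th default processes. (Here I would lean on the classical results recalled in the Appendix — the Jeulin–Yor compensator formula, Theorem \ref{calccomp}, and the enlargement lemmas listed there.)

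For part (i): the process $N^A_t(k):=\I_{\{\tau^A(k)\le t\}}$ jumps precisely when $\tau(k)$ occurs and $\tau(k)\ne T(k)$, i.e. at the ``$\alpha$-part'' of the hazard process. Since $\Gamma_t(k)=\int_0^t\alpha_s(k)\,ds+\Delta\Gamma_{T(k)}(k)\I_{\{T(k)\le t\}}$, the continuous part of $\Gamma(k)$ is $\int_0^t\alpha_s(k)\,ds$ and the jump part sits at $T(k)$. Conditioning on $\F_\infty$ and on the i.i.d.\ exponential $e(k)$, the event $\{\tau(k)=T(k)\}$ is the event that the threshold $e(k)$ is crossed \emph{by the jump} $\Delta\Gamma_{T(k)}(k)$ rather than by the continuous part; hence the ``absolutely continuous crossing'' contributes intensity $\I_{\{\tau(k)>t\}}\alpha_t(k)$ to $\tau^A(k)$. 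Concretely I would check that $\I_{\{\tau^A(k)\le t\}}-\int_0^{t\wedge\tau(k)}\alpha_s(k)\,ds$ is a $(\GG^\C,\P^0)$-martingale by verifying the defining projection property against $\F_t$-bounded test variables, using $Z_t(k)=e^{-\Gamma_t(k)}$ and the explicit structure of $\Gamma(k)$; equivalently, one observes that $\tau^A(k)$ is the first jump of a Cox-type construction with continuous hazard $\int_0^\cdot\alpha_s(k)\,ds$, which has intensity $\alpha$ up to its occurrence time.

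For part (ii): since $\tau(k)=\tau^A(k)\wedge\tau^B(k)$ with disjoint jump times, the compensators add: the $(\GG^\C,\P^0)$-intensity of $\tau(k)$, namely $\I_{\{\tau(k)>t\}}\lambda_t(k)=\I_{\{\tau(k)>t\}}(\alpha_t(k)+g_t(k)\gamma_t(k))$ from the preceding proposition (restricted appropriately to $\GG^\C$; note the intensity $\lambda(k)$ there was computed as the $\F$-intensity, and it lifts to the $\GG$-intensity up to $\tau(k)$ by the standard enlargement formula), must equal the sum of the intensity of $\tau^A(k)$ and that of $\tau^B(k)$. Subtracting the part already identified in (i) leaves $\I_{\{\tau(k)>t\}}g_t(k)\gamma_t(k)=\I_{\{\tau(k)>t\}}\beta_t(k)$ as the intensity of $\tau^B(k)$, which is the claim. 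Alternatively, and perhaps cleaner, I would directly compensate $N^B_t(k):=\I_{\{\tau^B(k)\le t\}}=\I_{\{\tau(k)=T(k),\,T(k)\le t\}}$: on $\{T(k)\le t\}$ this equals $\I_{\{e(k)\le\Gamma_{T(k)}(k)\}}-\I_{\{e(k)\le\Gamma_{T(k)^-}(k)\}}$ on the survival set, whose $\F$-conditional probability given survival up to $T(k)^-$ is $p_{T(k)}(k)e^{\Gamma_{T(k)^-}(k)}=g_{T(k)}(k)$, and since $T(k)$ has $\F$-intensity $\gamma(k)$, thinning by the independent mark gives $\F$-intensity $g_t(k)\gamma_t(k)$ for $\tau^B(k)$ up to $\tau(k)$; this then lifts to $\GG^\C$ as before.

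The main obstacle is the filtration bookkeeping: one must be careful that the intensities computed in $\FF$ (as in the previous proposition) or in $\GG^k$ genuinely remain intensities in the larger filtration $\GG^\C$ up to the relevant default time. This is where Assumptions 1 and 2 (orthogonality of the $n(k)$'s and of $p(k)$ with $n(k)$) do the real work — they ensure the various enlargements are, in effect, ``independent'' enough that no new compensator terms appear — and where I would have to cite precisely the enlargement-of-filtration results collected in the Appendix rather than rederive them. The rest is a routine, if slightly delicate, verification that $\tau^A(k)$ and $\tau^B(k)$ have the stated absolutely continuous compensators by testing the martingale property against $\F$-measurable integrands and using $Z_t(k)=e^{-\Gamma_t(k)}$ together with the decomposition of $\Gamma(k)$ into its continuous and jump parts.
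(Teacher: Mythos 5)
Your proposal is correct and rests on the same two pillars as the paper's proof: the additivity $\I_{\{\tau(k)\leq t\}}=\I_{\{\tau^A(k)\leq t\}}+\I_{\{\tau^B(k)\leq t\}}$ of the counting processes (hence of their compensators), and the already-known $(\GG^\C,\P^0)$-compensator $\int_0^{t\wedge\tau(k)}(\alpha_s(k)+g_s(k)\gamma_s(k))\,ds$ of $\tau(k)$. The only structural difference is that you run the argument in mirror image: you compute the compensator of $\tau^A(k)$ directly (via the continuous part of the hazard / Cox construction) and obtain $\tau^B(k)$ by subtraction, whereas the paper computes $\Lambda^B(k)$ directly --- testing against a bounded $\GG^\C$-predictable $H$, reducing it to a $\GG^{\C-k}$-predictable $h$ on $[\![0,\tau(k)]\!]$ by the Jeulin--Yor key lemma, conditioning $\I_{\{\tau(k)=T(k)\}}$ to produce $p_{T(k)}(k)$, and then invoking Assumption~2 and the compensator of $T(k)$ --- and gets $\Lambda^A(k)=\Lambda(k)-\Lambda^B(k)$ for free. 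Your alternative ``thinning'' computation of the intensity of $\tau^B(k)$ is in substance exactly the paper's direct computation. Either ordering works; the one caveat in your write-up is that verifying the martingale property ``against $\F$-measurable integrands'' is not quite the defining property of a $\GG^\C$-compensator --- you must test against $\GG^\C$-predictable processes and use the key lemma plus conditional independence (so that the $\GG^{\C-k}$-optional projection of $\I_{\{\tau(k)\geq s\}}$ is still $Z_{s-}(k)$) to reduce to $\F$; you flag this bookkeeping as the main obstacle, and it is indeed where the orthogonality assumptions and the appendix results are used, so this is an imprecision of exposition rather than a gap.
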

\proof
Let us denote by $\Lambda^A(k)$ the $(\GG^\mathcal C, \P^0)$-compensator of $\tau^A(k)$ and by $\Lambda^B(k)$ the $(\GG^\mathcal C, \P^0)$-compensator of $\tau^B(k)$.  

We compute first $\Lambda^B(k)$, which is defined as the unique increasing and $\GG^\mathcal C$-predictable process, such that for all bounded and $\GG^\mathcal C$-predictable process $H$ and for all $t\geq 0$, the following holds :
\[
\E^0\left[\int_0^tH_sd\I_{\{\tau^B(k)\leq s\}}\right]=\E^0\left[\int_0^tH_sd\Lambda^B_s(k)\right]
\]
For (any) $H$ as above (i.e.,  $\GG^\mathcal C$-predictable), there exists a $\GG^{\C-k}$-predictable process, that we denote $h$, such that: $H_t\I_{\{\tau(k)\geq t\}}=h_t\I_{\{\tau(k)\geq  t\}}$, in particular, $H_{\tau(k)}=h_{\tau(k)}$ (see for instance \cite{yorjeulin}, Lemme 1). Because $\{\tau^B(k)\leq t\}\subset \{\tau(k)\leq t\}$, we also have $H_{\tau^B(k)}\I_{\{\tau^B(k)\leq t\}}=h_{\tau^B(k)}\I_{\{\tau^B(k)\leq t\}}$ and therefore:
\begin{align*}
\E^0\left[\int_0^tH_sd\I_{\{\tau^B(k)\leq s\}}\right]&=\E^0\left[\int_0^th_sd\I_{\{\tau^B(k)\leq s\}}\right]=\E^0\left[ h_{T(k)}\I_{\{\tau(k)=T(k)\}}\I_{\{T(k)\leq t\}}\right]\\
&=\E^0\left[h_{T(k)}\P^0\left (\tau(k)=T(k)|\G^{\C-k}_{T(k)}\right) \I_{\{T(k)\leq t\}}\right].
\end{align*}
Because the random variables $e(k)$, $k\in\N$ which are independent, we find that 
\begin{align}\label{pTk}
\P^0\left (\tau(k)=T(k)|\G^{\C-k}_{T(k)}\right)&= \P^0\left (\tau(k)=T(k)|\F_{T(k)}\right)\\\nonumber
&= -\Delta Z_{T(k)}(k)=-e^{-\Gamma_{T(k)}(k)}+e^{-\Gamma_{T(k)^-}(k)}\\\nonumber
&=e^{-\int_0^{t \wedge T(k)} \alpha_s(k) ds}(1-e^{-\eta_{T(k)}(k)})\\\nonumber
&=p_{T(k)}(k),
\end{align}
where we use the notation: 

\begin{equation}
\label{eq:p}
p_t(k):=e^{-\int_0^{t \wedge T(k)} \alpha_s(k) ds}(1-e^{-\eta_{t\wedge T(k)}(k)}).  
\end{equation}

The processes $p(k)$ will play a key role also later on, and they are stopped at  $T(k)$, so that we have the simple relation:
\[
\P^0\left (\tau(k)=T(k)|\F_t\right)=\E^0[ p_{\infty}(k)|\F_t].
\]

Hence we obtain:
\begin{align*}
\E^0\left[\int_0^tH_sd\I_{\{\tau^B(k)\leq s\}}\right]&=\E^0\left[ h_{T(k)}p_{T(k)}\I_{\{T(k)\leq t\}}\right]\\
&=\E^0\left[\int_0^t h_s p_{s} (k)d\I_{\{T(k)\leq s\}}\right]\\
&= \E^0\left[\int_0^t h_{s}\frac{\I_{\{\tau(k)\geq s\}}}{Z_{s-}(k)}p_{s}(k)d\I_{\{T(k)\leq s\}}\right]\\
&= \E^0\left[\int_0^t H_s\I_{\{\tau(k)\geq s\}}g_{s}(k) d\I_{\{T(k)\leq s\}}\right]\\
&= \E^0\left[\int_0^{t\wedge T(k)} H_s\I_{\{\tau(k)\geq s\}}g_{s}(k)\gamma_s(k)ds\right].
\end{align*}
We have used the property $\frac{\I_{\{\tau(k)\geq s\}}}{Z_{s-}(k)}=e^{\Lambda_t(k)}\I_{\{\tau(k)\geq s\}}$, and, in the last step, we have used the fact that the compensator of $T(k)$ is given by $(\int_0^{t\wedge T(k)}\gamma(s)ds)$. We conclude that:
\[
\Lambda^B(k)= \int_0^t \I_{\{\tau(k)\geq s\}\cap\{T(k)\geq s\}}g_{s}(k)\gamma_s(k)ds= \int_0^{t\wedge \tau(k)} \beta_s(k)ds.
\]
Because $\I_{\{\tau(k)\leq t\}}=\I_{\{\tau^A(k)\leq t\}}+\I_{\{\tau^B(k)\leq t\}}$, we have $\Lambda^A(k)=\Lambda(k)-\Lambda^B(k)$, hence the result.
\finproof

\bigskip

Before proceeding to the next step and introducing contagion, it is useful to have a look at the survival probabilities under conditional independence, as seen from time 0. The aim is  to emphasise that a class of probability measures is handy to use.
Under $\P^0$, the time $t$ survival probability in a group $\C\subset\N$ is given by:
\begin{align}\nonumber
\P^0(\tau (k)>t,\; \forall k\in\C)& =\E^0 \left[\prod_{k\in\C}Z_t(k)\right]=\E^0 \left[\exp \left (- \sum_{k\in \C} \int_0^t\lambda_s(k)ds \right )\prod_{k\in\C}\mathcal E_t(\nu(k))\right]\\\label{p0}
&= \bar \E_{\C} \left[ \exp \left (-\sum_{k\in \C} \int_0^t\lambda_s(k)ds  \right )\right]\\
&= \bar \E_{\C} \left[ \exp \left (-\sum_{k\in \C}\Lambda_t(k) \right )\right],
\end{align}
with $\bar\E_\C$ being the expectation operator under the measure $\bar\P_\C$ defined below.

\begin{defn}\label{PC} For $\mathcal C\subset\mathcal N$, we define a corresponding default adjusted probability measure, denoted by  $\bar \P_\C$ and defined by:
\[
\frac{d\bar \P_\C}{d\P^{0}}\Big |_{\G_t^\mathcal N}= \prod_{k\in\mathcal C}\mathcal E_t(\nu(k)), \quad t\geq 0,
\]
 with $\nu(k)$ defined in (\ref{nuk}). 
The probability is well defined for all $t$ and all $\C$, as we have already assumed the processes $ \alpha$ and $\gamma$ to be bounded.
\end{defn}

We summarise the $(\GG^\mathcal C,\P^0)$ martingales that will play a role in the remaining:
\begin{align}\label{ni}
m_t(k)&:=\I_{\{\tau^A(k)\leq t\}}-\int_0^{t\wedge\tau(k)}\alpha_s(k)ds,\quad t\geq 0\\
n_t(k)&=\I_{\{T(k)\leq t\}}-\int_0^{t\wedge\tau(k)}\gamma_s(k)ds,\quad t\geq 0.
\end{align}

\subsection{Contagion via a change of the probability measure}


In order to introduce contagious impacts on the default intensities, we first define the following objects:
\begin{itemize}
\item [-] the direct  impact matrix $\bm \phi^A= (\phi^A_t(i,j))_{(i,j)\in \mathcal N^2}$ and 
\item[-] the indirect impact matrix $\bm \phi^B=(\phi^{B}_t(i,j))_{(i,j)\in \mathcal N^2}$,
\end{itemize}
with components being nonnegative and bounded processes that are  $\FF$-predictable. Here $\phi^A(i,j)$ (resp. $\phi^B(i,j)$) is the impact directly (resp. indirectly) induced by the default of the $j^{th}$ debtor on the $i^{th}$ debtor default intensity, whenever the last is not yet defaulted.

The following proposition is an application of the Girsanov's theorem.

\bigskip

\begin{prop}\label{PropChangeMeasure}
Let $\mathcal S$ be the set of contagious debtors, $\mathcal S\subset \mathcal N$. We introduce for all $i\in \mathcal S$ the predictable processes:
\begin{align}
A^\mathcal S_t(i):=\frac{1}{\alpha_t(i)}\sum_{j\in \mathcal S}\phi^A_{t}(i,j)\I_{\{\tau^A(j)< t\}},\quad t\geq 0\\
B^\mathcal S_t(i):=\frac{1}{\gamma_t(i)}\sum_{j\in \mathcal S}\phi^B_{t}(i,j)\I_{\{\tau^B(j)< t\}},\quad t\geq 0
\end{align}
whenever   $\alpha_t(i)>0$ resp. $\gamma_t (i)>0$;  and consider $A^\S_t(i)=0$ resp. $B_t^\S(i)=0$ otherwise. 

We define the family of probability measures $(\P^\mathcal S), \mathcal S\subset \mathcal N$:
\[
\frac{d\P^\mathcal S}{d\P^{0}}\Big |_{\G_t^\mathcal N}=D^\mathcal S_t:=\prod_{i\in \mathcal N} \mathcal E_t\left(\int_0^\cdot A_s^\mathcal S(i)dm_s(i)\right) \prod_{i\in \mathcal N} \mathcal E_t\left(\int_0^\cdot B^\mathcal S_s(i)dn_s(i)\right).
\]
Then, the default time $\tau(i)$, $i\in \mathcal N$ has the $(\GG^\mathcal N,\P^\mathcal S)$ intensity  given by:
\begin{align*}
\lambda^\mathcal S_t(i)&
=\lambda_t(i)+\left\{ \alpha_t(i)A^\mathcal S_t(i)+\beta_t(i)B^\mathcal S_t(i)\right\}.
\end{align*}
\end{prop}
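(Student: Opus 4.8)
\medskip

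The plan is to read the statement as an application of Girsanov's theorem for point processes: once $D^\C$ is known to be a genuine $(\GG^\N,\P^0)$-martingale, the $\P^\C$-intensity of each default is obtained from its $\P^0$-compensator together with the relative jump of $D^\C$ at the corresponding default event. First I would collect the $(\GG^\N,\P^0)$-compensators that are already available. Applying the preceding proposition with $\C=\N$, the time $\tau^A(i)$ has $(\GG^\N,\P^0)$-intensity $\I_{\{\tau(i)>t\}}\alpha_t(i)$ and $\tau^B(i)$ has $(\GG^\N,\P^0)$-intensity $\I_{\{\tau(i)>t\}}\beta_t(i)$; and from (\ref{ni}), $T(i)$ has $(\GG^\N,\P^0)$-intensity $\I_{\{\tau(i)>t\}}\gamma_t(i)$. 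Since $\tau(i)=\tau^A(i)\wedge\tau^B(i)$ and at most one of $\tau^A(i),\tau^B(i)$ is finite, we have $\I_{\{\tau(i)\le t\}}=\I_{\{\tau^A(i)\le t\}}+\I_{\{\tau^B(i)\le t\}}$, so under $\P^0$ the default $\tau(i)$ has $\GG^\N$-intensity $\I_{\{\tau(i)>t\}}(\alpha_t(i)+\beta_t(i))=\I_{\{\tau(i)>t\}}\lambda_t(i)$, the last equality coming from the formula $\lambda_t(i)=\alpha_t(i)+g_t(i)\gamma_t(i)$ in (\ref{Z(k)}) together with $\beta_t(i)=g_t(i)\gamma_t(i)$.

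Second, I would verify that $D^\C$ is a strictly positive true $(\GG^\N,\P^0)$-martingale, so that $\P^\C$ is well defined on each $\G^\N_t$. Each factor $\mathcal E\big(\int_0^\cdot A^\C_s(i)\,\dd m_s(i)\big)$ is a nonnegative local martingale; it is strictly positive since its only jump, at $\tau^A(i)$, has relative size $A^\C_{\tau^A(i)}(i)\ge 0$, and its finite-variation part $-\int_0^{t\wedge\tau(i)}A^\C_s(i)\alpha_s(i)\,\dd s=-\int_0^{t\wedge\tau(i)}\sum_{j\in\C}\phi^A_s(i,j)\I_{\{\tau^A(j)<s\}}\,\dd s$ is bounded, so the factor is of class (D) and hence a true martingale; the same holds for $\mathcal E\big(\int_0^\cdot B^\C_s(i)\,\dd n_s(i)\big)$, using that $\gamma_t(i)B^\C_t(i)=\sum_{j\in\C}\phi^B_t(i,j)\I_{\{\tau^B(j)<t\}}$ is bounded (and for $i\notin\C$ the two factors equal $1$). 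The point at which Corollary \ref{avoid} and Assumption 1 enter is the claim that no two of the stopping times in $\{\tau^A(j):j\in\N\}\cup\{T(j):j\in\N\}$ coincide on a set of positive probability: for $i\neq j$, the times $\tau^A(j)$ and $T(j)$ are finite $\GG^{\N-i}$-stopping times on the relevant events, which $\tau^A(i)$ avoids by Corollary \ref{avoid}; $\tau^A(i)\neq T(i)$ holds by the very definition of $\tau^A$; and $T(i)\neq T(j)$ a.s.\ for $i\neq j$ because $n(i)$ and $n(j)$ are orthogonal. Hence all cross quadratic covariations among the $2|\N|$ drivers vanish, Yor's formula gives $D^\C=\mathcal E\big(\sum_{i\in\N}\int A^\C(i)\,\dd m(i)+\sum_{i\in\N}\int B^\C(i)\,\dd n(i)\big)$, and $D^\C$ is a genuine martingale.

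The core step is the change of compensator. For an integer-valued point process $V$ with $(\GG^\N,\P^0)$-intensity $\mu$, Girsanov's theorem gives that $V$ has $(\GG^\N,\P^\C)$-intensity $(1+U_t)\mu_t$, where $U$ is the predictable process describing the relative jump of $D^\C$ along the jumps of $V$. By the non-coincidence established above, $D^\C$ jumps at $\tau^A(i)$ only through $\mathcal E\big(\int A^\C(i)\,\dd m(i)\big)$, with relative jump $A^\C_{\tau^A(i)}(i)$, and at $\tau^B(i)$ (which equals $T(i)$ on $\{\tau^B(i)<\infty\}$) only through $\mathcal E\big(\int B^\C(i)\,\dd n(i)\big)$, with relative jump $B^\C_{\tau^B(i)}(i)$; both $A^\C(i)$ and $B^\C(i)$ are predictable. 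Therefore, under $\P^\C$, $\tau^A(i)$ has intensity $(1+A^\C_t(i))\I_{\{\tau(i)>t\}}\alpha_t(i)$ and $\tau^B(i)$ has intensity $(1+B^\C_t(i))\I_{\{\tau(i)>t\}}\beta_t(i)$. Adding the two (using once more $\I_{\{\tau(i)\le t\}}=\I_{\{\tau^A(i)\le t\}}+\I_{\{\tau^B(i)\le t\}}$) shows that $\tau(i)$ has $(\GG^\N,\P^\C)$-intensity $\I_{\{\tau(i)>t\}}\big[(\alpha_t(i)+\beta_t(i))+\alpha_t(i)A^\C_t(i)+\beta_t(i)B^\C_t(i)\big]=\I_{\{\tau(i)>t\}}\big[\lambda_t(i)+\alpha_t(i)A^\C_t(i)+\beta_t(i)B^\C_t(i)\big]$, which is the assertion; expanding $\alpha_t(i)A^\C_t(i)=\sum_{j\in\C}\phi^A_t(i,j)\I_{\{\tau^A(j)<t\}}$ and $\beta_t(i)B^\C_t(i)=g_t(i)\sum_{j\in\C}\phi^B_t(i,j)\I_{\{\tau^B(j)<t\}}$ and combining these over the disjoint events $\{\tau^A(j)<t\}$, $\{\tau^B(j)<t\}$ also recovers the form (\ref{lambdaex}).

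The step I expect to be the main obstacle is the handling of $\tau^B(i)$. Unlike $\tau^A(i)$, this time is not itself one of the counting processes driving $D^\C$ (the driver is $T(i)$, with $\tau^B(i)=T(i)$ only on $\{\tau(i)=T(i)\}$), and its $\P^0$-compensator carries the extra factor $g_t(i)$, so that $\beta_t(i)=g_t(i)\gamma_t(i)\neq\gamma_t(i)$. One has to be careful that the Girsanov change-of-compensator formula applies verbatim to $\tau^B(i)$ with \emph{its own} $\P^0$-compensator $\int_0^{\cdot\wedge\tau(i)}\beta_s(i)\,\dd s$ and the relative jump of $D^\C$ at $\tau^B(i)$; this is legitimate because the Girsanov theorem for integer-valued random measures involves only the $\P^0$-dual predictable projection of the jump measure of $\tau^B(i)$ and the density $D^\C$, both unambiguous here, but it is precisely the point that requires the most care in the write-up, together with checking that all the relevant processes remain semimartingales with the stated compensators in the enlarged filtration $\GG^\N$.
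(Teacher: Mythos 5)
Your argument is correct and follows exactly the route the paper intends: the paper prints no proof of this proposition, declaring it ``a simple application of the Girsanov's theorem,'' and your write-up supplies precisely that application --- splitting $\I_{\{\tau(i)\le t\}}=\I_{\{\tau^A(i)\le t\}}+\I_{\{\tau^B(i)\le t\}}$, using the $\P^0$-compensators from the preceding proposition, and reading off the relative jumps of $D^\C$ at $\tau^A(i)$ and at $T(i)=\tau^B(i)$ after checking via Corollary \ref{avoid} and Assumption 1 that the jump times do not coincide. Your identification of the only delicate point (that $\tau^B(i)$ is not itself a driver of $D^\C$, so one must apply the change-of-compensator formula to $\tau^B(i)$ with its own compensator $\int_0^{\cdot\wedge\tau(i)}\beta_s(i)\,\dd s$ while the density jumps through the factor driven by $n(i)$) is exactly right.
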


\begin{rem} 
\begin{itemize}
\item[1.] We notice that the default intensities  under $\P^\S$ are of the form announced in (\ref{lambdaex}):
\begin{align*}
\lambda^\mathcal S_t(i)&=\lambda_t(i)+\sum_{j\in \mathcal C}\xi^{X(j)}_t(i,j)\I_{\{\tau(j)< t\}}\quad\text{for $i\in\N$,}
\end{align*} with $X(j)=A\I_{\{\tau(j)=\tau^A(j)\}}+B\I_{\{\tau(j)=\tau^B(j)\}}$, which is a $\G^\N_{\tau(j)}$ measurable random variable; and $\xi^A(i,j)=\phi^A(i,j)$ and $\xi^B(i,j)=g(i)\phi^B(i,j)$. 
\item[2.] Under $\P^\S$, some defaults may modify the evolution of the environment: the $(\GG^\N, \P^\S)$-intensity of a stopping time $T(i), i\in\N$ is
$\gamma(i)[1+B^\mathcal S_t(i)]$, i.e., has upward jumps at the default times $j\in\S$ that satisfy $\tau(j)=\tau^B(j)$. Or,  $T(i)_{i\in\N}$  are $\FF$-stopping times hence they are elements of the environment of the default system. 
\end{itemize}
\end{rem}

\section{Main result}\label{sec4}
We work under $(\Omega,\G,\GG^\N,\P^\N)$. We recall that under $\P^\N$ the class of contagious debtors is $\N$. This is without loss of generality: one can set the $k^{th}$ column of the two impact matrices $\bm \phi^A$ and $\bm \phi^B$ to be null and render the $k^{th}$ debtor non contagious. 

We want to characterise the time $t$ survival probabilities:
\[
\P^\N(\tau(k)>t, \forall k\in \C)\text{ for any }\C\in\N.
\]
We recall that under conditional independence,  the survival probabilities satisfy:
\begin{align*}
\P^0(\tau (k)>t,\; \forall k\in\C)& =\bar \E_\C \left[ \ell_t \right],
\end{align*}
where $\ell$ satisfies: $d\ell_t=-\ell_t( \sum_{k\in \C}\lambda_t(k))  dt$ (see the expression in (\ref{p0})). 
Our aim is to propose formulas under $\P^\N$ that have a similar form, that is:
\begin{align}\label{el}
\P^\N(\tau (k)>t,\; \forall k\in\C)& =\bar \E_\C \left[ \ell_t \right],
\end{align}
where  $\ell$ is an $\FF$ adapted process. But now, $\ell$ belongs to a larger family of processes that arises as solution of a system of linear stochastic differential equations that can be solved recursively. This is the object of Theorem \ref{MainThm} below, which is the main result of this paper. 

It would be tempting to denote the process $\ell$ appearing in (\ref{el}) $\ell^\C$, to reflect that it corresponds to the survival probabilities in the group $\C$. However, we refrain from doing so; instead our notation will be: $\ell=\ell^{\N-\C}$. We make the choice that subsets of $\N$ appearing as superscripts indicate the contagious entities. Indeed, we observe that:
\begin{align}\label{ell2}
\P^\N(\tau (k)>t,\; \forall k\in\C)& =\P^{\N-\C}(\tau (k)>t,\; \forall k\in\C),
\end{align}
i.e., we can consider that $\N-\C$ is in fact the set of contagious debtors when computing the above probability. This is because under $\P^\N$, the contagion produced by a particular debtor occurs only after its default and is inexistent before. In mathematical terms, the following Radon-Nikod\'ym density processes satisfy
\begin{equation}\label{DTk}
D^\N_t\I_{\{\tau (k)>t,\; \forall k\in\C\}}=D^{\N-\C}_t\I_{\{\tau (k)>t,\; \forall k\in\C\}}.
\end{equation}as resulting from the expressions in Proposition \ref{PropChangeMeasure}.
\bigskip

\begin{notation}
\begin{itemize}
\item[-]  Given a vector  $(V(i),i\in\N)$ and a matrix $M=(M(i,j),i,j\in\N)$ and with $\C,\D\subset \N$ we write
\[
 V(\C):=\sum_{i\in\C}V(i)\quad\text{and}\quad   M(\C,\D):=\sum_{i\in\C}\sum_{j\in\D} M(i,j).
\] For instance $\lambda_t(\C)=\sum_{i\in\C}\lambda_t(i)$ and $\phi^A_t(\C,\D)=\sum_{i\in\C}\sum_{j\in\D} \phi^A_t(i,j) $, etc.

\item[-] Whenever single elements $\{i\}$ of $\N$ appear as superscripts, we shall omit the brackets. That is: we write $\GG^i$ instead of $\GG^{\{i\}}$, $\GG^{\C\cup i}$ instead of $\GG^{\C\cup \{i\}}$ etc. 
\end{itemize}
\end{notation}
\bigskip

\begin{thm}\label{MainThm} Suppose that $\C,\D\in\N$  with $\C\cap\D=\emptyset$ and denote $\S:=\N-\C$. Then:
\begin{equation}\label{ellSD}
\P^\N(\tau(k)>t, \forall k\in \C\;;\; \tau^B(j)\leq t,\forall j\in\D)=\bar\E_\C\left[\ell^{\S|\D}_t\prod_{j\in\D}p_t(j)\I_{\{T(j)\leq t\}}\right],
\end{equation} where $\ell^{\S|\D}$ satisfies:
\begin{align}\label{L}
\nonumber
d\ell^{\S|\D}_t=
&\left\{-\ell^{\S|\D}_{t^-}\lambda_t(\C) - \sum_{j\in \S-\D}\left(\ell^{\S|\D}_{t^-}-\ell^{\S-j|\D}_{t^-}-\ell^{\S|\D\cup j}_{t^-} p_t(j)\I_{\{T(j)<t\}}\right)\psi_t^A(\C\cup\D,j)\right\} dt \\
&+\sum_{k\in \N} \left\{\sum_{j\in \S}\I_{\{T(j)<t\}}\left(\I_{\{j\in \D\}}  \ell^{\S|\D}_{t^-}+\I_{\{j\in\S-\D\}}  \ell^{\S|\D\cup j}_{t^-} p_t(j)\right)\frac{\phi^B_{t}(k,j)}{\gamma_t(k)}\right\}d n_t(k)\\\nonumber
 \ell^{\S|\D}_0=&1.
\end{align}
Above, we have denoted:
\[
\psi_t^A(k,j):=
\begin{cases} 
\phi_t^A(k,j) & k \in \C \\
\phi_t^A(k,j)\I_{\{T(k)>t\}} & k \in \D.
\end{cases}
\]
and
\[
p_t(k):=e^{-\int_0^{t\wedge T(k)}\alpha_s(k)ds}(1-e^{-\eta_{t\wedge T(k)}(k)}).
\]
In particular, denoting $ \ell^\S:= \ell^{\S|\emptyset}$, the survival probability in group $\C$ satisfies:
\begin{equation}\label{ellS}
\P^\N(\tau (k)>t,\; \forall k\in\C)=\bar \E_\C \left[  \ell ^\S_t \right],
\end{equation}with:
\begin{align}\label{LS}
\nonumber
d \ell^{\S}_t=
&\left\{- \ell^{\S}_{t^-}\lambda_t(\C) - \sum_{j\in \S}\left( \ell^{\S}_{t^-}- \ell^{\S-j}_{t^-} - \ell^{\S| j}_{t^-} p_t(j)\I_{\{T(j)<t\}}\right)\phi_t^A(\C,j)\right\} dt \\
&+\sum_{k\in \N} \left\{\sum_{j\in \S}\I_{\{T(j)<t\}} \ell^{\S| j}_{t^-} p_t(j)\frac{\phi^B_{t}(k,j)}{\gamma_t(k)}\right\}d n_t(k)\\\nonumber
 \ell^{\S}_0=&1.
\end{align}
\end{thm}

We postpone to Section \ref{sec5} the proof of this result. For now, we want to explore the SDEs above.

We begin by emphasising some particular cases:
\begin{enumerate}
\item \textbf{Conditional independence, no contagion.} If  $\bm\phi^A\equiv 0$ and $\bm \phi^B\equiv 0$ (i.e., there is no contagion), then $\P^\N=\P^0$ and:
\begin{align*}
d \ell^\S_t&=-  \ell^\S_{t}\lambda_t(\C)dt
\end{align*}which corresponds indeed to the expression in (\ref{p0}). In addition, by taking $\eta(i)\equiv 0$, we obtain that the default process $Y(i)$ of debtor $i$ is a Cox process. We refer to the case $\bm\phi^A= \bm \phi^B\equiv 0$ and $\eta(i)\equiv 0$ for all $i$ as the \textbf{Cox process setting}.

\bigskip 

\item \textbf{Conditionally Markovian setting with interacting intensities.}  If  $\eta(i)\equiv 0$ for all $i\in\N$, then also the following hold for all $i\in\N $:  (i) the hazard process $\Gamma(i)$ is continuous therefore  $\tau(i)$ avoids the $\FF$ stopping times, and (ii) $g(i)=0$ therefore the stopping time $\tau^B(i)=+\infty$, $\P^\N$- a.s. (its intensity is null). Consequently, In this case, there is no impact of the default system on its environment under the measure $\P^\N$. We recover in this way a Markovian framework similar to the one introduced in Section \ref{sec2}: the default indicator process $\Y$ is Markov, conditionally to $\F_\infty$, with transition rates at time $t$ from state $\mathbf x\in\{0,1\}^n$ to another state $\mathbf y\in\{0,1\}^n$ is:
\[
q_t( \mathbf x,\mathbf y)=
\begin{cases}
\lambda_t(k)+ \sum_{j\in\N} \phi^A_t( k, j) x(j) \text{ if }\exists \;  k\in\N:  \mathbf y=\mathbf x^k \text{ and }  x(k)=0 \\
0 \text{ else},
\end{cases}
\]where, as in the previous section,  $\mathbf x^k$ is obtained from $ \mathbf x=(x(1),\cdots, x(n))\in \{0,1\}^n$ by flipping the $k^{th}$ coordinate, $x(k)$.
 
We observe that $\bar \P_\C=\P^0$ and (\ref{LS}) becomes:
\begin{align}\label{LSMarkov}
d \ell^\S_t=&  - \ell^\S_{t}\left\{\lambda_t(\C)+\phi^A_{t}(\C,\S)\right\}dt+\sum_{j\in\S}    \ell^{\S-j}_{t}\phi^A_t(\C,j) dt.
\end{align}
  The formula (\ref{LSMarkov}) can also be obtained directly from the Kolmogorov forward equations associated with the default process $Y$. The interested reader can find the details in Appendix \ref{AppendixC}. 
  
\bigskip 
  
\item \textbf{Non-Markovian setting.} This case is obtained whenever indirect contagious entities exist, that impact the environment by their default. Any entity $j$ satisfying  $\eta(j) \neq 0$, $i \in \N$, is able to impact the environment, provided the impact matrices $\bm \phi^B$  has non zero elements in column $j$. We distinguish two cases:

\begin{itemize}
    \item \textit{Indirect contagion only.} If  $\bm \phi^A\equiv 0$ and $\bm \phi^B\neq 0$ (i.e., there is only indirect contagion), then:
\begin{align*}
d \ell^\S_t=&  - \ell^\S_{t^-}\lambda_t(\C)dt+\sum_{j\in \S} \I_{\{T(j)<t\}}  \ell^{S|j}_{t^-} p_t(j)  \sum_{k\in \N}  \left(   \frac{\phi^B_{t}(k,j)}{\gamma_t(k)}\right) dn_t(k).
\end{align*}
\item \textit{Indirect and direct contagion.} This is the general case, where $\bm \phi^A, \bm \phi^B\neq 0$, then there is both direct and indirect contagion and the evolution of $\ell^S$ is given by \eqref{LS}.
\end{itemize}
\end{enumerate}

We now indicate how one can  concretely obtain the survival probabilities from the SDEs in Theorem \ref{MainThm}. A target set $\C^*\subset \N$ is fixed and let $\S^*=\N-\C^*$. We want to obtain the process $ \ell^{\S^*}$. We proceed by iteration,  starting with $\S=\emptyset$ we recursively add elements so to create all possible subsets of $\S^*$. The set $\S^*=\N-\C^*$ is obtained at the last iteration.  More precisely, this works as follows:
\begin{enumerate}
\item[0.] $\S=\emptyset$. We compute $ \ell^\emptyset_{t}$.
\item[1.] For all $j\in \S^*$, we take $\S=\{j\}$ and obtain the quantities $ \ell^{j|j}$ and $ \ell^{j}$.
\item[2.]  For all $\{j_1,j_2\}\subset \S^*$, we take $\S=\{j_1,j_2\}$ and obtain the quantities $ \ell^{S|S}, \ell^{S|j_1}, \ell^{S|j_2}, \ell^{\S}$ (in that order).

\item[...]

\item[]In general, at the $k^{th}$ iteration:
\item[k.] For any $\S\subset \S^*$ with $\mathbf{card}(\S)=k$ and for any $\D\subset \S$, we compute $ \ell^{\S|\D}$, in the decreasing order of the cardinality of $\D$.  There are $\binom{n}{k}$ subsets of $\S^*$ that contain $k$ elements, each of them having $2^k$ different subsets. Hence, at the $k^{th}$ iteration, we have to solve $\binom{n}{k} 2^k$ equations of the type  (\ref{L}). For solving these equations, the quantities obtained at step $k-1$ are needed.
\end{enumerate}

For instance, if $\mathbf{card}(\S^*)=s$, the procedure necessitates iterations $0,1,\cdots, s$ of the form described above, that is, we need to solve for:
\[
\sum_{k=0}^s{\binom{s}{k}} 2^k=3^s
\]equations of the type (\ref{L}). We see that the complexity of the procedure is high, when applied to default systems of big size, which is also a typical feature of the Markovian framework, where one would need $2^s$ iterations. 

In practical applications however, we advocate that the complexity can be reduced as follows. In most financial systems, even though there are a multitude of debtors, the number of those defaults that are expected to have a notable impact outside the system itself is presumably limited to a few entities (the systemic firms). The other firms can be considered as non systemic: we can assume that $\eta(k)=0$, that is,  $\tau^B(k)=\infty$ $ a.s.$ when debtor $k$ is non systemic. The interpretation is that if debtor $k$ is not systemic, its default has at most a direct contagious impact on its counterparties (i.e., the other debtors in the default system), but not a larger economic impact (i.e. on the environment of the default system).

For example, suppose that $\S^*=\S^*_A\cup\S^*_B\subset \N$ and for all $k\in\S^*_A$ we have $\tau^B(k)=\infty$ $ a.s.$, that is $\tau(k)=\tau^A(k)$ $a.s.$. In other words, $\S^*_A$ is a group of non systemic debtors and $\S^*_B$ contains possibly systemic debtors. We consider $\S^*_A\cap \S^*_B=\emptyset$ and $\mathbf{card}(\S^*_B)=b$, so that $\mathbf{card}(\S^*_A)=s-b$. 
In order to obtain the process  $ \ell^{\S^*}$, we need this time to solve for:
\[
\sum_{k=0}^{s-b}{\binom{s-b}{k}}\times\left(\sum_{k=0}^b{\binom{b}{k}} 2^k\right)=2^{s-b}3^b
\]equations of the type (\ref{L}), hence a reduced complexity. 
 

\section{Numerical implementation}
\label{sec:numerical_implementation}

For the sake of numerical implementations, we specialize our model to the following setup.

We assume that the filtration $\FF$ is generated by a one-dimensional factor process $\Psi$, which we take to be a basic affine jump diffusion (as introduced in \cite{DufGar01}) under $\P^0$, with dynamics given by:

\begin{equation}
\label{eq:psi}
d \Psi_t = \alpha (b - \Psi_t) dt + \sigma \sqrt{\Psi_t} dW_t + dJ_t,
\end{equation}
where $\alpha, b, \sigma > 0$ and are such that $\alpha b \ge \frac{1}{2}\sigma^2$, $W$ is a standard Brownian motion and $J$ is a pure jump process with constant jump intensity $\tilde{\gamma} > 0$ and exponentially distributed jump sizes with expected value $\mu > 0$.

The random times $(T(k))_{k \in \N}$ are a subset of the jump times of the factor process $\Psi$, obtained via the recursive thinning procedure in Algorithm \ref{algo:thinning}, which guarantees that all $T(k)$ have identical and constant pre-jump intensities given by:

\begin{equation}
\label{eq:gamma}
\gamma_t(k) = \tilde{\gamma} \tilde{\pi} \I_{\{T(k) > t\}},    
\end{equation}
for a given thinning parameter $\tilde{\pi} \in (0, 1/n]$.

\bigskip
\begin{algorithm}[H]
\RestyleAlgo{ruled}
\KwIn{Ordered jump times $(\sigma(j), j \in \NN)$ of factor $\Psi$; thinning parameter $\tilde{\pi}$.}
\KwOut{Jump times $(T(k), k \in \C)$.}
$T(k) := \infty, \forall k \in \N$\;
$\sigma(0) := 0$\;
\For{$j = 1$ \KwTo $\infty$}{
    $U := \{k \in \N \text{ such that } T(k) > \sigma(j-1)\}$\;
    \eIf{$U \neq \emptyset$}{
        with probability $\tilde{\pi} \cdot |U|$ sample $k^*$ uniformly at random from $U$\;
        $T(k^*) := \sigma(j)$\;
    }{
        break\;
    }
}
\caption{Thinning procedure}
\label{algo:thinning}
\end{algorithm}
\bigskip

Finally, we choose constant impact matrices: 
\begin{equation*}
\Phi^A_t(i,j) = \phi^A > 0, \quad \Phi^B_t(i,j) = \phi^B > 0,
\end{equation*}
a linear factor dependence for the specific default intensity 
\begin{equation}
\label{eq:alpha}
\alpha_t(k) = \lambda_1 \cdot \Psi_t + \lambda_0, \forall k \in \N, \text{ with } \lambda_0, \lambda_1 > 0,
\end{equation}
and a homogeneous and deterministic jump for all hazard processes
\begin{equation*}
\eta_t(k) := \eta > 0.
\end{equation*}

\begin{rem}
We point out that our model can accommodate much more complex setups than the one just presented. For instance, the times $(T(k), k \in \N)$ don't necessarily have to correspond to jump times of the factor $\Psi$. More specifically, if we take the filtration $\FF$ to be $\F_t = \sigma(\Psi_s, (\I_{\{T(k) < s\}}, k \in \N), s \le t)$, then the random times $T(k)$ can model arbitrary systemic events (not just factor jumps) which can affect some (or all) debtors via their corresponding $\FF$-predictable processes (e.g. $\alpha$ and $\eta$, or the impact matrices $\bm \phi^A$ and $\bm \phi^B$).
\end{rem}

\subsection{Term structure of default probabilities}
\label{subsec:PDs}

Given a set of debtors $\N$ and a fixed time horizon $T$, we can use Theorem \ref{MainThm} to compute the term structure of the joint survival probability $\{\P(\tau(k) > t, \forall k \in \C^*), t \in [0, T]\}$ of a target set $\C^* \subseteq \N$.

The expectation of $\ell^\S$ under the probability measure $\bar{\P}$ is obtained via Monte Carlo estimation by sampling $\ell^\S$ first under $\P^0$ and then multiplying it path-wise by the change of measure in Definition \ref{PC}. Samples of $\ell^\S$ under $\P^0$ can be obtained from \eqref{LS} by solving recursively the SDEs in \eqref{L}, as explained in the remarks following Theorem \ref{MainThm}. This sampling procedure is explained in more detail for our setup in Algorithm \ref{algo:LS}.

\bigskip
\begin{algorithm}[H]
\RestyleAlgo{ruled}
\KwIn{target set $\C^* \subseteq \N$; parameters of $\Psi$ ($\alpha, b, \sigma, \tilde{\gamma}, \mu$); thinning parameter $\tilde{\pi}$; contagion parameters ($\phi^A, \phi^B, \lambda_0, \lambda_1$, $\eta$); time grid $(t_i)_{i=0:M}$.}
\KwOut{sample path $(\ell^\S_{t_i})_{i=0:M}$ under $\bar{\P}$}

    sample $(\Psi_{t_i})_{i=0:M}$ from \eqref{eq:psi} under $\P^0$ using the Euler-Maruyana scheme\;
    extract $(T(k), k \in \N)$ from the factor process jumps using Algorithm \ref{algo:thinning}\;
    \For{$k \in \N$}{
        compute $\alpha(k)$ from \eqref{eq:alpha}, $\gamma(k)$ from \eqref{eq:gamma} and $g(k)$ from \eqref{eq:g}\;
        compute $p(k)$ from \eqref{eq:p}, $\lambda(k)$ from \eqref{eq:lambda} and $n(k)$ from \eqref{eq:n}\;
    }
    set $S^* := \N - \C^*$\;
    \For{$S \textbf{ in } {\sc Subsets}(S^*)$ in increasing order of cardinality}{
        \For{$D \textbf{ in } {\sc Subsets}(S)$ in decreasing order of cardinality}{

            Compute $\ell^{\S|\D}$ as in \eqref{L}.
            
        }
    }
    \Return{$(\ell^{\S|\emptyset}_{t_i} \cdot \prod_{k \in \C^*} \mathcal{E}_{t_i}(\nu(k))_{i=0:M}$}
\caption{Sampling $\ell^\S$ under $\bar{\P}$}
\label{algo:LS}
\end{algorithm}
\bigskip





We investigate numerically the behavior of the model for a homogeneous group of $n=5$ debtors as the contagion parameters $\phi^A$, $\phi^B$, $\eta$ are varied. In particular we are interested in comparing the survival probability of the first debtor, $\P(\tau(1) > t)$, over a time horizon of $T = 10$ years under three contagion settings: the Cox process setting ($\eta=0$, $\phi^A=0$, $\phi^B=0$), the interacting intensities setting ($\eta=0$, $\phi^A>0$, $\phi^B=0$) and the non-Markovian setting ($\eta>0$, $\phi^A>0$, $\phi^B>0$). 

\begin{figure}[H]
    \includegraphics[width=0.8\textwidth]{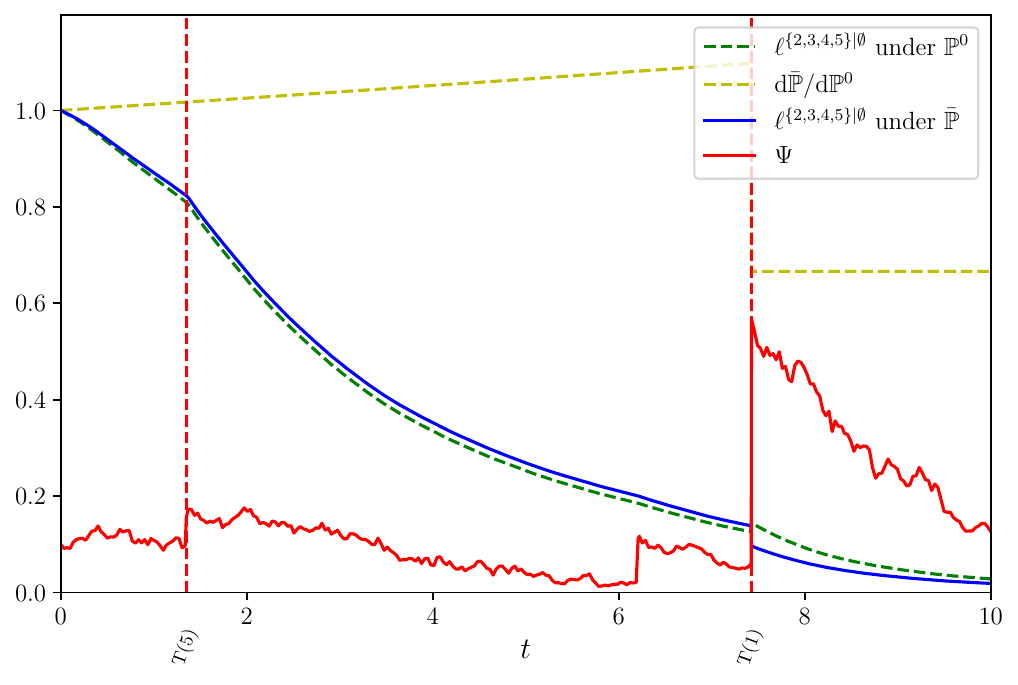}
    \caption{Simulated sample paths of factor process $\Psi$, process $\ell^{S|\emptyset}$ and the change of measure density process from a single run of Algorithm \ref{algo:LS}.}
    \label{fig:sample_path}    
\end{figure}

The factor process follows the dynamics in \eqref{eq:psi} with parameters $\alpha = 0.6, \mu = 0.1, b = 0.02, \sigma = 0.14, \tilde{\gamma} = 0.2$ and initial value $\Psi_0 = 0.1$. We set $\tilde{\pi} = 0.8 \cdot n$ and $\lambda_1 = 1, \lambda_0 = 0$ for simplicity. The process is simulated using an Euler-Maruyama scheme with $300$ steps. The survival probability is estimated using Monte Carlo estimation on $10'000$ samples obtained via Algorithm \ref{algo:LS}. A sample path is shown in Figure \ref{fig:sample_path}, together with the realization of the factor process and the change of measure. The model has been implemented in Python and is available online at \url{https://github.com/gvisen/overspilling-contagion}.

\begin{figure}[H]
    \includegraphics[width=0.8\textwidth]{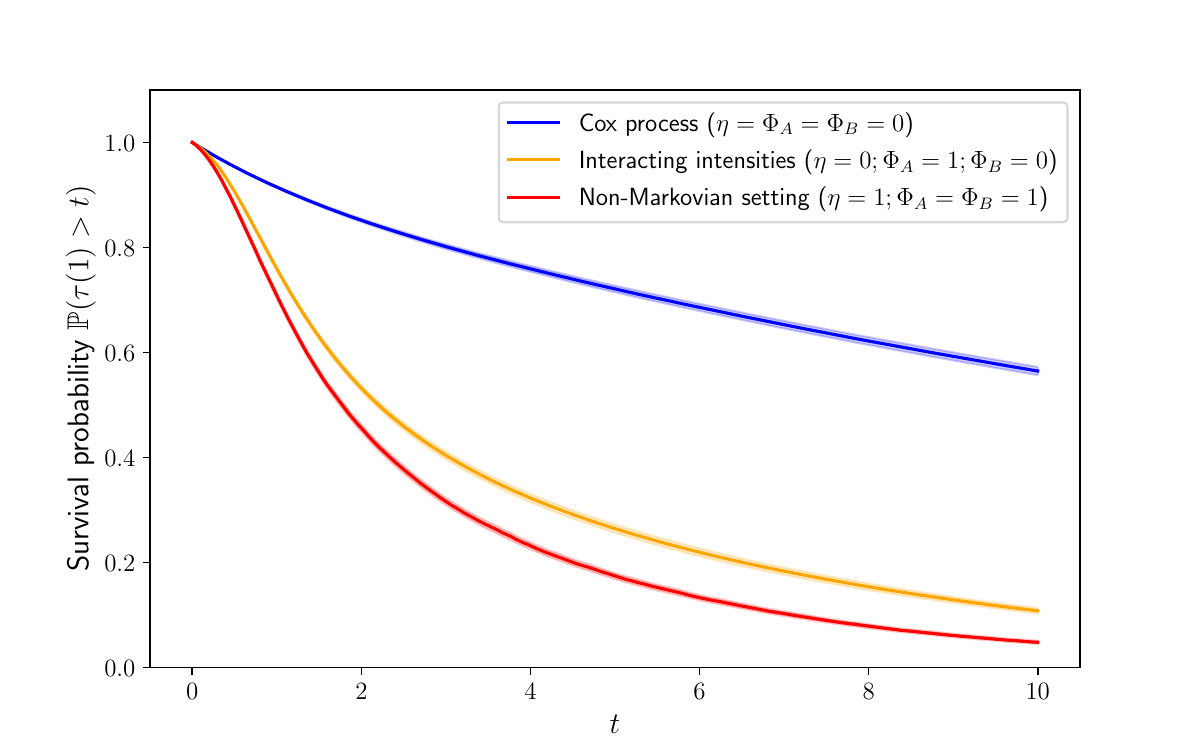}
    \caption{\footnotesize Monte Carlo estimate of survival probability $\P(\tau(1) > t)$ under three representative contagion settings ($10'000$ MC samples, asymptotic $99.99\%$ confidence band).}
    \label{fig:comparison_of_cases}    
\end{figure}

Figure \ref{fig:comparison_of_cases} shows the survival probability of a single debtor in the group under different contagion settings. In particular, we notice that in the non-Markovian setting the addition of an indirect contagion mechanism on top of the direct one determines a downward shift in the default curve in the medium and long terms.

\begin{figure}[H]
    \includegraphics[width=0.8\textwidth]{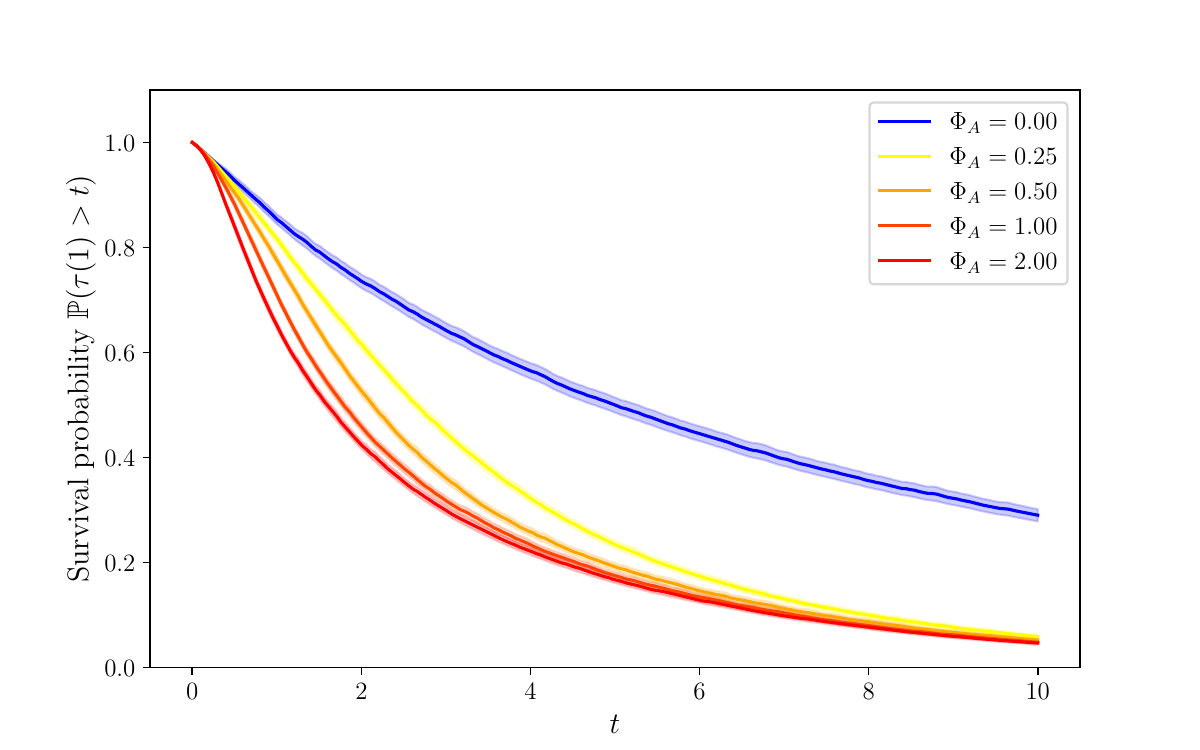}
    \caption{\footnotesize Monte Carlo estimate of survival probability $\P(\tau(1) > t)$ in the non-Markovian setting ($\eta=1, \phi^B=1$) for several values of $\phi^A$ ($10'000$ MC samples, asymptotic $99.99\%$ confidence band).}
    \label{fig:varying_phiA}
\end{figure}

Figure \ref{fig:varying_phiA} shows the impact of the direct contagion parameter $\phi^A$, whose effect approaches a long-term saturation point for values close to $2$. A similar effect is noticeable for the parameter $\phi^B$, which reaches saturation for values close to 2, as shown in Figure \ref{fig:varying_phiB} (for two different choices of $\phi^A$).

\begin{figure}[H]
\centering
\begin{subfigure}{0.8\textwidth}
  \centering
  \includegraphics[width=\textwidth]{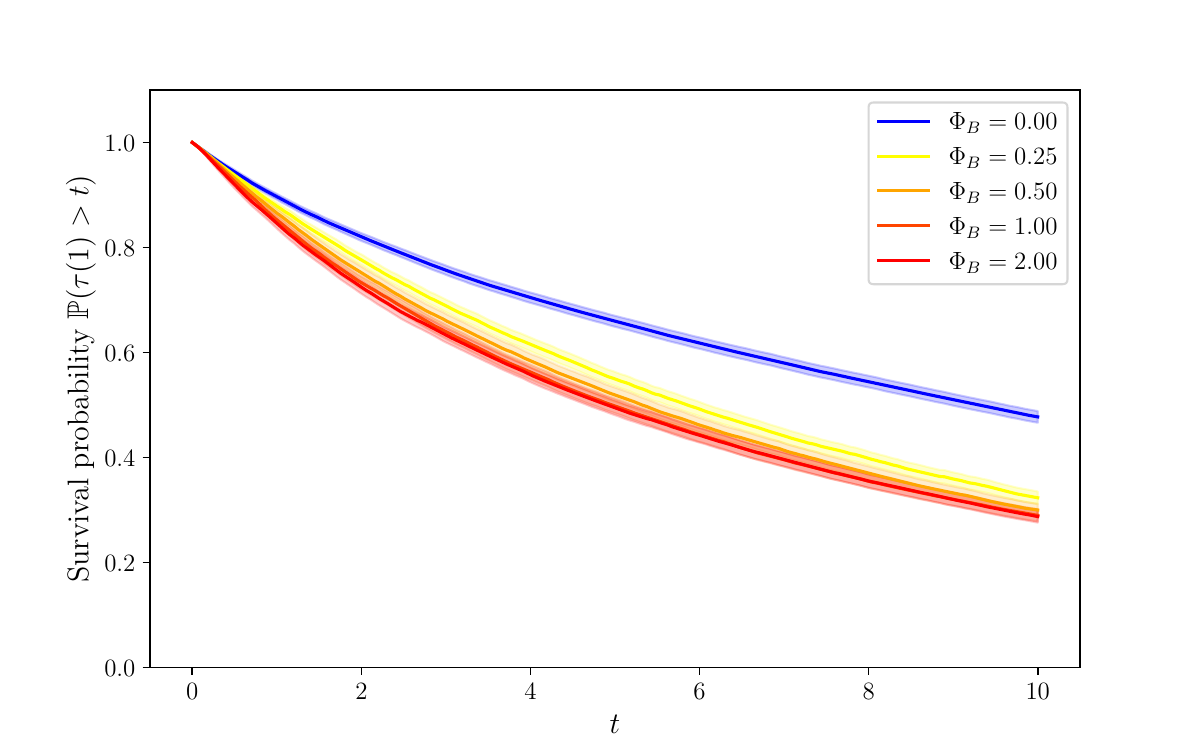}
  \caption{}
  \label{fig:sub1}
\end{subfigure}
\\ 
\begin{subfigure}{0.8\textwidth}
  \centering
  \includegraphics[width=\textwidth]{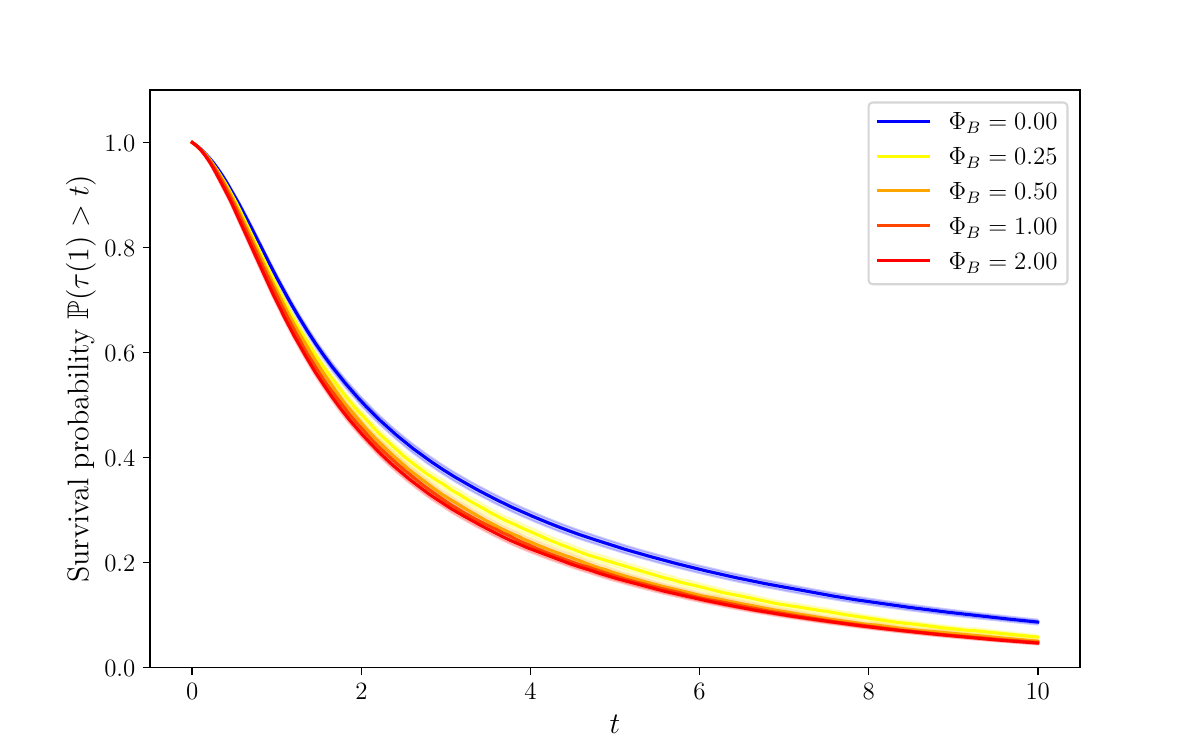}
  \caption{}
  \label{fig:sub2}
\end{subfigure}
\caption{\footnotesize Monte Carlo estimate of survival probability $\P(\tau(1) > t)$ in the non-Markovian setting (panel (A): $\eta=1, \phi^A=0$; panel (B): $\eta=1, \phi^A=1$) for several values of $\phi^B$ ($10'000$ MC samples, asymptotic $99.99\%$ confidence band).}
\label{fig:varying_phiB}
\end{figure}

\subsection{Pricing of single-name CDS}
\label{subsec:CDS}

In a single-name credit default swap (CDS) the buyer makes regular premium payments to the seller in exchange for its commitment to cover losses caused by the default of a reference entity, whenever they materialize, before the end of the contract. The pricing of a single-name CDS therefore involves the evaluation of the expected values of the premium payments leg and the default payment leg.

The premium payments are made at regular times, say $0 = t_0 < t_1 < \ldots < t_N = T$, and are expressed in terms of an annualized spread, here denoted by $x$. The premium payment at time $t_n$ is then equal to $(t_n - t_{n-1}) x$. 

If the reference entity defaults at a random time $\tau \in (t_{n-1}, t_n]$, then the buyer is also required to pay the premium accrued since the last premium payment, i.e. $x(\tau - t_{n-1})$. The expected discounted value at time $t=0$ of the premium payments leg under an equivalent martingale measure is given by:

\begin{align}
\label{eq:premium_leg}
L_{\text{prem}}(s) & = \mathbb{E} \left[ \sum_{k = 1}^N \left( e^{- \int_0^{t_k} r(u) du} s (t_k - t_{k-1}) \I_{\{\tau > t\}} + e^{- \int_0^{\tau} r(u) du} s (\tau - t_k) \I_{\{t_{k-1} < \tau \le t_k\}} \right) \right] \nonumber \\
& = \sum_{k = 1}^N \left( e^{- \int_0^{t_k} r(u) du} s (t_k - t_{k-1}) \mathbb{P}(\tau > t) + \int_{t_{k-1}}^{t_k} e^{- \int_0^{t} r(u) du} s (t - t_k) f_\tau(t) dt \right) \nonumber \\
\end{align}

where $f_\tau$ is the density function of $\tau$ and $r$ is the risk-free interest rate, here assumed deterministic\footnote{The assumption of a deterministic risk-free interest rate is commonly made in credit risk, since incorporating interest rate risk in practice leads to negligible contributions compared to the high uncertainty due to default risk.}.

The CDS seller makes at the random time $\tau$. The expected discounted value at time $t=0$ of the default cashflow is therefore:

\begin{align}
\label{eq:default_leg}
L_{\text{def}} & = \mathbb{E} \left[ e^{- \int_0^\tau r(u) du} \delta \I_{\{t < \tau \le T\}} \right] \nonumber \\
& = \delta \int_0^T e^{- \int_0^{t} r(u) du} f_\tau(t) dt \nonumber \\
\end{align}

where $\delta$ is the actual default payment, here assumed to be constant.

The fair spread of a CDS contract can then be computed by solving for $s$ in the equation $L_{\text{prem}}(s) = L_{\text{def}}$.

Table \ref{table:CDS} shows the estimated fair spreads for a single-name CDS contract evaluated using our model under risk-neutral parameters for different contagion settings. 

In order to allow comparison with the results shown so far, the reference entity is chosen to be one of the debtors in the group of debtors presented in Section \ref{subsec:PDs}. All model implementation parameters have been kept identical. The premium and default legs are computed from the distribution of the default time $\tau(1)$. We further assumed $r=2\%$ and $\delta = 60\%$. 

As Table \ref{table:CDS} shows, the introduction of indirect contagion in the non-Markovian setting leads to higher spreads and can be used to capture the impact of the default of systemically important institutions on the credit spread of a debtor.

\begin{table}[H]
\centering
\begin{tabular}{l|p{0.6cm}|p{0.6cm}|c|}
    \hline
     & $\phi^A$ & $\phi^B$ & Spread (bp) \\
     \hline \hline
     Cox process & & & \\ \hline
      & - & - & 18.19 $\pm$ 0.20 \\
     \hline \hline
     Interacting intensities & & & \\ \hline
      & 1.0 & - & 80.09 $\pm$ 0.82 \\
      & 2.0 & - & 85.77 $\pm$ 0.92 \\
     \hline \hline
     Non-Markovian setting & & & \\ \hline
     \hfill $\eta = 1.0$ & 1.0 & 1.0 & 109.13 $\pm$ 0.93 \\
      & 1.0 & 2.0 & 111.55 $\pm$ 1.09 \\
      & 2.0 & 1.0 & 116.47 $\pm$ 0.98 \\
      & 2.0 & 2.0 & 118.76 $\pm$ 1.16 \\ \hline
     \hfill $\eta = 2.0$ & 1.0 & 1.0 & 121.21 $\pm$ 1.01 \\
      & 1.0 & 2.0 & 124.04 $\pm$ 1.24 \\
      & 2.0 & 1.0 & 129.74 $\pm$ 1.34 \\
      & 2.0 & 2.0 & 131.73 $\pm$ 1.23 \\
    \hline
\end{tabular}
\caption{\footnotesize Monte Carlo estimates of fair spreads (in basis points, bp) for a single-name CDS contract under different contagion settings. Estimates obtained from $10'000$ samples of the survival probability from Algorithm \ref{algo:LS} and reported with their $99\%$ asymptotic confidence interval.}
\label{table:CDS}
\end{table}

\subsection{Pricing of \texorpdfstring{$k$th-to-default swaps}{kth-to-default swaps}}
\label{subsec:kth-to-default}

A $k$th-to-default swap is a basket credit derivative in which the buyer is entitled to receive a default payment at the time of the $k$-th default in a reference portfolio. In exchange for this payment the buyer makes regular premium payments to the seller. Assuming a constant and identical recovery rate, the pricing of a $k$th-to-default swap is identical to the pricing of a single-name CDS, as done in Section \ref{subsec:CDS}, provided the default time $\tau$ is substituted with the $k$th-to-default time.

The distribution of the $k$th-to-default time can be obtained from the joint survival probabilities of all subsets of debtors in the reference portfolio using a very simple recursive scheme. If we denote the $k$th-to-default time by $\tau^k$ (for $k \ge 1$), then we have that $\tau^k > t$ if and only if at most $k-1$ entities have defaulted by time $t$, that is

\begin{equation}
\label{eq:2}
    \P(\tau^k > t) = \sum_{J \subseteq \N, |J| < k} \P(\tau(k) > t, k \in \N-J; \tau(j) \le t, j \in J).
\end{equation}

The terms in the summation can be computed starting from the following trivial equation:

\begin{align*}
    \P(\tau(k) > t, \forall k \in N - J) & = \sum_{I \subseteq J} \P(\tau(k) > t, k \in N-I; \tau(i) \le t, \forall i \in I),\\
\end{align*}

which can be expressed more compactly as follows:

\begin{equation}
\label{eq:2bis}
p_{N-J, \emptyset}(t) = \sum_{I \subseteq J} p_{N-I, I}(t),    
\end{equation}

by introducing the convenient notation $p_{A,B}(t) := \P(\tau(k) > t, k \in A, \tau(j) \le t, j \in B)$, for $A \cap B = \emptyset$.

Solving for the term corresponding to $I = J$ in the right-hand side summation, one obtains

\begin{equation}
p_{N-J, J}(t) = p_{N-J, \emptyset}(t) - \sum_{I \subset J, I \neq J} p_{N-I, I}(t),
\end{equation}

which can be solved recursively on $J$, provided the terms $p_{N-J, \emptyset}(t)$ (i.e. the joint survival probabilities of all subsets of debtors) are known.

In the case of our model these terms can be computed via Monte Carlo estimation using Algorithm \ref{algo:LS} for $\ell^{J|\emptyset}$.

\begin{rem}
As clear from Algorithm \ref{algo:LS}, each sample path of $\ell^{A|\emptyset}$ requires computing $\ell^{B|\emptyset}$ for all $B \subseteq A$, therefore the Monte Carlo samples used to estimate the $k^*$th-to-default time for a given $k^*$ can be used to produce an estimate (albeit not an independent one) of the $k$th-to-default times, for all $k \le k^*$. 
\end{rem}

\begin{rem}
If entities in the reference portfolio are assumed to be exchangeable\footnote{This is a reasonable assumption for all homogeneous portfolios and in particular for all major credit indices, such as indices of the CDX.NA.IG and iTraxx Europe families.}, then the terms $p_{A, B}(t)$ depend only on the cardinality of the sets $A$ and $B$ and the recursion requires only $k$ steps. It then follows that the pricing of a $k$th-to-defeault swap for a portfolio of $n$ debtors requires solving $O(3^k)$ SDEs, independently of the number of debtors in the reference portfolio.
\end{rem}

Table \ref{table:kth-to-default} shows the fair spreads for various values of $k$. Also in this case, we notice that the addition of indirect contagion in the non-Markovian setting leads to higher spreads.

\begin{table}
\centering
\begin{tabular}{l|p{0.6cm}|p{0.6cm}|c|c|c|}
    \hline
     & $\phi^A$ & $\phi^B$ & $k=1$ & $k=2$ & $k=3$ \\
     \hline \hline
     Cox process & & & & & \\ \hline
      & - & - & 92.09 $\pm$ 1.14 & 33.13 $\pm$ 0.48 & 14.27 $\pm$ 0.27 \\
     \hline \hline
     Interacting intensities & & & & & \\ \hline
      & 1.0 & - & 90.52 $\pm$ 1.02 & 83.50 $\pm$ 0.89 & 79.35 $\pm$ 0.81 \\
      & 2.0 & - & 91.20 $\pm$ 1.05 & 87.23 $\pm$ 0.97 & 84.78 $\pm$ 0.93 \\
     \hline \hline
     Non-Markovian setting & & & & & \\ \hline
     \hfill $\eta = 1.0$ & 1.0 & 1.0 & 120.27 $\pm$ 1.66 & 108.06 $\pm$ 1.45 & 101.67 $\pm$ 1.36 \\
      & 1.0 & 2.0 & 120.18 $\pm$ 1.86 & 108.80 $\pm$ 1.64 & 102.59 $\pm$ 1.57 \\
      & 2.0 & 1.0 & 121.13 $\pm$ 2.03 & 113.47 $\pm$ 1.86 & 109.62 $\pm$ 1.78 \\
      & 2.0 & 2.0 & 119.84 $\pm$ 1.78 & 112.98 $\pm$ 1.64 & 109.29 $\pm$ 1.58 \\ \hline
     \hfill $\eta = 2.0$ & 1.0 & 1.0 & 131.78 $\pm$ 2.25 & 117.82 $\pm$ 1.95 & 110.37 $\pm$ 1.80 \\
      & 1.0 & 2.0 & 131.66 $\pm$ 2.45 & 118.02 $\pm$ 2.13 & 110.64 $\pm$ 2.01 \\
      & 2.0 & 1.0 & 132.15 $\pm$ 2.21 & 123.83 $\pm$ 2.01 & 119.30 $\pm$ 1.93 \\
      & 2.0 & 2.0 & 131.98 $\pm$ 2.11 & 123.98 $\pm$ 1.95 & 119.50 $\pm$ 1.87 \\
    \hline
\end{tabular}
\caption{\footnotesize Monte Carlo estimates of fair spreads (in basis points, bp) for $k$th-to-default swap contracts under different contagion settings. Estimates obtained from $10'000$ samples of the probabilities in Equation \eqref{eq:2} and reported with their $99\%$ asymptotic confidence interval.}
\label{table:kth-to-default}
\end{table}


\section{Proof of the main result}\label{sec5}
This section is dedicated to the proof of the Theorem \ref{MainThm}. For the convenience of the reader, we gather separately, in Appendix \ref{Appendix} the basic results from the theory of the enlargement of filtrations that were useful for our proofs. Also for the sake of clarity, we establish some intermediary results in the first two subsections. 

The proof rely on projections in some subfiltrations of $\GG^\N$. For any set $\C\subset \N$, we introduce the filtration $\GG^{\mathcal C}$ as
\[
\G^\C_t:= \mathcal H^\C_{t^+} \text{ with } \mathcal H^{\C}_t:=\F_t\bigvee_{k\in \mathcal C} \sigma(t\wedge\tau(k)),
\]i.e., the progressively enlarged filtration that satisfies the usual conditions and makes any $\tau(k)$ with $k\in \mathcal C$ a stopping time. We have $\GG^\emptyset =\FF$ and $\GG^\N$ is as in (\ref{int}).

\subsection{Preparatory results (I)} Because we are dealing with several filtrations and probabilities, we clarify here what a martingale becomes when we change the filtration and/or probability. Only the relevant changes of filtration and probability are emphasised.

\textbf{Notation.} Given two filtrations $\FF\subset \GG$ and a probability measure $\P$, we write $\FF \overset{\P}{\hookrightarrow} \GG$ when all  $\FF$ martingales remain $\GG$ martingales under the probability measure $\P$. This property is usually called immersion property (i.e., we say that $\FF$ is immersed in $\GG$) or (H) hypothesis.

\begin{lem}\label{filtrations} Let $\S$ be a subset of $\N$. The following hold:
\begin{itemize}
\item[(a)]  
\[
\FF  \overset{\P^0}{\hookrightarrow}\GG^\S \overset{\P^0}{\hookrightarrow} \GG^{\N},
\]

\item[(b)] 
\[
\FF  \overset{\P^{\N}}{\not \hookrightarrow}\GG^{\S} \overset{\P^{\N}}{\not\hookrightarrow} \GG^{\N},
\]

\item[(c)] Under $\bar\P_\C$, where $\C\subset \N$, we have:
\[
\FF  \overset{\bar\P_\C}{\hookrightarrow}\GG^\S \overset{\bar \P_\C}{\hookrightarrow} \GG^{\N}.
\]
\end{itemize}
\end{lem}
\proof
\begin{itemize}
\item[(a)]   Let us consider $X\in\G^\S_\infty$. We denote $\mathcal H_t:=\G^\S_t\vee_{k\in\N-\S}\sigma(e(k))$. We have that $\G^\N_t\subset \mathcal H_t$ and because any $e(k), k\in\N-\S$ is independent from $\G^\S_\infty$, we obtain:
\[
\E^{0}[X|\G^\N_t]=\E^{0}[\E^0[X|\mathcal H_t] \G^\N_t]= \E^{0}[\E^0[X|\G^\S_t] \G^\N_t]=\E^0[X|\G^\S_t].
\] 
To conclude, we apply Theorem \ref{lecasdregeneral} (3).
\item[(b)] Under $\P^{\N}$  the $\GG^\N$-compensators of the $\FF$-stopping times $T(k),k\in\N$ are not adapted to any sub-filtrations of  $\GG^\N$, which proves the claim. 
 
  \item[(c)] The Radon-Nikod\'ym density process $\frac{d\bar \P_\C}{d\P^{0}} |_{\G_t^\mathcal N}$ is $\FF$ adapted. Then, by Proposition \ref{proplL}, the immersion property holds in the given superfiltrations of  $\GG^{\S}$, as it was holding under $\P^0$. 
\end{itemize}
\finproof

The results in Theorem \ref{MainThm} make appear expectations under $\bar\P_\C$.  Let us fix a set $\C\subset\N$. Under $\bar \P_\C$ we have that:
 \begin{itemize}
  \item[-] For $k\in \mathcal C$, the stopping time $T(k)$ has an $\FF$-intensity $\gamma (k)(1-g(k))$. We define the following $\bar\P_\C$-martingales (relative to the filtrations $\FF$ and $\GG^\N$):
  \begin{equation}\label{nc}
  \bar n^\mathcal C_t(k):=\I_{\{T(k)\leq t\}}-\int_0^{t\wedge T(k)}\gamma_s (k)(1-g_s(k))ds
  \end{equation}
   \item[-] For $k\in \mathcal N-\mathcal C$, the stopping time $T(k)$ has unchanged $\FF$-intensity $\gamma(k)$, i.e., the same as under $\P^0$. 
  \item[-] More generally, all the $\P^0$-martingales orthogonal to $n(k), k\in \C$ are also $\bar\P_\C$ martingales.
  \item[-] As seen in Lemma \ref{filtrations}, the  $(\FF,\bar \P_\C)$ martingales, in particular  $\bar n^\mathcal C(k)$, remain martingales in larger filtrations $\GG^\S$, $\S\subset\N$.
  \end{itemize}


\subsection{Preparatory results (II)} In this section, a set $\C$ is fixed, $\C\subset\N$,  and we consider two additional sets: $$\S:=\N-\C\text{ and }\D\subset\S.$$ 

In Theorem \ref{MainThm}, the SDE (\ref{L}) for $ \ell^{\S|\D}$  is obtained after projecting on the filtration $\FF$ a $\GG^{\S-\D}$ adapted process, that we shall denote $L^{\S|\D}$. In this section, we identify the process $L^{\S|\D}$ (in Proposition \ref{MainProp}) and prepare the building blocks for obtaining its $(\FF,\bar\P_\C)$ projection (Proposition \ref{BB}). 

\begin{prop}\label{MainProp} The following hold:
\begin{itemize}
\item[(a)]
\begin{align}\label{pc1}
\P^\N(\tau (k)>t,\; \forall k\in\C)& = \bar \E_\C[L_t^\S],
\end{align}
where 
\begin{align*}
L_t^{\mathcal S}&:=\exp \left (-\Lambda^{\mathcal S}_t( \mathcal C) \right)\underset{i\in \mathcal S}{\prod} \mathcal E_t\left(\int_0^tA_s^{\mathcal S}(i)dm_s(i)\right) \underset{i\in \mathcal N}{\prod}  \mathcal E_t\left(\int_0^tB^{\mathcal S}_s(i)d\bar n^\mathcal C_s(i)\right).
\end{align*}
\item[(b)]
\begin{align}\label{pc2}
\P^\N(\tau (k)>t,\forall k\in\C\;;\;\tau^B(j)\leq t,\forall j\in\D)& =  \bar \E_\C\left [L_t^{\S|\D}\prod_{j\in\D}p_t(j)\I_{\{T(j)\leq t\}}\right],
\end{align}
where 
\begin{align}\label{LSDformula}
L_t^{\S|\D}:=&\underset{i\in \S-\D}{\prod} \mathcal E_t\left(\int_0^\cdot A_s^{\S-\D}(i)dm_s(i)\right)\underset{i\in \N}{\prod} \mathcal E_t\left(\int_0^\cdot B^{ \S|\D}_s(i)d\bar n^{\mathcal C}_s(i)\right) \\\nonumber
& \times  \exp \left\{ - \Lambda^{\S|\D}_t(\C) - \sum_{j\in\D}\int_0^{T(j) \wedge t} A^{\S-\D}_s(j) \alpha_s(j)ds \right\}.
\end{align}
We have used the following notations:
\begin{align}
\Lambda_t^{\S|\D}(i)&:=\int_0^t\lambda_s^{\S|\D}(i)ds\quad \Lambda_t^{\S}(i):=\Lambda_t^{\S|\emptyset}(i)=\int_0^t\lambda_s^{\S}(i)ds\\\label{lambdaSD}
\lambda^{\S|\D}_t(i) &:=\lambda^{\S-\D}_t (i)+g_t(i) \left(\sum_{j\in\D} \phi^B_s(i, j)\I_{\{T(j)<s\}} \right)\\
\label{BSDformula}
B_t^{\S|\D}(i)&:=\frac{1}{\gamma(i)}\left(\sum_{j\in\S-\D}\phi^B_t(i,j)\I_{\{\tau^B(j)<t\}}+\sum_{j\in\D}\phi^B_t(i,j)\I_{\{T(j)<t\}}\right)
\end{align}

\end{itemize}
\end{prop}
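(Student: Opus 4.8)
The key observation is the change-of-measure identity
\[
\P^\N(\tau(k)>t,\forall k\in\C\,;\,\tau^B(j)\le t,\forall j\in\D)
=\E^0\!\left[D^\N_t\,\I_{\{\tau(k)>t,\forall k\in\C\}}\,\I_{\{\tau^B(j)\le t,\forall j\in\D\}}\right],
\]
so the whole statement reduces to rewriting the random variable inside the $\P^0$-expectation. First I would use (\ref{ell2}) and the factorisation $D^\N_t\I_{\{\tau(k)>t,\forall k\in\C\}}=D^{\S}_t\I_{\{\tau(k)>t,\forall k\in\C\}}$ (recalling $\S=\N-\C$), which follows directly from the product structure in Proposition \ref{PropChangeMeasure}: the factors indexed by $i\in\C$ contribute only through $\tau^A(i),\tau^B(i)$ which cannot be $<t$ on the survival event, hence those Dol\'eans--Dade exponentials equal $1$ there. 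This already replaces $\P^\N$ by $\P^\S$ and $D^\N$ by $D^\S$.

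Next I would bring in the default-adjusted measure $\bar\P_\C$ of Definition \ref{PC}. Since $\frac{d\bar\P_\C}{d\P^0}|_{\G^\N_t}=\prod_{k\in\C}\mathcal E_t(\nu(k))$ and, by (\ref{Z(k)}), $\prod_{k\in\C}Z_t(k)=\exp(-\Lambda_t(\C))\prod_{k\in\C}\mathcal E_t(\nu(k))$, I would insert $\I_{\{\tau(k)>t,\forall k\in\C\}}$, condition on $\G^{\S}_t$, and use that conditionally on $\G^\S_\infty$ the variables $e(k),k\in\C$ are independent unit exponentials independent of everything relevant, so
$\E^0[\I_{\{\tau(k)>t,\forall k\in\C\}}\mid\G^\S_t]=\prod_{k\in\C}Z_t(k)$ up to the contribution of the $\nu(k)$-martingales — more precisely one writes the $\P^0$-expectation as a $\bar\P_\C$-expectation of $\exp(-\Lambda^\S_t(\C))$ times the remaining $D^\S$ factors (with $n(k)$ replaced by $\bar n^\C(k)$ for $k\in\C$ via Lemma \ref{filtrations}(c) and (\ref{nc})). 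This is exactly the passage from $D^\S$ to $L^\S$ in part (a): one checks term by term that the surviving factors of $D^\S_t$, after the measure change to $\bar\P_\C$ and restriction to the survival event, assemble into $L^\S_t=\exp(-\Lambda^\S_t(\C))\prod_{i\in\S}\mathcal E_t(\int_0^\cdot A^\S_s(i)dm_s(i))\prod_{i\in\N}\mathcal E_t(\int_0^\cdot B^\S_s(i)d\bar n^\C_s(i))$, and then takes the $(\FF,\bar\P_\C)$ optional projection using $\FF\overset{\bar\P_\C}{\hookrightarrow}\GG^\S$ (Lemma \ref{filtrations}(c)) to justify that the projection of $L^\S$ is the process $\ell^\S$ appearing in (\ref{pc1}).

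For part (b), the extra ingredient is the event $\{\tau^B(j)\le t,\forall j\in\D\}=\{T(j)\le t,\ \tau(j)=T(j),\forall j\in\D\}$. On this event each $\I_{\{\tau^B(j)<s\}}$ in the exponents of $D^\S$ (through the $B^\S$ and the intensity $\lambda^\S$) may be replaced by $\I_{\{T(j)<s\}}$, and the $m(j)$-Dol\'eans--Dade factor for $j\in\D$ degenerates because $\tau^A(j)=\infty$, producing the term $\exp(-\int_0^{T(j)\wedge t}A^{\S-\D}_s(j)\alpha_s(j)ds)$; collecting these substitutions turns the integrand into $L^{\S|\D}_t\cdot\prod_{j\in\D}\I_{\{\tau(j)=T(j)\}}\I_{\{T(j)\le t\}}$ with $L^{\S|\D}$, $B^{\S|\D}$, $\lambda^{\S|\D}$ exactly as in (\ref{LSDformula})--(\ref{BSDformula}). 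Finally I would take conditional expectation with respect to $\FF$ (under $\bar\P_\C$): $L^{\S|\D}$ is $\GG^{\S-\D}$-adapted and the only remaining $\D$-dependent factor is $\prod_{j\in\D}\I_{\{\tau(j)=T(j)\}}\I_{\{T(j)\le t\}}$; projecting this onto $\FF$ along $\FF\overset{\bar\P_\C}{\hookrightarrow}\GG^{\S-\D}$ and using that $p_t(j)=\P^0(\tau(j)=T(j)\mid\F_t)$ is an $(\FF,\P^0)$-martingale which by Assumption 2 and Corollary \ref{pti} satisfies $p_t(j)=p_{t\wedge T(j)}(j)$ (so it is also a $\bar\P_\C$-martingale, being orthogonal to the $n(k)$), one obtains the factor $\prod_{j\in\D}p_t(j)\I_{\{T(j)\le t\}}$ and the formula (\ref{pc2}).

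\textbf{Main obstacle.} The routine part is the algebraic bookkeeping of which Dol\'eans--Dade factors survive and how the indicators collapse on the two events; the delicate point is justifying the successive conditional expectations, i.e.\ correctly invoking the immersion relations of Lemma \ref{filtrations} together with the structure of the densities (that $\frac{d\bar\P_\C}{d\P^0}$ is $\FF$-adapted) so that ``take the $(\FF,\bar\P_\C)$ optional projection of $L^{\S|\D}$'' commutes with the presence of the $\GG$-measurable indicators, and in handling the $T(j)$-boundary term for $j\in\D$ where $\tau^A(j)=\infty$ makes one factor degenerate rather than vanish. I expect that step — isolating $\prod_{j\in\D}p_t(j)\I_{\{T(j)\le t\}}$ cleanly from $L^{\S|\D}$ while tracking the $A^{\S-\D}(j)\alpha_s(j)$ boundary term — to be where the real care is needed.
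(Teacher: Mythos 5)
Your plan follows essentially the same route as the paper's proof: reduce to a $\P^0$-expectation of $D^\N_t$ times the indicators, replace $D^\N$ by $D^{\S}$ on the survival event, condition on $\G^\S_t$ using the conditional independence of the $e(k)$ to produce $\prod_{k\in\C}Z_t(k)$, absorb the martingale factors $\mathcal E_t(\nu(k))$ into $\bar\P_\C$ to identify $L^\S$, and for (b) substitute $T(j)$ for $\tau(j)$ on $\{\tau^B(j)\le t\}$ to get the $\GG^{\S-\D}$-adapted $L^{\S|\D}$ before projecting the remaining indicator into $\prod_{j\in\D}p_t(j)\I_{\{T(j)\le t\}}$. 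The one point to make explicit when writing this up is that the last projection must be done by conditioning on $\G^{\S-\D}_t$ first (so that $L^{\S|\D}_t$ factors out) and then observing that $\bar\P_\C(\tau^B(j)\le t,\forall j\in\D\mid\G^{\S-\D}_t)$ is already $\F_t$-measurable, which is exactly how the paper proceeds.
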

\proof Let us denote: $p^{\S|\D}(t):=\P^\N(\tau (k)>t,\forall k\in\C\;;\;\tau^B(j)\leq t,\forall j\in\D)$. We first show that:
\begin{equation}\label{pc0}
p^{\S|\D}(t)=\bar \E_{\C} \left[L_t^{\S}\I_{\{\tau^B(j)\leq t;\forall j\in\D\}}\right].
\end{equation}
Looking to the formula in (\ref{pc0}), the roadmap is clear: we need to go form the probability $\P^\N$ to $\bar \P_\C$, and from the filtration $\GG^\N$ to $\GG^\S$, as $L^{\S|\D}$ is a $\GG^{\S}$ adapted process. We notice that the Radon-Nikod\'ym density process $D^\S=d\P^\S/d\P^0|_{\G^\N}$, with $\S\subset\N$ is $\GG^\N$ adapted.  

We now introduce some useful $\GG^\S$-adapted processes:
\begin{align*}
E_t^{\mathcal S}&=\underset{i\in \mathcal S}{\prod} \mathcal E_t\left(\int_0^tA_s^{\mathcal S}(i)dm_s(i)\right) \underset{i\in \mathcal N}{\prod}  \mathcal E_t\left(\int_0^tB^{\mathcal S}_s(i)d n_s(i)\right)\\
F_t^{\mathcal S}&=\underset{i\in \mathcal S}{\prod} \mathcal E_t\left(\int_0^tA_s^{\mathcal S}(i)dm_s(i)\right) \underset{i\in \mathcal N}{\prod}  \mathcal E_t\left(\int_0^tB^{\mathcal S}_s(i)d\bar n^\mathcal C_s(i)\right)
\end{align*}
so that
\[
L_t^{\S}=F_t^{\mathcal S}\exp \left (-\Lambda^{\mathcal S}_t( \mathcal C) \right)\
\]
We remark that, indeed, both $E^{\mathcal S}$ and $F^{\mathcal S}$ are  $\GG^{\S}$ adapted. Also, $E^{\mathcal S}$ is a local martingale under $\P^0$, while   $F^{\mathcal S}$ is a  local martingale under $\bar \P_\C$. Furthermore, we have the relation:
\[
F_t^{\mathcal S}= E_t^{\mathcal S}\times \exp\left ( \sum_{k\in \C}\int_0^{t}B^{\S}(k)\beta_s(k)ds\right ).
\] 

It follows that the expression in (b) can be computed as (using the expression (\ref{DTk}) to start with):

\begin{align*}
p^{\S|\D}(t)& =\E^0 \left[D^\N _t\I_{\{\tau(k)>t, \forall k\in\C\}}\I_{\{\tau^B(j)\leq t, \forall j\in\D\}}\right]\\
&=\E^0 \left[D^{\S} _t\I_{\{\tau(k)>t,\forall k\in\C\}}\I_{\{\tau^B(j)\leq t, \forall j\in\D\}}\right]\\
&= \E^0 \left[E^\S_t\exp\left(- \sum_{k\in \C}\int_0^t \alpha_s(k) A^{\S}_s(k) ds  \right)\I_{\{\tau(k)>t,\forall k\in\C\}}\I_{\{\tau^B(j)\leq t, \forall j\in\D\}}\right]\\
&=\E^0 \left[E_t^\S \exp\left(- \sum_{k\in \C}\int_0^t \alpha_s(k) A^{\S}_s(k) ds  \right)\prod_{k\in\C}Z_t(k)\I_{\{\tau^B(j)\leq t, \forall j\in\D\}}\right].
\end{align*}
The last equality is obtained by using the fact that the random variables $e(i),i\in\N$ are independent under $\P^0$ and $\C\cap\S=\emptyset$:
\begin{align*}
\P^0(\tau(k)> t, \forall k\in \C|\G^{\S}_t)&=\E^0\left[\P^0\left(\tau(k)> t,\forall k\in \C|\F_t\vee_{i\in\S}\sigma(e(i))\right)|\G^{\S}_t\right]\\
&=\E^0\left[\P^0\left(\tau(k)> t,\forall k\in \C |\F_t\right)|\G^{\S}_t\right]=\prod_{k\in\C}\P^0\left(\tau(k)> t |\F_t\right).
\end{align*}
To continue, we just need to use the expression  for  $Z(k)$ in (\ref{Z(k)}) and  for $\bar\P_\C$ in Definition \ref{PC}:
\begin{align*}
p^{\S|\D}(t)& =\E^0 \left[E_t^S\exp \left (- \sum_{k\in \C} \int_0^t\left[\lambda_s(k)+\alpha_s(k)A^{\S}_s(k) \right]ds \right )\prod_{k\in\C}\mathcal E_t(\nu(k))\I_{\{\tau^B(j)\leq , \forall j\in\D\}}\right]\\
&=\bar \E_{\C} \left[E^S_t\exp \left (- \sum_{k\in \C} \int_0^t\left[\lambda_s(k)+\alpha_s(k)A^{\S}_s(k) \right]ds \right )\I_{\{\tau^B(j)\leq t, \forall j\in\D\}}\right]\\
&= \bar \E_{\C} \left[ F^\S_t\exp \left (-\sum_{k\in \C}\Lambda^{\S}_t(k) \right) \I_{\{\tau^B(j)\leq t, \forall j\in\D\}}\right]\\
&=\bar \E^{\C} \left[L_t^\S\I_{\{\tau^B(j)\leq t, \forall j\in\D\}}\right],
\end{align*}so that (\ref{pc0}) is proved.
We now prove the particular formulas of our proposition:
\begin{itemize}

\item[(a)] The formula (\ref{pc1}) is obtained from (\ref{pc0}) with $\D=\emptyset$. 

\item[(b)] For the formula (\ref{LSDformula}) a bit more work is needed. On the set $\{\tau^B(j)<t;\forall j\in\D\}$ we have that $L^\S$, which is a $\GG^\S$-adapted process, is equal to some  $\GG^{\S-\D}$-adapted process, that is: 
\begin{equation}\label{LSD}
L_t^\S\I_{\{\tau^B(j)<t;\forall j\in\D\}}=L_t^{\S|\D}\I_{\{\tau^B(j)<t;\forall j\in\D\}},
\end{equation}with $L^{\S|\D}$ being   $\GG^{\S-\D}$-adapted.
We proceed to identify the process $L^{\S|\D}$ (basically this consists in, for all $j\in\D$, replacing  $\tau(j)$ with $T(j)$ as they are equal on the set $\{\tau^B(j)<\infty\}$). We need to show it corresponds to the expression in (\ref{LSDformula}). 

First, we notice that:
\begin{align*}
&\underset{i\in \S}{\prod} \mathcal E_t\left(\int_0^\cdot A_s^{ \S}(i)dm_s(i)\right)\I_{\{\tau^B(j)<t;\forall j\in\D\}}= \\
&=\exp \left (-\sum_{j\in\D}\int_0^{t\wedge T(j)}A^{\S-\D}_s(j)\alpha (j)ds\right)\underset{i\in \S-\D}{\prod} \mathcal E_t\left(\int_0^\cdot A_s^{ \S-\D}(i)dm_s(i)\right)\I_{\{\tau^B(j)<t;\forall j\in\D\}}
\end{align*}
Second, for $k\in \S-\D$, we have:
\begin{align*}
B^\S_t(k)\I_{\{\tau^B(j)<t;\forall j\in\D\}}&= B^{\S|\D}_t(k)\I_{\{\tau^B(j)<t;\forall j\in\D\}}\\
\lambda^S_t(k)\I_{\{\tau^B(j)<t;\forall j\in\D\}}&=\lambda^{\S|\D}_t(k)\I_{\{\tau^B(j)<t;\forall j\in\D\}}.
\end{align*}
with $B^{\S|\D}$ and $\lambda^{\S|\D}$ being  $\GG^{\S-\D}$-adapted;  $B^{\S|\D}$ is given in (\ref{BSDformula}), and $\lambda^{\S|\D}$ in (\ref{lambdaSD}). We therefore identify $L^{\S|\D}$ as the one in (\ref{LSDformula}).

Using the relation (\ref{LSD}) in the formula (\ref{pc0}) and then the fact that $L^{\S|\D}$ is $\GG^{\S-\D}$ adapted, we obtain:
\begin{align*}
p^{\S|\D}(t)&=\bar \E_{\C} \left[L_t^{\S|\D}\I_{\{\tau^B(j)\leq t, \forall j\in\D\}}\right]\\
&=\bar \E_{\C} \left[L_t^{\S|\D}\bar \P_\C\left (\tau^B(j)\leq t, \forall j\in\D|\G^{\S-\D}_t\right )\right].
\end{align*}
Furthermore, 
\begin{align*}
\bar \P_\C&\left (\tau^B(j)\leq t, \forall j\in\D|\G^{\S-\D}_t\right )= \P^0\left (\tau^B(j)\leq t, \forall j\in\D|\G^{\S-\D}_t\right )\\
&=\E^0\left[\P^0\left (\tau^B(j)\leq t, \forall j\in\D|\F_t\vee_{k\in\S-\D}\sigma(e(k))\right )|\G^{\S-\D}_t\right]\\
&=\E^0\left[\P^0\left (\tau^B(j)\leq t, \forall j\in\D|\F_t\right )|\G^{\S-\D}_t\right]=\P^0\left (\tau^B(j)\leq t, \forall j\in\D|\F_t\right )\\
&=\prod_{j\in\D}\P^0\left (\tau^B(j)\leq t|\F_t\right )=\prod_{j\in\D}p_t(j)\I_{\{T(j)\leq t\}}.
\end{align*}
Above, we have used for obtaining the first equality, the fact that the Radon-Nikod\'ym density process $\frac{d\bar \P_\C}{d\P^{0}} |_{\G_t^\N}$ is $\FF$ adapted, hence also $\GG^{\S-\D}$ adapted and for the second equality, the fact that $\G_t^{\S-\D}\subset \F_t\vee_{k\in\S-\D}\sigma(e(k))$, so that (\ref{pc2}) is proved.
\end{itemize}
\finproof

%

\bigskip

The dynamics of the processes $ \ell^\S$  and $ \ell^{\S|\D}$ will be obtained from intermediary quantities, falling basically into two categories:
\begin{prop}\label{BB} The following hold, for $j\in\S-\D$:
\begin{align}\label{BB1}
(a)\;&\bar\E_\C[L_t^{\S|\D} \I_{\{\tau^B(j)<t\}}|\F_t]= \ell_t^{\S|\D\cup j}p_t(j)\I_{\{T(j)<t\}},\\\label{BB2}
(b)\;&\bar \E_\C[L^{\S|\D}_t\I_{\{\tau^A(j)<t\}}|\F_t]= \ell^{S|\D}_t- \ell^{\S-j|\D}_t-  \ell^{\S|\D\cup j}_tp_t(j)\I_{\{T(j)<t\}}.
\end{align}
We recall that  $ \ell^{\S|\D}$ is the $(\FF,\bar \P_\C)$-optional projection of the process $L^{\S|\D}$. Consequently, $ \ell^{\S-j|\D}$ is the $(\FF,\bar \P_{\C\cup \{j\}})$-optional projection of the process $L^{\S-j|\D}$.
\end{prop}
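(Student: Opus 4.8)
The plan is to reduce both identities to three ingredients that are already available: (i) a pathwise ``restriction'' identity expressing $L^{\S|\D}$, on a suitable event, as a (possibly rescaled) process adapted to a smaller progressive enlargement; (ii) the conditioning/independence argument used in the proof of Proposition \ref{MainProp} (pass from $\bar\P_\C$ to $\P^0$ because $d\bar\P_\C/d\P^0$ is $\FF$‑adapted, then condition out the relevant $e(j)$ using that it is independent of the remaining enlargement); and (iii) the defining property of the optional projections $\ell^{\cdot|\cdot}$.

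\textbf{Part (a).} First establish the pathwise identity $L^{\S|\D}_t\,\I_{\{\tau^B(j)<t\}}=L^{\S|\D\cup j}_t\,\I_{\{\tau^B(j)<t\}}$, by the same reasoning as for \eqref{LSD}: on $\{\tau^B(j)<t\}$ one has $\tau(j)=T(j)$ and $\tau^A(j)=+\infty$, so every occurrence of $\tau^A(j)$ in \eqref{LSDformula} (through $A^{\S-\D}$ and through the factor $\mathcal E(\int A^{\S-\D}(j)\dd m(j))$, which collapses to $\exp(-\int_0^{t\wedge T(j)}A^{(\S-\D)-j}_s(j)\alpha_s(j)\dd s)$) and of $\tau^B(j)$ (through $B^{\S|\D}$ and $\lambda^{\S|\D}$) may be rewritten with $T(j)$; since $\mathcal C$ and hence $\bar n^{\mathcal C}$ are unchanged, this rewriting turns $L^{\S|\D}$ into $L^{\S|\D\cup j}$ with no compensating scalar. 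Because $L^{\S|\D\cup j}$ is $\GG^{\S-(\D\cup j)}$‑adapted and $\F_t\subset\GG^{\S-(\D\cup j)}_t$, conditioning first on $\GG^{\S-(\D\cup j)}_t$ lets $L^{\S|\D\cup j}_t$ factor out, leaving $\bar\P_\C(\tau^B(j)<t\mid\GG^{\S-(\D\cup j)}_t)$. As $d\bar\P_\C/d\P^0$ is $\FF$‑adapted this equals $\P^0(\tau^B(j)<t\mid\GG^{\S-(\D\cup j)}_t)$, and since $\{\tau^B(j)<t\}=\{T(j)<t\}\cap\{\tau(j)=T(j)\}\in\F_t\vee\sigma(e(j))$ with $j\notin\S-(\D\cup j)$, the independence of $e(j)$ gives $\P^0(\tau^B(j)<t\mid\F_t)=p_t(j)\I_{\{T(j)<t\}}$ (using Corollary \ref{pti}). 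This quantity is $\F_t$‑measurable, so it factors out of the outer $\bar\E_\C[\,\cdot\mid\F_t]$ and what remains is $p_t(j)\I_{\{T(j)<t\}}\,\bar\E_\C[L^{\S|\D\cup j}_t\mid\F_t]=p_t(j)\I_{\{T(j)<t\}}\,\ell^{\S|\D\cup j}_t$, which is \eqref{BB1}.

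\textbf{Part (b).} Up to a $\P^0$‑null set, $\I_{\{\tau^A(j)<t\}}=1-\I_{\{\tau(j)\geq t\}}-\I_{\{\tau^B(j)<t\}}$ (here $\tau^A(j)$ avoids the deterministic time $t$ by Corollary \ref{avoid}, $T(j)$ is totally inaccessible, and $\tau(j)=\tau^A(j)\wedge\tau^B(j)$ with at most one of the two finite). Applying $\bar\E_\C[L^{\S|\D}_t\,\cdot\mid\F_t]$ and using part (a) for the $\tau^B$‑term, the claim \eqref{BB2} reduces to $\bar\E_\C[L^{\S|\D}_t\,\I_{\{\tau(j)\geq t\}}\mid\F_t]=\ell^{\S-j|\D}_t$. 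For this, derive the analogue of the restriction identity of part (a) on $\{\tau(j)\geq t\}$: there $\I_{\{\tau^A(j)<s\}}=\I_{\{\tau^B(j)<s\}}=0$ for $s\leq t$ and $m(j)$ reduces to its absolutely continuous part on $[0,t]$, so comparing \eqref{LSDformula} for $L^{\S|\D}$ with the one for $L^{\S-j|\D}$ — the latter built with the fixed set $\C\cup\{j\}$, hence carrying $\bar n^{\C\cup j}$ and $\Lambda^{\S-j|\D}(\C\cup j)$ instead of $\bar n^{\C}$ and $\Lambda^{\S|\D}(\C)$ — and invoking $\bar n^{\C\cup j}(j)=n(j)+\int_0^{\cdot\wedge T(j)}\beta_s(j)\dd s$ together with $Z_t(j)=e^{-\Gamma_t(j)}=\mathcal E_t(\nu(j))e^{-\Lambda_t(j)}$, one sees that on $\{\tau(j)\geq t\}$ the process $L^{\S|\D}$ equals $L^{\S-j|\D}$ times an explicit $\FF$‑measurable factor, built precisely so as to cancel $Z_t(j)$. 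Conditioning on $\GG^{(\S-\D)-j}_t$ as in part (a) (so $L^{\S-j|\D}_t$ factors out and $\bar\P_\C(\tau(j)\geq t\mid\GG^{(\S-\D)-j}_t)=\P^0(\tau(j)>t\mid\F_t)=Z_t(j)$, again by $e$‑independence) removes that factor, and — since $d\bar\P_{\C\cup j}/d\bar\P_\C=\mathcal E(\nu(j))$ is $\FF$‑adapted, so that the $(\FF,\bar\P_\C)$‑ and $(\FF,\bar\P_{\C\cup j})$‑optional projections of $L^{\S-j|\D}$ agree — the remaining conditional expectation is exactly $\ell^{\S-j|\D}_t$.

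\textbf{Main obstacle.} The conditioning/independence manipulations are routine and mirror Proposition \ref{MainProp}. The delicate point is the pathwise bookkeeping in part (b): pinning down the exact $\GG^{(\S-\D)-j}$‑adapted process to which $L^{\S|\D}$ reduces on $\{\tau(j)\geq t\}$, i.e. tracking how the three factors of \eqref{LSDformula} transform when $j$ migrates from the contagious set $\S$ into the ``spectator'' set $\C$ — which simultaneously deletes the $i=j$ stochastic–exponential factor, turns $\bar n^{\C}$ into $\bar n^{\C\cup j}$, and augments $\Lambda^{\S|\D}(\C)$ by $\int_0^{\cdot}\lambda^{\S-j|\D}_s(j)\dd s$ — and then checking that the net scalar is exactly the one annihilated by $Z_t(j)$ after conditioning. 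Keeping the auxiliary identities $\beta=g\gamma$, $\beta_s(j)=0$ for $s>T(j)$, and $\lambda^{\S-j|\D}_s(j)=\lambda_s(j)+\alpha_s(j)A^{(\S-j)-\D}_s(j)+\beta_s(j)B^{\S-j|\D}_s(j)$ straight is where the computation really lives.
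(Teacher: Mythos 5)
Part (a) of your proposal is correct and is essentially the paper's own argument: the pathwise identity $L^{\S|\D}_t\I_{\{\tau^B(j)<t\}}=L^{\S|\D\cup j}_t\I_{\{\tau^B(j)<t\}}$, conditioning first on $\G^{\S-\D-j}_t$, and the evaluation $\bar\P_\C(\tau^B(j)<t\mid\G^{\S-\D-j}_t)=p_t(j)\I_{\{T(j)<t\}}$ are exactly the steps used there. The decomposition $\{\tau^A(j)<t\}=\bigl(\{\tau(j)\geq t\}\cup\{\tau^B(j)<t\}\bigr)^c$ in part (b) is also the paper's.

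The problem is your treatment of the middle term $\bar\E_\C[L^{\S|\D}_t\I_{\{\tau(j)\geq t\}}\mid\F_t]$ — precisely the step the paper asserts without justification. You claim $L^{\S|\D}_t=K_tL^{\S-j|\D}_t$ on $\{\tau(j)\geq t\}$ with $K_t$ an $\F_t$-measurable factor ``built precisely so as to cancel $Z_t(j)$''. Carrying out the bookkeeping you describe: the extra $i=j$ stochastic exponential in $L^{\S|\D}$ contributes $\exp(-\int_0^tA^{\S-j-\D}_s(j)\alpha_s(j)\,ds)$ on that event; the replacement of $\bar n^{\C}(j)=n(j)$ by $\bar n^{\C\cup j}(j)=n(j)+\int_0^{\cdot\wedge T(j)}\beta_s(j)\,ds$ contributes $\exp(-\int_0^{t\wedge T(j)}B^{\S-j|\D}_s(j)\beta_s(j)\,ds)$; and passing from $\Lambda^{\S|\D}(\C)$ to $\Lambda^{\S-j|\D}(\C\cup j)$ contributes $\exp(+\Lambda^{\S-j|\D}_t(j))$. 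Since $\lambda^{\S-j|\D}(j)=\lambda(j)+\alpha(j)A^{\S-j-\D}(j)+\beta(j)B^{\S-j|\D}(j)$, the net factor is $K_t=e^{\Lambda_t(j)}$. This process is continuous, whereas $1/Z_t(j)=e^{\Gamma_t(j)}$ jumps at $T(j)$, so $K_t$ cannot cancel $Z_t(j)$; in fact $K_tZ_t(j)=\mathcal E_t(\nu(j))$. Your second mechanism — agreement of the $(\FF,\bar\P_\C)$- and $(\FF,\bar\P_{\C\cup j})$-projections — is true, but for exactly the reason you give (the density $\mathcal E_t(\nu(j))$ is $\F_t$-measurable, so Bayes makes the two $\F_t$-conditional expectations literally equal) it is a no-op and cannot absorb the leftover $\mathcal E_t(\nu(j))$. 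You are in effect counting the change of measure twice. Done carefully, your argument produces $\mathcal E_t(\nu(j))\,\ell^{\S-j|\D}_t$ rather than $\ell^{\S-j|\D}_t$; the no-contagion sanity check $\phi^A\equiv\phi^B\equiv 0$ exhibits the discrepancy explicitly, since there the left-hand side is $e^{-\Lambda_t(\C)}Z_t(j)=\mathcal E_t(\nu(j))e^{-\Lambda_t(\C\cup j)}$ while $\ell^{\S-j}_t=e^{-\Lambda_t(\C\cup j)}$. So as written the step fails, and the factor $\mathcal E_t(\nu(j))$ must be accounted for (or shown to be trivial, e.g.\ when $p(j)\equiv 0$) before \eqref{BB2} can be asserted in the stated form.
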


\begin{proof} 

 We prove (\ref{BB1}). It can be directly checked that for $j\notin\D$,  $L_t^{\S|\D} \I_{\{\tau^B(j)<t\}}=L_t^{\S|\D\cup j} \I_{\{\tau^B(j)<t\}}$. Therefore (using same path as in the proof of Proposition \ref{MainProp} (b)):
\begin{align*}
\bar\E_\C[L_t^{\S|\D} \I_{\{\tau^B(j)<t\}}|\F_t]
&=\bar \E_\C[L_t^{\S|\D\cup j} \I_{\{\tau^B(j)<t\}}|\F_t]\\
&=\bar \E_\C\left [L_t^{\S|\D\cup j}\bar \P_C\left( \tau^B(j)<t |\G^{\S-\D-j}\right)|\F_t\right ]\\
&=\bar \E_\C\left [L_t^{\S|\D\cup j} p_t(j)\I_{\{T(j)<t\}}|\F_t\right ]=\bar \E_\C\left [L_t^{\S|\D\cup j}|\F_t\right ] p_t(j)\I_{\{T(j)<t\}}\\
&= \ell_t^{\S|\D\cup j}p_t(j)\I_{\{T(j)<t\}}.
\end{align*}

On the other hand, we have 
\[
\{\tau^A(j)<t\}=\left(\{\tau(j)\geq t\}\cup\{\tau^B(j)<t\}\right)^c.
\]so that (again with $\D\subset\S$, $j\in\S-\D$):

\begin{align*}
\bar \E_\C[L^{\S|\D}_t\I_{\{\tau^A(j)<t\}}|\F_t]&= \ell^{\S|\D}_t-\bar \E_\C[L^{\S|\D}_t \I_{\{\tau(j)\geq t\}}|\F_t] - \bar \E_\C[L^{\S|\D}_t \I_{\{\tau^B(j)<t\}}|\F_t]\\
&= \ell^{S|\D}_t- \ell^{\S-j|\D}_t-  \ell^{\S|\D\cup j}_tp_t(j)\I_{\{T(j)<t\}}.
\end{align*}
\end{proof}

\subsection{Proof of Theorem \ref{MainThm}}  We denote by  $ \ell^{\S|\D}$ (resp. $ \ell^{\S}$) is  the $(\FF,\bar \P_\C)$ optional projection of $L^{\S|\D}$ (resp. $L^{\S}$). Then the expressions (\ref{ellSD}) and (\ref{ellS}) are a consequence of Proposition \ref{MainProp}:
\[
p^{\S|\D}(t)=\bar \E_{\C} \left[L_t^{\S|\D}\prod_{j\in\D}p_t(j)\I_{\{T(j)\leq t\}}\right]=\bar \E_{\C} \left[\ell_t^{\S|\D}\prod_{j\in\D}p_t(j)\I_{\{T(j)\leq t\}}\right]
\]
It remains to determine the dynamics of  $ \ell^{\S|\D}$ and  $ \ell^{\S}$.

We notice that the stated dynamics of $ \ell^\S$ in (\ref{LS}) coincide with those of $ \ell^{\S|\emptyset}$, derived from  (\ref{L}), when taking $\D=\emptyset$. Therefore, it is only needed to prove that for a general $\D\subset \S$ the dynamics of $ \ell^{\S|\D}$ in (\ref{L}) are correct. 

To do so, we start from the SDE corresponding to $L^{\S|\D}$.  From (\ref{LSDformula}), we have:
\begin{align*}
L^{\S|\D}_t=& \;1+\sum_{i\in \S-\D}\int_0^t L^{\S|\D}_{s-}A^{\S-\D}_s(i)dm_s(i)+ \sum_{i\in \N}\int_0^tL^{\S|\D}_{s-} B^{\S|\D}_s(i)d\bar n^{\C}_s(i)\\ &-\sum_{i\in\C}\int_0^tL^{\S|\D}_{s-}\lambda_s^{\S|\D}(i)ds-\sum_{i\in\D}\int_0^{t\wedge T(i)}  L^{\S|\D}_{s-}\alpha_s(i)A^{\S-\D}_s(i)ds
\end{align*}
The process $ \ell^{\S|\D}$ being the  $(\FF,\bar \P_\C)$ optional projections of   $L^{\S|\D}$,  for finding  $\ell^{\S|\D}$, we  compute the $(\FF,\bar \P_\C)$ optional projections of each term on the right hand side of the above expression. It is important to emphasise that the filtration $\FF$ is immersed in the filtration $\GG^{\S-\D}$ under the measure $\bar\P_\C$ (see Lemma \ref{filtrations} (c)). Therefore, we can use the classical projection formulas summarised in the Appendix (Proposition \ref{bremyorproj} and Lemma \ref{projLemma}).

First, we have:
\begin{lem} For all $i\in\S-\D$ and $t\geq 0$: 
\begin{align*}
\bar\E_{\C} \left [\int_0^t L^{\S|\D}_{s-}A^{\S-\D}_s(i)dm_s(i)|\F_t\right ]&= 0.
\end{align*}
\end{lem}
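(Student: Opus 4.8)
The plan is to show that the process
\[
M_t:=\int_0^t L^{\S|\D}_{s-}A^{\S-\D}_s(i)\,dm_s(i)
\]
is a $(\GG^{\S-\D},\bar\P_\C)$ local martingale, and then that it is in fact a genuine martingale, so that its $(\FF,\bar\P_\C)$ optional projection vanishes identically (since $M_0=0$). For the local martingale property, recall from (\ref{ni}) that $m(i)$ is a $(\GG^{\C},\P^0)$ martingale; since $i\in\S-\D$ and the Radon--Nikod\'ym density $d\bar\P_\C/d\P^0$ is $\FF$-adapted (Lemma \ref{filtrations}(c)), $m(i)$ remains a martingale under $\bar\P_\C$ and in the larger filtration $\GG^{\S-\D}$ by the immersion property $\FF\overset{\bar\P_\C}{\hookrightarrow}\GG^{\S-\D}\overset{\bar\P_\C}{\hookrightarrow}\GG^\N$. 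Moreover the integrand $L^{\S|\D}_{s-}A^{\S-\D}_s(i)$ is $\GG^{\S-\D}$-predictable (it is left-continuous and $\GG^{\S-\D}$-adapted, as established in the identification of $L^{\S|\D}$ in Proposition \ref{MainProp}(b)), so the stochastic integral is a $(\GG^{\S-\D},\bar\P_\C)$ local martingale.

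The next step is to upgrade "local martingale" to "martingale'' so that the optional projection is legitimately zero. For this I would argue that $M$ is of bounded variation and integrable: $m(i)$ jumps by at most $1$ at the single time $\tau^A(i)$ and has an absolutely continuous compensator $\int_0^{t\wedge\tau(i)}\alpha_s(i)\,ds$ with $\alpha$ bounded; and the integrand $L^{\S|\D}_{s-}A^{\S-\D}_s(i)$ is bounded on compacts because $\phi^A$ and $\alpha$ are bounded and the stochastic exponentials defining $L^{\S|\D}$ have bounded $\bar\P_\C$-expectation (the same boundedness argument used to justify that $\bar\P_\C$ is well defined in Definition \ref{PC}). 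Hence $\bar\E_\C[\sup_{s\le t}|M_s|]<\infty$ and $M$ is a true martingale; equivalently one may invoke Proposition \ref{bremyorproj}, which tells us that the $(\FF,\bar\P_\C)$ optional projection of a $(\GG^{\S-\D},\bar\P_\C)$ martingale (under immersion) is an $(\FF,\bar\P_\C)$ martingale, and a martingale started at $0$ that is itself a stochastic integral against $m(i)$ projects to $0$ because $m(i)$ has no $\FF$-martingale part — more precisely, the projected integrand is $\FF$-predictable and integrates against the $\FF$-optional projection of $m(i)$, which is a constant since $m(i)$ is orthogonal to all $\FF$-martingales (being supported on the totally inaccessible time $\tau^A(i)$ that avoids $\FF$-stopping times, cf. Corollary \ref{avoid}).

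The cleanest route, and the one I would actually write, is the last one: apply the projection result of Br\'emaud--Yor (Theorem \ref{bremyorproj}) together with Lemma \ref{projLemma} from the Appendix. Under $\FF\overset{\bar\P_\C}{\hookrightarrow}\GG^{\S-\D}$, the optional projection of $\int_0^\cdot H_s\,dm_s(i)$ onto $\FF$ equals $\int_0^\cdot {}^oH_s\,d\widetilde m_s(i)$ where $\widetilde m(i)$ is the $\FF$-optional projection of $m(i)$; but since $m(i)$ is a $\GG^{\S-\D}$-martingale orthogonal to the immersed filtration $\FF$ (its jump occurs at $\tau^A(i)$, which by Corollary \ref{avoid} avoids every finite $\FF$-stopping time, so $\tau^A(i)$ is $\FF$-totally inaccessible and carries no $\FF$-martingale mass), its $\FF$-optional projection is the constant $m_0(i)=0$. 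Therefore the whole integral projects to $0$.

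The main obstacle is the measurability/predictability bookkeeping: one must be careful that $L^{\S|\D}_{s-}A^{\S-\D}_s(i)$ is genuinely $\GG^{\S-\D}$-predictable (not merely $\GG^\S$-adapted) for the stochastic integral against the $\GG^{\S-\D}$-martingale $m(i)$ to make sense and for the immersion-based projection formula to apply; this is exactly the content of the identification (\ref{LSD})--(\ref{LSDformula}) in Proposition \ref{MainProp}(b), so it is available, but it is the point that needs to be invoked explicitly rather than glossed over. Everything else is a routine application of the appendix results on optional projections under immersion.
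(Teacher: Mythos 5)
You have assembled the correct ingredients (immersion from Lemma \ref{filtrations}(c), $\GG^{\S-\D}$-predictability of the integrand, avoidance of $\tau^A(i)$ from Corollary \ref{avoid}, and Lemma \ref{projLemma}), and your conclusion is right, but the deductive step you actually write down is not valid. The identity
\[
{}^o\Bigl(\int_0^\cdot H_s\,dm_s(i)\Bigr)=\int_0^\cdot {}^oH_s\,d\,\widetilde m_s(i)
\]
is not what Proposition \ref{bremyorproj} provides: part (i) requires the integrator to be a local martingale of the \emph{small} filtration (here $m(i)$ is not even $\FF$-adapted), and part (ii) requires the integrand to be adapted to the small filtration (here $L^{\S|\D}_{\cdot-}A^{\S-\D}(i)$ is not $\FF$-adapted). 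Neither case applies, and the ``mixed'' formula obtained by projecting integrand and integrator separately is false in general: the projection of $\int H\,dM$ picks up a compensated covariation contribution unless something kills it, and the avoidance hypothesis in Lemma \ref{projLemma} is exactly what kills it. That lemma is therefore the tool to apply directly, not a corollary of Br\'emaud--Yor. Your preliminary route (``upgrade the local martingale to a true martingale so that the projection is zero'') is also a non sequitur: under immersion, the $\FF$-optional projection of a $\GG^{\S-\D}$-martingale started at $0$ is an $\FF$-martingale started at $0$, which need not vanish (project a Brownian motion onto its own filtration).

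The paper's proof is precisely the instantiation you omit: apply Lemma \ref{projLemma} with $\rho=\tau^A(i)$, $N=m(i)$, $H=L^{\S|\D}_{\cdot-}A^{\S-\D}(i)$ (bounded and $\GG^{\S-\D}$-predictable by the identification in Proposition \ref{MainProp}(b)), and with the intermediate filtration $\HH=\GG^{\S-\D-i}$. The hypotheses hold because $\GG^{\S-\D-i}$ is immersed in $\GG^{\S-\D}$ under $\bar\P_\C$ (Lemma \ref{filtrations}(c)) and $\tau^A(i)$ avoids all finite $\GG^{\S-\D-i}$-stopping times (Corollary \ref{avoid} gives avoidance of $\GG^{\N-i}$-stopping times, hence of those of any subfiltration). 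This yields that the $(\GG^{\S-\D-i},\bar\P_\C)$-optional projection of the integral is null, and since $\F_t\subset\G^{\S-\D-i}_t$ the stated $\F_t$-conditional expectation vanishes by the tower property. Note that the avoidance must be checked with respect to this intermediate filtration, not merely with respect to $\FF$; once you restructure the argument around this explicit application of Lemma \ref{projLemma}, the proof closes.
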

\proof
We fix some $i\in\S-\D$. We notice that $m(i)$ is a $(\GG^{\S-\D},\bar\P_\C)$ martingale and the process $H(i):=L^{\S|\D}_{\cdot -}A^{\S-\D}(i)$ is $\GG^{\S-\D}$-predictable. As an application of the Lemma \ref{projLemma}, it follows that the $(\GG^{\S-\D-i},\bar \P_\C)$-optional projection of $\int H(i)dm(i)$ is null (therefore also the $(\FF,\bar \P_\C)$-optional projection). Indeed, taking  $\rho=\tau^A(i)$, $\HH:=\GG^{\S-\D-i}$,  we observe that  the conditions for applying  Lemma \ref{projLemma} are fulfilled:  the filtrations $\GG^{\S-\D-i}$ and $\GG^{\S-\D}$ are immersed under $\bar\P_\C$ (Lemma \ref{filtrations} (c)), the process $H(i)$ is here bounded and $\tau^A(i)$ avoids all $\GG^{\S-\D-i}$ stopping times.
\finproof

Secondly:
\begin{lem} For all $i\in\N$ and $t\geq 0$,
\begin{align*}
\bar\E_\C &\left[ \int_0^t L^{\S|\D}_{s^-}  B^{\S|\D}_s(i)d\bar  n^\C_s(i)  |\F_t\right ] = \int_0^t \bar\E_\C\left[L^{\S|\D}_{s^-}B^{\S|\D}_s(i)|\F_s\right ]\ d\bar n^\C_s(i) 
\end{align*}
\end{lem}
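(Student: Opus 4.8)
The plan is to treat this as a statement about optional projections commuting with stochastic integration against a martingale, under the immersion property. Fix $i\in\N$. The integrand is $H^{\S|\D}(i):=L^{\S|\D}_{\cdot^-}B^{\S|\D}_\cdot(i)$, which is $\GG^{\S-\D}$-predictable (since $L^{\S|\D}$ is $\GG^{\S-\D}$-adapted with left-continuous version available, and $B^{\S|\D}(i)$ is $\GG^{\S-\D}$-predictable by its defining formula (\ref{BSDformula})), and $\bar n^\C(i)$ is an $(\FF,\bar\P_\C)$-martingale that remains a $(\GG^{\S-\D},\bar\P_\C)$-martingale because $\FF\overset{\bar\P_\C}{\hookrightarrow}\GG^\S$, hence a fortiori $\FF\overset{\bar\P_\C}{\hookrightarrow}\GG^{\S-\D}$ (Lemma \ref{filtrations} (c)). So $\int_0^\cdot H^{\S|\D}_s(i)\,d\bar n^\C_s(i)$ is a $(\GG^{\S-\D},\bar\P_\C)$-local martingale whose $(\FF,\bar\P_\C)$-optional projection we must compute.

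First I would invoke the projection result for stochastic integrals under immersion that is collected in the appendix (Proposition \ref{bremyorproj} together with Lemma \ref{projLemma}): when $\FF$ is immersed in a larger filtration $\GG$ under a measure $\Q$, and $N$ is an $(\FF,\Q)$-martingale (hence also a $(\GG,\Q)$-martingale), the $(\FF,\Q)$-optional projection of $\int_0^\cdot H_s\,dN_s$ for a bounded $\GG$-predictable $H$ equals $\int_0^\cdot {}^{o,\FF}H_s\,dN_s$, i.e.\ projection passes through the integral, leaving $N$ untouched and replacing the integrand by its $\FF$-optional projection. Applying this with $\GG=\GG^{\S-\D}$, $\Q=\bar\P_\C$, $N=\bar n^\C(i)$, and $H=H^{\S|\D}(i)$ yields exactly
\[
\bar\E_\C\!\left[\int_0^t L^{\S|\D}_{s^-}B^{\S|\D}_s(i)\,d\bar n^\C_s(i)\,\Big|\,\F_t\right]=\int_0^t \bar\E_\C\!\left[L^{\S|\D}_{s^-}B^{\S|\D}_s(i)\,\big|\,\F_s\right]d\bar n^\C_s(i),
\]
where the inner conditional expectation is the $(\FF,\bar\P_\C)$-optional projection of the predictable integrand, which can be evaluated at $\F_s$ (predictable vs.\ optional projections agree here up to the usual modification, and for the left-limit process $L^{\S|\D}_{s^-}$ one uses $\bar\E_\C[L^{\S|\D}_{s^-}\,|\,\F_{s^-}]$ then identifies with $\F_s$ since $B^{\S|\D}(i)$ has at worst countably many jumps off $\FF$). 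The boundedness hypothesis needed to apply the appendix lemma is met because $\phi^B$, $1/\gamma(i)$, and the relevant martingale densities are bounded by the standing assumptions (the impact matrices and $\alpha,\gamma$ are bounded, and $\bar\P_\C$ is an equivalent measure with bounded density locally), so $H^{\S|\D}(i)$ is locally bounded and one passes to the limit along a localizing sequence.

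The main obstacle I anticipate is not the projection formula itself but verifying cleanly that the integrand $L^{\S|\D}_{\cdot^-}B^{\S|\D}_\cdot(i)$ is $\GG^{\S-\D}$-predictable (so that Lemma \ref{projLemma} applies) and that the $(\FF,\bar\P_\C)$-optional projection of this predictable process coincides with the stated $\F_s$-conditional expectation rather than an $\F_{s^-}$-conditional expectation --- i.e.\ the bookkeeping around left-limits and the predictable-versus-optional projection subtlety. Here one leans on the fact, already used repeatedly in the preceding propositions, that the only jumps of the relevant processes off the filtration $\FF$ occur at the totally inaccessible times $T(k)$, whose $(\FF,\bar\P_\C)$-compensators are absolutely continuous, so that $\bar\E_\C[L^{\S|\D}_{s^-}B^{\S|\D}_s(i)\,|\,\F_s]=\bar\E_\C[L^{\S|\D}_{s}B^{\S|\D}_s(i)\,|\,\F_s]$ $d\bar n^\C(i)$-a.e., and one is free to write it in the form displayed. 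Once predictability and the boundedness/localization are in hand, the conclusion is an immediate instance of the appendix results, so I expect the lemma to be short.
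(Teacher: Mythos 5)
Your argument is correct and is essentially the paper's proof: the paper obtains the identity as a direct application of Proposition \ref{bremyorproj} (i) with $\HH:=\FF$, $\GG:=\GG^{\S-\D}$, $M:=\bar n^\C(i)$ and $G:=L^{\S|\D}_{\cdot^-}B^{\S|\D}_\cdot(i)$, the required immersion $\FF\overset{\bar\P_\C}{\hookrightarrow}\GG^{\S-\D}$ coming from Lemma \ref{filtrations} (c). Your detour through Lemma \ref{projLemma} and the predictable-versus-optional bookkeeping is superfluous here, since Proposition \ref{bremyorproj} (i) only asks that the integrand be $\GG^{\S-\D}$-adapted and bounded, but it does not change the substance of the argument.
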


\proof
It is a direct application of Proposition \ref{bremyorproj} (i), with $\HH:=\FF$, $\GG:=\GG^{\S-\D}$,  $M:=\bar n^\C(i)$ and $G:=L^{\S|\D}_{s^-}  B^{\S|\D}_s(i)$.
\finproof

It follows form the last two lemmas that the $ \ell^{\S|\D}$ writes:
\begin{align}\nonumber
 \ell^{\S|\D}_t=& \;1+\sum_{i\in \N} \int_0^t\bar\E_\C\left[L^{\S|\D}_{s^-}B^{\S|\D}_s(i)|\F_s\right ]\ d\bar n^\C_s(i)- \sum_{i\in\C} \int_0^t \bar\E_\C\left[L^{\S|\D}_{s^-} \lambda_s^{\S|\D}(i)|\F_s\right ] ds\\\label{prelimLhat}
& -\sum_{i\in\D} \int_0^{t\wedge T(i)} \alpha_s(i)\bar\E_\C\left[L^{\S|\D}_{s^-}A^{\S-\D}_s(i)|\F_s\right ] ds 
\end{align}

The expression above contains some conditional expectations that we now compute explicitly, with the help of Proposition \ref{BB}.

 For $i\in\D$ and with $\alpha_t>0$:

\begin{align*}
 \bar\E_{\C}\left[ L^{\S|\D}_{t^-}A^{\S-\D}_t(i)|\F_t\right] &= \bar\E_{\C}\left[ L^{\S|\D}_{t-}\frac{1}{\alpha_t(i)}\sum_{j\in \S-\D}\phi^A_{t}(i,j)\I_{\{\tau^A(j)< t\}}|\F_t\right]\\
 &=\frac{1}{\alpha_t(i)} \sum_{j\in \S-\D}\phi^A_{t}(i,j)\bar\E_\C\left[  L^{\S|\D}_{t^-}\I_{\{\tau^A(j)< t\}}|\F_t\right]\\
 &=\sum_{j\in \S-\D}  \left( \ell^{\S|\D}_{t^-}- \ell^{\S-j|\D}_{t^-} -  \ell^{\S|\D\cup j}_{t^-} p_t(j)\I_{\{T(j)<t\}}\right)\frac{ \phi^A_{t}(i,j)}{\alpha_t(i)}
\end{align*}
(we used Proposition \ref{BB} in the last step).

On the other hand, for $i\in\N$ and with $\gamma_t>0$:
\begin{align*}
 \bar\E_{\C}\left[ L^{\S|\D}_{t^-}B^{\S|\D}_t(i)|\F_t\right] &=  \frac{1}{\gamma_t(i)} \bar\E_{\C}\left[ L^{\S|\D}_{t^-}\left(\sum_{j\in \S-\D}\phi^B_{t}(i,j)\I_{\{\tau^B(j)< t\}}+\sum_{j\in\D}\phi^B_{t}(i,j)\I_{\{T(j)< t\}}\right)|\F_t\right]\\
 &=\sum_{j\in \S}  \left(\I_{\{j\in\D\}} \ell^{\S|\D}_{t^-}+\I_{\{j\in\S-\D\}} \ell^{\S|D\cup j}_{t^-}p_t(j)\right)\frac{\phi^B_{t}(i,j) }{\gamma_t(i)}\I_{\{T(j)<t\}}.\\
\end{align*}

Finally, for $i\in \C$, and using the two above computed quantities:
\begin{align*}
&\bar\E_\C \left[L^{\S|\D}_{t^-}\lambda^{\S|\D}_t(i)|\F_t\right]=\\
&=\bar\E_\C \left[L^{\S|\D}_{t^-}\left\{\lambda_t(i)+ \alpha_t(i)A^{\S-\D}_t(i)+\beta_t(i)B^{\S|\D}_t(i)\right\}|\F_t\right]\\
&= \ell^{\S|\D}_{t^-}\lambda_t(i)+ \alpha_t(i)\bar\E_\C \left[ L^{\S|\D}_{t^-}A^{\S-\D}_t(i)|\F_t\right]+ \beta_t(i)\bar\E_\C \left[ L^{\S|\D}_{t^-}B^{\S-\D}_t(i)|\F_t\right]\\
&= \ell^{\S|\D}_{s^-}\left(\lambda_t(i)+ \phi^A_{s}(i,\S-\D)+g_t(i)\sum_{j\in\D}\phi^B(i,j)\I_{\{T(j)<t\}}\right) \\
&\quad  - \sum_{j\in\S-\D}  \left[ \ell^{S-j|\D}_{t^-} \phi^A_t(i,j)  +  \ell^{\S|\D\cup j}_{t^-} \left [\phi^A_t(i,j)-g_t(i)\phi^B_{t}(i,j)\right] p_t(j)\I_{\{T(j)<t\}} \right].
\end{align*}

We now replace  the conditional expectations  in (\ref{prelimLhat}) with the terms computed above; we obtain the following:

\begin{align*}
d \ell^{\S|\D}_t&= \sum_{i\in \N} \sum_{j\in \S}\left(\I_{\{j\in \D\}}  \ell^{\S|\D}_{t^-}+\I_{\{j\in\S-\D\}}  \ell^{\S|\D\cup j}_{t^-} p_t(j)\right)\frac{\phi^B_{t}(i,j)}{\gamma_t(i)}\I_{\{T(j)<t\}}d\bar n^{\C}_t(i)\\
&-\sum_{i\in\C}\Big\{\ell^{\S|\D}_{t^-}\left(\lambda_t(i)+ \phi^A_{t}(i,\S-\D)+g_t(i)\sum_{j\in\D}\phi^B_t(i,j)\I_{\{T(j)<t\}}\right)  \\
&\quad\quad \quad - \sum_{j\in\S-\D}  \left[ \ell^{S-j|\D}_{t^-} \phi^A_t(i,j)  +  \ell^{\S|\D\cup j}_{t^-} \left [\phi^A_t(i,j)-g_t(i)\phi^B_{t}(i,j)\right] p_t(j)\I_{\{T(j)<t\}} \right]\Big \}dt\\
&-\sum_{i\in\D}\I_{\{T(i)>t\}}\sum_{j\in \S-\D}  \phi^A_{t}(i,j) \left( \ell^{\S|\D}_{t-}- \ell^{\S-j|\D}_{t-} -  \ell^{\S|\D\cup j}_{t-} p_t(j)\I_{\{T(j)<t\}}\right)dt
\end{align*}


We rearrange terms and  use:
\[
 \bar n^\C_t(i)= 
 \begin{cases}
 n_t(i)+\int_0^{t\wedge T(i)} \gamma_s(i) g_s(i)ds & i \in \C \\
 n_t(i) & i \in \S
 \end{cases}
 \] which follows from (\ref{nc}) and the remark thereafter.
Also, we denote 
\[
\psi_t^A(i,j):=
\begin{cases} 
\phi_t^A(i,j) & i \in \C \\
\phi_t^A(i,j)\I_{\{T(i)>t\}} & i \in \D
\end{cases}
\]

We obtain the dynamics of $ \ell^{\S|\D}$:
\begin{align*}
d \ell^{\S|\D}_t=
&\;  \ell^{\S|\D}_{t^-}\left\{\left[ \lambda_t(\C)+ \psi^A_{t}(\C\cup\D,\S-\D)\right]dt+\sum_{j\in\D} \I_{\{T(j)<t\}}\sum_{i\in \N} \left(\frac{\phi^B_{t}(i,j)}{\gamma_t(i)}\right)d n_t(i)\right\}\\
&+  \sum_{j\in \S-\D} \ell^{\S|\D\cup j}_t p_t(j)\I_{\{T(j)<t\}}\left\{\psi_t^A(\C\cup\D,j)dt +\sum_{i\in \N}  \left(  \frac{\phi^B_{t}(i,j)}{\gamma_t(i)}\right) dn_t(i)\right\}\\
&+  \sum_{j\in \S-\D} \ell^{\S-j|\D}_t\psi_t^A(\C\cup\D,j)dt.
\end{align*}
This is nothing but another form of (\ref{L}), so that the result is proved.

\appendix
\renewcommand{\thesubsection}{\Alph{subsection}}
\section{\texorpdfstring{$\F$-Conditional Markov chains}{F-Conditional Markov chains} }\label{AppendixDef}
A heuristic description of a process that is a Markov chain conditionally to a sigma-field $\F$, is to consider a two step randomisation procedure. In the first step, one draws at random the trajectory of a ``driving process'', i.e., a stochastic process that  is interpreted as the stochastic environment and that generates the filtration $\FF=(\F_t)_{t\geq 0}$ with $\F:=\F_\infty$. Once a whole trajectory is selected, one generates a Markov chain  with transition rates being functions of the current state of the driving process. A popular example of conditional Markov chain is the doubly stochastic Poisson process (see \cite{Sny75}). However, in the framework of default contagion, we consider a process $\Y$ with state space $I=\{0,1\}^n$, hence exclude the doubly stochastic Poisson processes.

Now,  given a probability space $(\Omega, \F,\P)$, let us formalise the definition of a $\F$-conditional Markov chain with state space $I$. We denote $2^I$ the set of all subsets of $I$; $\F$ is  a sigma-field.

An $\F$-conditional transition probability $\pi$ is a map from $I\times2^I\times\Omega $ into $[0,1]$ such that
\begin{itemize}
\item[(i)] for all $\mathbf x\in I$ and $\omega\in\Omega$, the map $A\to \pi(\mathbf x,A,\omega)$ is a probability measure on $2^I$.
\item[(ii)] for all $A\in 2^I$ and $\omega\in\Omega$, the map $\mathbf x\to \pi(\mathbf x,A,\omega)$ is  $2^I$ measurable.
\item[(iii)] for all $\mathbf x\in I$ and $A\in 2^I$, the map $\omega \to \pi(\mathbf x,A,\omega)$ is  $\F$ measurable.
\end{itemize}

We say that $\Y$ is a $\F$-conditional Markov chain, if for any fixed $s$ and $t$ with $0\leq t\leq s$, there is an $\F$-conditional transition probability:
\begin{align*}
p_{t,s}:\;I\times2^I\times\Omega & \to [0,1]\\
(\mathbf x,A,\omega)&\to p_{t,s}(\mathbf x,A)(\omega)
\end{align*}
 such that the following holds:
\[
\P(\Y_s\in A|\F\vee \sigma(\Y_u,u\leq t))(\omega)=p_{t,s}(\Y_t,A)(\omega).
\]
In particular, $p_{t,s}(\Y_t,A)$ is $\F$-measurable. Hence for fixed $\omega \in \Omega$, $p_{t,s}(\mathbf x,\mathbf y)$ is the transition function of a time inhomogeneous Markov chain from state $\mathbf x$ to state $\mathbf y$. By being $\F$-measurable, the transition functions are random, hence capturing the dependence of the process $\Y$ on  the stochastic environment.

In general, the distribution of the process $\Y$ conditionally on $\F$, can be synthesised by the so-called instantaneous transition rates, or intensities of transition. When these exist, they are defined as follows. For $\mathbf x\in I$ and $\mathbf y\in I$:
\[
q_t(\mathbf x,\mathbf y)=\lim_{\epsilon \to 0}\frac{p_{t,t+\epsilon}( \mathbf  x,\mathbf  y)}{\epsilon}.
\] 
As the transition functions are random variables, the convergence considered is the $\P$-almost sure convergence.

\section{Basic facts in enlargement of filtrations}\label{Appendix}

Here we summarise the results from the theory of enlargements of a filtration that were useful in this paper. 

We assume we are given a filtered probability space $\left(\Omega,\F,\HH=(\mathcal H_t)_{t\geq 0},\P\right)$  satisfying the usual assumptions. 

\subsection{Progressive enlargement}

\begin{defn}
A random time $\rho$ is a nonnegative random variable $\rho:\left(\Omega,\mathcal{F}\right)\rightarrow[0,\infty]$.
\end{defn}

The  Az\'ema supermartingale associated to $\rho$ and relative to $(\HH,\P)$ is the $\HH$ supermartingale
\begin{equation}
Z_{t}^{\rho }=\mathbb{P}( \rho >t\mid \mathcal{H}_{t})
\label{surmart}
\end{equation}
chosen to be c\`{a}dl\`{a}g, associated with $\rho $\ by Az\'{e}ma (Az\'ema \cite{azema}).  We note that the supermartingale $\left(Z_{t}^{\rho}\right)$ is the $\HH$-optional projection of $\mathbf{1}_{[0,\rho[}$.  We also introduce the $\HH$ dual optional and dual predictable projections of the process $\I_{\left\{ \rho \leq t\right\} }$, denoted respectively by $A_{t}^{\rho }$ and $a_{t}^{\rho}$.
Then,
\[
Z_{t}^{\rho }=\E^\P[A^\rho_\infty|\mathcal H_t]-A_{t}^{\rho}.
\]
while  the Doob-Meyer decomposition of (\ref{surmart}) writes:
\begin{equation}\label{doobdecZ}
Z_{t}^{\rho }=m_{t}^{\rho }-a_{t}^{\rho}.
\end{equation}

We enlarge the initial filtration $\HH$ with the process $(\rho\wedge t) _{t\geq 0}$, so that the new enlarged filtration $\mathbb{H}^{\rho}$\  is the smallest filtration (satisfying the usual assumptions) containing $\HH$\ and making $\rho$ a stopping time, that is:
\[
\mathcal H^\rho_t=\mathcal{K}_{t+},\text{ where }\mathcal{K}_{t}=\mathcal{H}_{t}\vee\sigma(\rho\wedge t).
\]

Now we recall a theorem which is useful in constructing the $(\HH^\rho,\P)$ compensator process of $\rho$. 
\begin{thm}[Jeulin-Yor \cite{yorjeulin}]\label{calccomp}
Let $H$ be a bounded $\HH^\rho$ predictable process. Then
$$H_{\rho}\mathbf{1}_{\{\rho\leq t\}}-\int_{0}^{t\wedge\rho}\dfrac{H_{s}}{Z_{s-}^{\rho}}da_{s}^{\rho}$$is
a $\HH^\rho$ martingale.
\end{thm}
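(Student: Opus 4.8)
The strategy is to check directly that the displayed process is an $\HH^\rho$-martingale, after reducing to a clean dual-projection identity. First I would normalise $H$: by Jeulin's lemma on $\HH^\rho$-predictable processes (the result of \cite{yorjeulin} already used elsewhere in the paper), there is a bounded $\HH$-predictable process $h$ with $H_t\mathbf{1}_{\{t\le\rho\}}=h_t\mathbf{1}_{\{t\le\rho\}}$; then $H_\rho\mathbf{1}_{\{\rho<\infty\}}=h_\rho\mathbf{1}_{\{\rho<\infty\}}$ and the integrand coincides on $]0,\rho]$, so the process in the statement is left unchanged if $H$ is replaced by $h$ (integrability of both terms is clear once one knows, as recalled below, that $a^\rho$ is carried by $\{Z^\rho_->0\}$). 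It then suffices to show that for $s<t$ and every bounded $\HH^\rho_s$-measurable $U$,
\[
\E\!\left[U\Bigl(h_\rho\mathbf{1}_{\{s<\rho\le t\}}-\int_{]s,t]}\mathbf{1}_{\{u\le\rho\}}\frac{h_u}{Z^\rho_{u-}}\,da^\rho_u\Bigr)\right]=0 .
\]
On $\{\rho\le s\}$ the bracket is identically $0$ (no further increment of $\mathbf{1}_{\{\rho\le\cdot\}}$, and $\mathbf{1}_{\{u\le\rho\}}=0$ for $u>s\ge\rho$), so by the standard description $\HH^\rho_s\cap\{\rho>s\}=\HH_s\cap\{\rho>s\}$ it is enough to take $U=\mathbf{1}_G\mathbf{1}_{\{\rho>s\}}$ with $G\in\HH_s$. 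Absorbing the predictable factor $\mathbf{1}_G\mathbf{1}_{]s,t]}(u)$ into $h$, i.e. writing $k_u:=\mathbf{1}_G\mathbf{1}_{]s,t]}(u)h_u$ (bounded, $\HH$-predictable), the whole claim reduces to the single identity
\[
(\star)\qquad \E\bigl[k_\rho\mathbf{1}_{\{\rho<\infty\}}\bigr]=\E\!\left[\int_{]0,\infty[}\mathbf{1}_{\{u\le\rho\}}\frac{k_u}{Z^\rho_{u-}}\,da^\rho_u\right]\qquad\text{for every bounded }\HH\text{-predictable }k .
\]

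To establish $(\star)$ I would use the projection properties of $a^\rho$. Since $a^\rho$ is by definition the $\HH$-dual predictable projection of the raw increasing process $u\mapsto\mathbf{1}_{\{\rho\le u\}}$, the left-hand side equals $\E[\int_{]0,\infty[}k_u\,da^\rho_u]$. For the right-hand side, $da^\rho$ is a predictable measure, so inside $\E\!\int(\cdot)\,da^\rho$ the merely-measurable factor $\mathbf{1}_{\{u\le\rho\}}=\mathbf{1}_{[0,\rho]}(u)$ may be replaced by its $\HH$-predictable projection while the predictable factor $k_u/Z^\rho_{u-}$ is untouched. The computational heart is the identity ${}^p\bigl(\mathbf{1}_{[0,\rho]}\bigr)_u=\P(\rho\ge u\mid\HH_{u-})=Z^\rho_{u-}$, the last equality following by Hunt's lemma from $Z^\rho_{u-}=\lim_{v\uparrow u}\P(\rho>v\mid\HH_v)$ and $\{\rho>v\}\downarrow\{\rho\ge u\}$. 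Together with the fact that $a^\rho$ is carried by $\{Z^\rho_->0\}$, this yields
\[
\E\!\left[\int_{]0,\infty[}\mathbf{1}_{\{u\le\rho\}}\frac{k_u}{Z^\rho_{u-}}\,da^\rho_u\right]=\E\!\left[\int_{]0,\infty[}Z^\rho_{u-}\,\frac{k_u}{Z^\rho_{u-}}\,da^\rho_u\right]=\E\!\left[\int_{]0,\infty[}k_u\,da^\rho_u\right],
\]
which is the left-hand side of $(\star)$. (For the elementary building blocks $k_u=\mathbf{1}_G\mathbf{1}_{]s,t]}(u)$ one may alternatively open the left-hand side as $\E[\mathbf{1}_G(Z^\rho_s-Z^\rho_t)]$ and invoke the Doob--Meyer decomposition \eqref{doobdecZ}, $Z^\rho=m^\rho-a^\rho$, to rewrite it as $\E[\mathbf{1}_G(a^\rho_t-a^\rho_s)]$; but one is still left with trading $\mathbf{1}_{\{u\le\rho\}}$ for $Z^\rho_{u-}$ under the $da^\rho$-integral.)

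The main obstacle is exactly this last step: justifying that, against the predictable finite-variation process $a^\rho$, the optional indicator $\mathbf{1}_{\{u\le\rho\}}$ can be traded for the predictable process $Z^\rho_{u-}$. It relies on three standard but genuinely nontrivial inputs — the predictable-projection formula for integrals driven by a predictable increasing process, the identification ${}^p(\mathbf{1}_{[0,\rho]})=Z^\rho_-$ through Hunt's lemma, and the absorption property that $a^\rho$ does not charge $\{Z^\rho_-=0\}$ (a consequence of the non-negative supermartingale $Z^\rho$, hence also its predictable part $a^\rho$, being constant after $Z^\rho$ first vanishes). Everything else — the two reductions and the two applications of the definition of $a^\rho$ as a dual projection — is routine bookkeeping, and the statement for a general bounded $\HH^\rho$-predictable $H$ follows immediately from the first reduction.
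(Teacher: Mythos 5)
The paper does not prove this statement: it is recalled in the appendix as a classical result with a citation to Jeulin--Yor \cite{yorjeulin}, so there is no internal proof to compare against. Your argument is correct and is essentially the standard proof of that result: reduce to an $\HH$-predictable integrand via the representation $H\mathbf{1}_{[0,\rho]}=h\mathbf{1}_{[0,\rho]}$, use $\mathcal H^\rho_s\cap\{\rho>s\}=\mathcal H_s\cap\{\rho>s\}$ to reduce the martingale property to the duality identity $(\star)$, and settle $(\star)$ by trading $\mathbf{1}_{[0,\rho]}$ for its predictable projection $Z^\rho_{-}$ under the predictable measure $da^\rho$, using that $da^\rho$ does not charge $\{Z^\rho_-=0\}$. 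The only point I would tighten is the identification ${}^p(\mathbf{1}_{[0,\rho]})=Z^\rho_-$: the Hunt-lemma limit along deterministic $v\uparrow u$ gives $\P(\rho\ge u\mid\mathcal H_{u-})=Z^\rho_{u-}$ for each fixed $u$, whereas the projection identity must hold at predictable stopping times; the same limiting argument along an announcing sequence $T_n\uparrow T$ (together with the predictable section theorem) supplies this, and is worth saying explicitly since $da^\rho$ may charge predictable jump times.
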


When one assumes that the random time $\rho$ avoids $\HH$ stopping times, then:
\begin{lem}[Jeulin-Yor \cite{yorjeulin}, Jeulin \cite{jeulin}]\label{lem:evitement}
If $\rho $\ avoids  $\HH$ stopping times,  then $A^{\rho }=a ^{\rho }$  and
$A^\rho$ is continuous. Therefore, the compensator of the process $\I_{\{\rho\leq t\}}$ is continuous.
\end{lem}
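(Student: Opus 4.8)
The plan is to argue in three steps: first show that the $\HH$ dual optional projection $A^\rho$ of $\I_{\{\rho\le t\}}$ has no jumps, exploiting the avoidance hypothesis; then deduce that $A^\rho$ is $\HH$-predictable, so that $A^\rho=a^\rho$ by uniqueness of the dual predictable projection; and finally read off continuity of the $(\HH^\rho,\P)$-compensator of $\I_{\{\rho\le t\}}$ from the Jeulin--Yor formula of Theorem~\ref{calccomp}.

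For the first step I would use that $A^\rho$ is characterised as the unique $\HH$-optional increasing process satisfying $\E[\int_0^\infty H_s\,dA^\rho_s]=\E[H_\rho\I_{\{\rho<\infty\}}]$ for every nonnegative $\HH$-optional process $H$. Testing this against $H=\xi\,\I_{[\![T]\!]}$, where $T$ is an arbitrary $\HH$-stopping time and $\xi\ge 0$ is $\mathcal H_T$-measurable, and using $\int_0^\infty\I_{\{s=T\}}\,d\I_{\{\rho\le s\}}=\I_{\{\rho=T<\infty\}}$, gives $\Delta A^\rho_T\,\I_{\{T<\infty\}}=\E[\I_{\{\rho=T\}}\mid\mathcal H_T]\,\I_{\{T<\infty\}}$ almost surely. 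Since $\rho$ avoids $\HH$-stopping times, $\P(\rho=T<\infty)=0$, hence $\Delta A^\rho_T=0$ a.s. Because $A^\rho$ is an $\HH$-adapted c\`adl\`ag increasing process, its jump set $\{\Delta A^\rho\neq 0\}$ is a thin optional set, hence contained in $\bigcup_n[\![T_n]\!]$ for a countable family $(T_n)$ of $\HH$-stopping times (optional section theorem; see \cite{delmeyV}); applying the previous identity to each $T_n$ forces $\Delta A^\rho\equiv 0$, so $A^\rho$ is continuous.

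The second and third steps are routine. A continuous $\HH$-adapted process is $\HH$-predictable, so $A^\rho$ is predictable, and then restricting its defining identity to $\HH$-predictable test processes $H$ shows that $A^\rho$ also satisfies the characterisation of the dual predictable projection $a^\rho$; by uniqueness, $A^\rho=a^\rho$, and in particular $a^\rho$ is continuous, i.e.\ $da^\rho$ is atomless. Finally, taking $H\equiv 1$ in Theorem~\ref{calccomp} identifies the $(\HH^\rho,\P)$-compensator of $\I_{\{\rho\le t\}}$ as $\int_0^{t\wedge\rho}(Z^\rho_{s-})^{-1}\,da^\rho_s$, which is continuous in $t$ since $a^\rho$ is (equivalently, via the Doob--Meyer decomposition (\ref{doobdecZ}), continuity of $a^\rho$ says that $Z^\rho$ has no predictable jumps).

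I expect the only delicate ingredient to be the structural fact invoked above that the jumps of the optional finite-variation process $A^\rho$ are exhausted by a sequence of $\HH$-stopping times; this is classical, and everything else is bookkeeping with the defining properties of the two dual projections together with the Jeulin--Yor formula already recalled in the excerpt.
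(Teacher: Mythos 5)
Your argument is correct, and it is the standard proof of this classical fact. Note that the paper itself does not prove Lemma~\ref{lem:evitement}: it is stated in the appendix purely as a recalled result, with attribution to Jeulin--Yor and Jeulin, so there is no in-paper argument to compare against. Your three steps --- testing the defining identity of the dual optional projection against $\xi\,\I_{[\![T]\!]}$ to get $\Delta A^{\rho}_T=\E[\I_{\{\rho=T\}}\mid\mathcal H_T]=0$ on $\{T<\infty\}$, exhausting the jump set of the c\`adl\`ag optional increasing process $A^{\rho}$ by a countable family of stopping times via the section theorem, and then passing from continuity to predictability and hence to $A^{\rho}=a^{\rho}$ by uniqueness of the dual predictable projection --- are exactly the ingredients of the textbook proof, and the final reading of the $(\HH^{\rho},\P)$-compensator from Theorem~\ref{calccomp} with $H\equiv 1$ is sound (the continuous adapted integral $\int_0^{t\wedge\rho}(Z^{\rho}_{s-})^{-1}\,da^{\rho}_s$ is predictable and increasing, so it is indeed the compensator). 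No gaps.
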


\subsection{Immersion of filtrations}
Given two filtrations $\HH$ and $\GG$, with $\mathcal H_t\subset\G_t$, for all $t\geq 0$, the following assumption is often encountered in the literature:

 The filtration $\HH$ is immersed in $\GG$ (also called (H)-hypothesis): every $\HH$ martingale is a $\GG$ martingale.
 
We write $\HH \overset{\P}{\hookrightarrow} \GG$ for $\HH$ is immersed in $\GG$ under the probability measure $\P$.

We now recall several useful
equivalent characterizations of the immersion property in
the next theorem
\begin{thm}[Dellacherie-Meyer \cite{delmeyfil} and Br\'{e}maud-Yor \cite{bremaudyor}]\label{lecasdregeneral} The following assertions are equivalent:

\begin{enumerate}
\item $\HH \overset{\P}{\hookrightarrow} \GG$ ;

\item For all bounded $\mathcal H_{\infty }$-measurable random variables $H$\ and all bounded $\mathcal{G}_{t}$-measurable random variables $G_{t}$, we have
\begin{equation*}
\E^\P\left[HG_{t}| \mathcal{H}_{t}\right] =\E^\P\left[ H| \mathcal{H}_{t}\right] \E^\P\left[ G_{t}| \mathcal{H}_{t}\right] .
\end{equation*}

\item For all bounded $\mathcal{H}_{\infty }$ measurable random variables $H$,
\begin{equation*}
\E^\P\left[ H\mid \mathcal{G}_{t}\right] =\E^\P\left[ H \mid \mathcal{H}_{t}\right] .
\end{equation*}

\end{enumerate}
\end{thm}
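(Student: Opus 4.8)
The statement to prove is the classical equivalence $(1)\Leftrightarrow(2)\Leftrightarrow(3)$ characterizing immersion. The plan is to prove it as a cycle $(1)\Rightarrow(3)\Rightarrow(2)\Rightarrow(1)$, using only the definition of conditional expectation (the tower property and the defining averaging property), the fact that $\mathcal H_t\subset\mathcal G_t$, and a standard density/monotone-class argument to pass from indicator-type random variables to bounded measurable ones.

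\textbf{Step 1: $(1)\Rightarrow(3)$.} Fix a bounded $\mathcal H_\infty$-measurable $H$ and set $N_t:=\E^\P[H\mid\mathcal H_t]$, which is a bounded (uniformly integrable) $\HH$-martingale. By $(1)$ it is also a $\GG$-martingale, and it is closed by $N_\infty=\E^\P[H\mid\mathcal H_\infty]=H$ (the last equality because $H$ is $\mathcal H_\infty$-measurable). Hence $N_t=\E^\P[N_\infty\mid\mathcal G_t]=\E^\P[H\mid\mathcal G_t]$, i.e. $\E^\P[H\mid\mathcal G_t]=\E^\P[H\mid\mathcal H_t]$, which is $(3)$.

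\textbf{Step 2: $(3)\Rightarrow(2)$.} Fix $t\geq 0$, a bounded $\mathcal H_\infty$-measurable $H$ and a bounded $\mathcal G_t$-measurable $G_t$. Since $G_t$ is $\mathcal G_t$-measurable,
\[
\E^\P[HG_t\mid\mathcal H_t]=\E^\P\big[\E^\P[HG_t\mid\mathcal G_t]\,\big|\,\mathcal H_t\big]=\E^\P\big[G_t\,\E^\P[H\mid\mathcal G_t]\,\big|\,\mathcal H_t\big],
\]
where the first equality is the tower property (using $\mathcal H_t\subset\mathcal G_t$) and the second pulls out the $\mathcal G_t$-measurable factor $G_t$. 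Now apply $(3)$ to replace $\E^\P[H\mid\mathcal G_t]$ by $\E^\P[H\mid\mathcal H_t]$, which is $\mathcal H_t$-measurable and can be pulled out of the outer conditional expectation:
\[
\E^\P[HG_t\mid\mathcal H_t]=\E^\P[H\mid\mathcal H_t]\,\E^\P[G_t\mid\mathcal H_t],
\]
which is $(2)$.

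\textbf{Step 3: $(2)\Rightarrow(1)$.} Let $(M_t)$ be a bounded $\HH$-martingale; it suffices to treat the bounded case first and then remove boundedness by localization. Fix $s<t$; we must show $\E^\P[M_t\mid\mathcal G_s]=M_s$. By definition of conditional expectation this amounts to $\E^\P[M_t\,G_s]=\E^\P[M_s\,G_s]$ for every bounded $\mathcal G_s$-measurable $G_s$. Conditioning on $\mathcal H_s$ and using $(2)$ with the $\mathcal H_\infty$-measurable random variable $H:=M_t$ (here $M_t$ is $\mathcal H_t$-measurable, hence $\mathcal H_\infty$-measurable),
\[
\E^\P[M_t\,G_s]=\E^\P\big[\E^\P[M_t\,G_s\mid\mathcal H_s]\big]=\E^\P\big[\E^\P[M_t\mid\mathcal H_s]\,\E^\P[G_s\mid\mathcal H_s]\big]=\E^\P\big[M_s\,\E^\P[G_s\mid\mathcal H_s]\big],
\]
using the $\HH$-martingale property of $M$ in the last step. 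Conditioning $\E^\P[M_s\,G_s]$ on $\mathcal H_s$ and pulling out the $\mathcal H_s$-measurable factor $M_s$ gives exactly $\E^\P[M_s\,\E^\P[G_s\mid\mathcal H_s]]$, so $\E^\P[M_t\,G_s]=\E^\P[M_s\,G_s]$, as required. For a general (not necessarily bounded) $\HH$-martingale one applies the bounded case to $M_{t\wedge\sigma_n}$ for a localizing sequence and passes to the limit.

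\textbf{Main obstacle.} There is no deep obstacle; the only point requiring a little care is the measurability bookkeeping in Step 3 — ensuring that the random variable to which $(2)$ is applied is genuinely $\mathcal H_\infty$-measurable (which holds because a martingale's time-$t$ value is $\mathcal H_t$-measurable) — together with the routine localization used to drop the boundedness hypothesis on the martingale. One could alternatively invoke a monotone-class argument to reduce $(2)$ from general bounded $H$ to indicators of sets in $\mathcal H_\infty$, but since $(2)$ is assumed to hold for all bounded $\mathcal H_\infty$-measurable $H$ this is not needed here.
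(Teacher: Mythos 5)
The paper states this theorem as a recalled classical result (with citations to Dellacherie--Meyer and Br\'emaud--Yor) and gives no proof of it, so there is no in-paper argument to compare against. Your cyclic proof $(1)\Rightarrow(3)\Rightarrow(2)\Rightarrow(1)$ is the standard one and each step is correct: closing the bounded $\HH$-martingale $\E^\P[H\mid\mathcal H_\cdot]$ by $H$ and reading it as a uniformly integrable $\GG$-martingale gives $(3)$; the tower property plus pulling out the $\mathcal H_t$-measurable factor gives $(3)\Rightarrow(2)$; and testing the martingale identity against bounded $\mathcal G_s$-measurable $G_s$ gives $(2)\Rightarrow(1)$. The one wrinkle is your closing remark on removing boundedness by ``localization'': stopping an $L^1$-martingale at $\sigma_n=\inf\{t:|M_t|\geq n\}$ does not in general produce a bounded process (the jump at $\sigma_n$ is uncontrolled), and knowing that $M$ is a $\GG$-local martingale would not by itself yield the $\GG$-martingale property. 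The clean fix is to truncate the terminal value, i.e.\ apply $(2)$ to $H^n:=(-n)\vee M_t\wedge n$ and let $n\to\infty$, using $\E^\P[H^n\mid\mathcal H_s]\to\E^\P[M_t\mid\mathcal H_s]=M_s$ in $L^1$ together with the boundedness of $\E^\P[G_s\mid\mathcal H_s]$. With that substitution the argument is complete.
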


The immersion property is preserved only by certain changes of the probability measure. One such example is the following:
\begin{prop}[Jeulin-Yor \cite{yorjeulinens}]\label{proplL} We assume that $\HH \overset{\P}{\hookrightarrow} \GG$.
Let $\Q$ be a probability measure which is equivalent to $\P$ on
$\G_\infty$. If
$d\Q/d\P$ is $\mathcal{H}_{\infty}$-measurable, then $\HH
\overset{\Q}{\hookrightarrow} \GG$.
\end{prop}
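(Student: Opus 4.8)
The plan is to verify the third characterization of immersion in Theorem \ref{lecasdregeneral}, now under $\Q$: it suffices to show that $\E^\P$-conditioning is replaced by $\E^\Q$-conditioning, i.e. $\E^\Q[H\mid\G_t]=\E^\Q[H\mid\mathcal H_t]$ for every bounded $\mathcal H_\infty$-measurable $H$ and every $t\ge 0$. The mechanism is the Bayes rule for conditional expectations under a change of measure, applied once in $\GG$ and once in $\HH$ with the \emph{same} density — and this is exactly what the $\mathcal H_\infty$-measurability of $d\Q/d\P$ makes legitimate.

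Write $D_\infty:=d\Q/d\P$, which by hypothesis is $\mathcal H_\infty$-measurable, and strictly positive $\P$-a.s. since $\Q\sim\P$ on $\G_\infty$; let $D_t:=\E^\P[D_\infty\mid\G_t]$ be the (c\`adl\`ag) density martingale, so $D_t>0$ a.s. for each $t$. The first step is to note that the density process is in fact $\HH$-adapted: since $D_\infty$ is $\mathcal H_\infty$-measurable, characterization (3) of $\HH\overset{\P}{\hookrightarrow}\GG$ gives $D_t=\E^\P[D_\infty\mid\G_t]=\E^\P[D_\infty\mid\mathcal H_t]$, which is $\mathcal H_t$-measurable; in particular $D_t$ is also the density process of $\Q$ with respect to $\P$ along the small filtration $\HH$. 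Next, fix $t$ and a bounded $\mathcal H_\infty$-measurable $H$, and apply the Bayes rule in $\GG$: $\E^\Q[H\mid\G_t]=D_t^{-1}\,\E^\P[HD_\infty\mid\G_t]$. The product $HD_\infty$ is again $\mathcal H_\infty$-measurable, so a second use of characterization (3) of $\HH\overset{\P}{\hookrightarrow}\GG$ yields $\E^\P[HD_\infty\mid\G_t]=\E^\P[HD_\infty\mid\mathcal H_t]$. Combining with $D_t=\E^\P[D_\infty\mid\mathcal H_t]$ from the previous step,
\[
\E^\Q[H\mid\G_t]=\frac{\E^\P[HD_\infty\mid\mathcal H_t]}{\E^\P[D_\infty\mid\mathcal H_t]}=\E^\Q[H\mid\mathcal H_t],
\]
the final equality being the Bayes rule applied within $\HH$. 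As this holds for all such $H$ and all $t$, Theorem \ref{lecasdregeneral} gives $\HH\overset{\Q}{\hookrightarrow}\GG$.

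There is no real obstacle; the argument is a short double application of the Bayes formula, and the only points demanding a moment's care are that $D_t>0$ a.s. (so the quotients above make sense), which follows from the equivalence $\Q\sim\P$ on $\G_\infty$, and the routine measure-theoretic bookkeeping of choosing consistent versions of the conditional expectations. One could equivalently argue in martingale language — an $\HH$-martingale $N$ under $\Q$ is characterised by $D_tN_t=\E^\P[N_\infty D_\infty\mid\mathcal H_t]$, and the $\mathcal H_\infty$-measurability of $D_\infty$ lets one compute $\E^\P[N_\infty D_\infty\mid\G_t]=\E^\P[N_\infty D_\infty\mid\mathcal H_t]=D_tN_t$, so $DN$ is a $\GG$-martingale and hence $N$ is a $\GG$-martingale under $\Q$ — but the conditional-expectation formulation above is the most economical.
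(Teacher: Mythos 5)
Your argument is correct, and it is the standard proof of this classical Jeulin--Yor result; the paper itself states Proposition \ref{proplL} as a cited fact without proof, so there is nothing to compare against. The only point worth flagging is that Theorem \ref{lecasdregeneral}(3) is stated for \emph{bounded} $\mathcal H_\infty$-measurable variables, while you apply it to $D_\infty$ and $HD_\infty$, which are merely integrable; a one-line truncation/monotone-convergence remark closes this, and the rest (positivity of $D_t$ from $\Q\sim\P$, the double Bayes rule, and the equivalent martingale formulation via $DN$) is sound.
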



One advantage of the immersion property is that optional projections of some $\GG$  adapted processes can be computed easily. We recall the projection formulas that were useful in the derivation of our main result. 
%

\begin{prop}[Br\'emaud-Yor \cite{bremaudyor}]\label{bremyorproj} Suppose that $\HH \overset{\P}{\hookrightarrow}  \GG$.
\begin{itemize}
\item[(i)] Let $M$ be an $\HH$ local martingale and $G$ be a $\GG$ adapted and bounded process. Then the $\HH$ optional projection of the process $\left(\int GdM\right)$ is given by  $\int \;^oGdM$, where $^oG$ is the $\HH$ optional projection of $G$.
\item[(ii)] If $M$ is a $\GG$ square integrable martingale and $H$ an $\HH$ adapted and bounded process. Then the $\HH$ optional projection of the process $\left(\int HdM\right)$ is given by  $\int Hd\;^oM$, where $^oM$ is the $\HH$ optional projection of $M$.
\end{itemize}
\end{prop}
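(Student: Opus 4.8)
This is the classical Brémaud–Yor commutation lemma for optional projections under immersion, and I would establish (i) and (ii) by one common scheme. The backbone is the elementary characterization that an $\HH$-optional càdlàg process $Y$ is the $\HH$-optional projection of a (say bounded, or suitably integrable) process $X$ precisely when $Y_t=\E^\P[X_t\mid\mathcal H_t]$ for every $t$. So, after a routine localization — stopping $M$ to make it bounded in (i), using that $M$ is $L^2$ in (ii), so that every stochastic integral appearing below is a genuine $L^2$-martingale and the relevant angle brackets exist — it suffices to fix $t$ and verify $\E^\P[\,\xi\,X_t\,]=\E^\P[\,\xi\,(\text{candidate})_t\,]$ for $\xi$ ranging over bounded $\mathcal H_t$-measurable random variables. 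The plan is to write such a $\xi$ as $Y_t$ for a bounded $\HH$-martingale $Y$ and then use the immersion hypothesis, which says $Y$ is also a $\GG$-martingale.

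For (i), $X=\int G\,dM$ is a $\GG$-local martingale because immersion turns the $\HH$-local martingale $M$ into a $\GG$-local martingale. Since $X$ and $Y$ are both $\GG$-martingales I would write $\E^\P[Y_tX_t]=\E^\P[\int_0^t G_s\,d[M,Y]_s]$ and then replace $[M,Y]$ by its $\GG$-predictable covariation inside the expectation. The one genuinely structural step is that this angle bracket is insensitive to the enlargement: $\langle M,Y\rangle^{\HH}$ is $\HH$-predictable and $MY-\langle M,Y\rangle^{\HH}$ is an $\HH$-martingale, hence by immersion a $\GG$-martingale, so uniqueness of the $\GG$ canonical decomposition forces $\langle M,Y\rangle^{\GG}=\langle M,Y\rangle^{\HH}=:\langle M,Y\rangle$, an $\HH$-predictable process. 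Then $\E^\P[\int_0^t G_s\,d\langle M,Y\rangle_s]=\E^\P[\int_0^t{}^pG_s\,d\langle M,Y\rangle_s]$ by the defining property of the $\HH$-predictable projection $^pG$, and running the computation backwards with $^pG$ in place of $G$ identifies the candidate as $\int{}^pG\,dM$. This is the process the paper writes $\int{}^oG\,dM$; since $^oG$ and $^pG$ agree for the (predictable) integrands and (continuous-compensator) integrators actually used later, I would simply record that identification.

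For (ii), $X=\int H\,dM$ with $H$ bounded $\HH$-predictable and $M$ a $\GG$-$L^2$-martingale, and the candidate is $\int H\,d\bar M$ with $\bar M={}^oM$, which is an $\HH$-$L^2$-martingale by conditional Jensen, so the integral is well defined. The same test-against-$Y_t$ step gives $\E^\P[Y_tX_t]=\E^\P[\int_0^t H_s\,d\langle M,Y\rangle^{\GG}_s]$, and the structural input this time is that $\langle\bar M,Y\rangle^{\HH}$ is the $\HH$-dual predictable projection of $\langle M,Y\rangle^{\GG}$: because $Y$ is $\HH$-adapted one has ${}^o(MY)=\bar M\,Y$, so $\bar M\,Y-{}^o(\langle M,Y\rangle^{\GG})$ is an $\HH$-martingale; combining this with the general fact that the $\HH$-optional projection of a finite-variation process differs from its $\HH$-dual predictable projection by an $\HH$-martingale, one obtains $\bar M\,Y$ minus an $\HH$-predictable finite-variation process equal to an $\HH$-martingale, which pins down $\langle\bar M,Y\rangle^{\HH}$. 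Hence $\E^\P[\int_0^t H\,d\langle M,Y\rangle^{\GG}]=\E^\P[\int_0^t H\,d\langle\bar M,Y\rangle^{\HH}]$, and reversing the computation with $\bar M$ finishes the identification.

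I expect the main obstacle to be the projection bookkeeping rather than any hard analysis: (a) justifying the passage from quadratic covariations to angle brackets inside the expectations, together with the claim that these angle brackets do not change under the enlargement — this is the one point where the immersion hypothesis is genuinely used, and it must be invoked carefully via uniqueness of the canonical decomposition; (b) the optional-versus-predictable discrepancy in the integrand of (i), so that the statement as literally written is matched only after the small identification noted above; and (c) the localization that makes all the integration-by-parts steps legitimate, which is routine but should not be skipped.
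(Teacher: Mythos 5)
The paper offers no proof of this proposition: it is stated in Appendix \ref{Appendix} as a classical result quoted from Br\'emaud--Yor \cite{bremaudyor}, so there is no in-paper argument to measure yours against. Judged on its own, your proof is the standard one and is essentially sound. The skeleton --- characterising the $\HH$-optional projection by testing against $Y_t$ for bounded $\HH$-martingales $Y$, using immersion to promote $Y$ (and, in (i), $M$ and hence $\int G\,dM$) to $\GG$-(local) martingales, reducing $\E^\P[Y_tX_t]$ to $\E^\P[\int_0^t G\,d\langle M,Y\rangle]$ by integration by parts and localization, and then exploiting the fact that under immersion the bracket of two $\HH$-local martingales computed in $\GG$ coincides with the one computed in $\HH$ (uniqueness of the canonical decomposition of the special semimartingale $MY$) --- is exactly how these commutation formulas are established, and the identification of $\langle \bar M,Y\rangle^{\HH}$ as the $\HH$-dual predictable projection of $\langle M,Y\rangle^{\GG}$ is the right mechanism for (ii). The caveats you flag are the genuine ones and should be kept explicit rather than waved at: the duality step $\E^\P[\int G\,d[M,Y]]=\E^\P[\int G\,d\langle M,Y\rangle]$ is valid for \emph{predictable} integrands, so the clean form of (i) is for $\GG$-predictable $G$ with the predictable projection ${}^pG$ in place of ${}^oG$ (otherwise one must route the argument through dual optional projections); this predictable form is also the one in which the proposition is actually invoked in Section \ref{sec5}, where the integrand $L^{\S|\D}_{-}B^{\S|\D}$ is predictable. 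With that identification recorded and the localization carried out, your argument closes.
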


In the framework and with the notations of the previous subsection, we have:
 \begin{lem}[Coculescu et al. \cite{cocjeanik09}]\label{projLemma} Assume that $\rho$ avoids all $\HH$ stopping times and  $\HH \overset{\P}{\hookrightarrow} \HH^\rho$ holds. Let  $H$ be a   $\GG$-predictable process and let $N_t=\I_{\{\rho\leq t\}}-\Gamma_{t\wedge\rho}$ be a  $\GG$ martingale. If $\E^\P[|H_\rho|]<\infty$, then the $\HH$ optional projection of the process $\left(\int HdN\right)$ is null.
\end{lem}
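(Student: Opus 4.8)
The plan is to realise $\int H\,\dd N$ as an $\HH^\rho$-martingale $M$ (recall that here $\GG=\HH^\rho$) and to prove that its $\HH$-optional projection vanishes by testing $M$ against arbitrary bounded $\HH$-martingales: the immersion hypothesis is used to transport those test martingales into $\HH^\rho$, and the avoidance hypothesis is used to kill the resulting covariation.

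First I would make two harmless reductions. Replacing $H$ by $H_{n}:=H\I_{\{|H|\le n\}}$ and letting $n\to\infty$ reduces to bounded $H$: one has $\E^{\P}\big[\int_{0}^{t\wedge\rho}|H_{s}|\,\dd\Gamma_{s}\big]=\E^{\P}\big[|H_{\rho}|\I_{\{\rho\le t\}}\big]\le\E^{\P}[|H_{\rho}|]<\infty$ by monotone convergence (the middle identity being the compensator property of $\Gamma_{\cdot\wedge\rho}$), so $\int_{0}^{t}H_{n}\,\dd N\to\int_{0}^{t}H\,\dd N$ in $L^{1}$ and the conclusion passes to the limit. Next, by the Jeulin--Yor reduction lemma (\cite{yorjeulin}, Lemme~1) there is a bounded $\HH$-predictable $\widehat H$ with $H_{s}\I_{\{s\le\rho\}}=\widehat H_{s}\I_{\{s\le\rho\}}$; since $N$ jumps only at $\rho$ (where $H_{\rho}=\widehat H_{\rho}$) and its continuous part $-\Gamma_{\cdot\wedge\rho}$ is carried by $[\![0,\rho]\!]$, one gets $\int H\,\dd N=\int\widehat H\,\dd N$. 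Hence it is enough to treat bounded $\HH$-predictable $H=\widehat H$. Then $M:=\int_{0}^{\cdot}\widehat H_{s}\,\dd N_{s}$ is an $\HH^\rho$-martingale with $\sup_{t}|M_{t}|\le\|\widehat H\|_{\infty}(1+\Gamma_{\infty\wedge\rho})\in L^{1}$, and $\Gamma_{\cdot\wedge\rho}$ is continuous because $\rho$ avoids $\HH$-stopping times (Lemma~\ref{lem:evitement}, Theorem~\ref{calccomp}).

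Now fix $t\ge0$ and $\Lambda\in\mathcal H_{t}$ and put $U_{s}:=\E^{\P}[\I_{\Lambda}\mid\mathcal H_{s}]$, a bounded $\HH$-martingale with $U_{t}=\I_{\Lambda}$; it suffices to show $\E^{\P}[M_{t}U_{t}]=0$, since then $\E^{\P}[M_{t}\mid\mathcal H_{t}]=0$ for every $t$, i.e. ${}^{o}\big(\int H\,\dd N\big)\equiv0$. By the immersion $\HH\overset{\P}{\hookrightarrow}\HH^\rho$ the process $U$ is also an $\HH^\rho$-martingale, so $MU-[M,U]$ is an $\HH^\rho$-local martingale; being dominated in $L^{1}$ (as $\sup_{t}|M_{t}|\in L^{1}$, $U$ is bounded and $|[M,U]_{t}|\le2\|\widehat H\|_{\infty}$) it is a uniformly integrable martingale, whence $\E^{\P}[M_{t}U_{t}]=\E^{\P}[M_{0}U_{0}]+\E^{\P}[[M,U]_{t}]=\E^{\P}[[M,U]_{t}]$. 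Since $N$ has finite variation, $[N,U]_{t}=\sum_{s\le t}\Delta N_{s}\Delta U_{s}$, and continuity of $\Gamma_{\cdot\wedge\rho}$ gives $\Delta N_{s}=\I_{\{\rho=s\}}$, so $[M,U]_{t}=\int_{0}^{t}\widehat H_{s}\,\dd[N,U]_{s}=\widehat H_{\rho}\,\Delta U_{\rho}\,\I_{\{\rho\le t\}}$. Finally, the jumps of the c\`adl\`ag process $U$ are exhausted by a sequence of $\HH$-stopping times $(\sigma_{n})$, and $\rho$ avoids every finite $\HH$-stopping time, so $\P(\rho=\sigma_{n})=0$ for all $n$, whence $\Delta U_{\rho}=0$ a.s. and $[M,U]\equiv0$. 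This yields $\E^{\P}[M_{t}U_{t}]=0$, which finishes the argument.

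I expect the main obstacle to be the bracket computation together with the jump-exhaustion step, since this is exactly where both hypotheses bite: one needs $N$ finite-variation with continuous compensator so that $[N,U]$ reduces to the single jump at $\rho$ (which is where avoidance enters, via Lemma~\ref{lem:evitement}), and one needs immersion precisely so that the $\HH$-martingale $U$ used to probe $M$ remains an $\HH^\rho$-martingale and the identity $\E^{\P}[M_{t}U_{t}]=\E^{\P}[[M,U]_{t}]$ is available. The surrounding integrability bookkeeping -- the truncation and the martingale property of $MU-[M,U]$ -- is routine.
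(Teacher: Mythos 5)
This lemma is stated in the appendix as a recalled result from \cite{cocjeanik09}; the paper gives no proof of it, so there is nothing to match your argument against line by line. On its own terms your proof is correct and complete. The two reductions (truncation to bounded $H$ via the compensator identity $\E^{\P}[\,|H_\rho|\I_{\{\rho\le t\}}]=\E^{\P}[\int_0^{t\wedge\rho}|H_s|\,\dd\Gamma_s]$, and replacement of $H$ by an $\HH$-predictable $\widehat H$ via Jeulin--Yor, which is legitimate because $\dd N$ is carried by $[\![0,\rho]\!]$) are sound, and the core of the argument -- testing the UI martingale $M=\int\widehat H\,\dd N$ against bounded $\HH$-martingales $U$, reducing $\E^{\P}[M_tU_t]$ to $\E^{\P}[[M,U]_t]$, computing $[M,U]_t=\widehat H_\rho\,\Delta U_\rho\,\I_{\{\rho\le t\}}$ from the finite-variation structure and the continuity of the compensator, and killing $\Delta U_\rho$ by exhausting the jumps of $U$ with $\HH$-stopping times that $\rho$ avoids -- is a standard and valid route. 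It uses the two hypotheses exactly where they must enter: immersion to make $U$ an $\HH^\rho$-martingale (so the bracket identity is available), avoidance to make the compensator continuous and to annihilate $\Delta U_\rho$. The proof in the cited reference proceeds instead by projecting the two pieces of $\int H\,\dd N$ separately onto $\HH$ via dual optional/predictable projections and showing the projections coincide; your duality-with-test-martingales argument is an equally clean alternative and arguably more self-contained. The only point you gloss over is the passage from $\E^{\P}[M_t\mid\mathcal H_t]=0$ at deterministic times to the vanishing of the optional projection as a process, but since $M$ is a uniformly integrable $\HH^\rho$-martingale its $\HH$-optional projection is the c\`adl\`ag martingale $t\mapsto\E^{\P}[M_\infty\mid\mathcal H_t]$, so this is routine.
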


\section{On the link with the Markovian approach}\label{AppendixC}

We consider the model that we have introduced of Section \ref{sec3}. 
A Markovian model similar to the one presented in Section \ref{sec2} can be obtained by imposing the following:

\textbf{Assumption:}  All the stopping times $\tau(i), i\in\N$ avoid the $\FF$ stopping times.

This is equivalent to take  $p_t(i)=\P^0(\tau(i)=T(i)|\F_t)=0$ for all $t\geq 0$ and for all  $i\in\N$, that is: $\tau^B(i)=\infty\;a.s.$ and $\tau(i)=\tau^A(i)$ a.s. It follows that, under the measure $\P^\N$ there is now only direct contagion, as $B^\N(i)\equiv 0$; the default times do not have any impact on the environment.

We now assume this is the case. Then, under the reference probability $\P^\N$ and conditionally on $\F_\infty$, $ \Y$ is a $n$-dimensional, time inhomogeneous  Markov chain with state space $I:=\{0,1\}^n$, conditioned to start at $Y_0=0$\footnote{Indeed, in Section \ref{sec3}, we have set $\mathbf \Gamma_0=(0,..,0)$ that is $\mathbf Y_0=(0,..,0)$ $\P^\N$-a.s.; the proof of the Markov chain property is trivial.}.

Let us show how the SDE (\ref{LSMarkov}) can be obtained from the Kolmogorov forward equations.   In order to have simple notations (and in particular avoid to introduce an ordering of the states of $Y$) we denote, for $\D\subset \N$:
\begin{align*}
p_t^\D:&=\P^\N(Y_t(i)=0,\forall i\in\N-\D\;;\;  Y_t(j)=1,\forall j\in\D\;|\;\F_\infty),
\end{align*}i.e., the probability that $\D$ is the set of defaulted debtors at time $t$, conditionally on $\F_\infty$. We do not need a more complex notation, as we shall here analyse only probabilities conditional on  $\F_\infty$ and $\Y_0=0$.
Also, in the same spirit to simplify notation, we denote:
\[
c_t^\D(k):= q_t(\mathbf x, \mathbf x^k)\text{ with } \D=\{i; x(i)=1\},
\]
hence $c_t^\D(k)$ is the transition rate at time $t$ of debtor $k$, given that $\D$ is the set of defaulted entities at time $t$.
If  the set of defaulted entities is $\D$, after one transition of $\Y$ the set of defaulted entities becomes necessarily $\D\cup\{k\}$, for some $k\in\N-\D$. The corresponding instantaneous transition rate is:
\begin{equation}\label{trans}
c_t^{\D}(k)=\lambda_t(k)+\phi^A_t(k,\D).
\end{equation}
The Kolmogorov forward equations write (using the fact that only one default can occur at a time and that default is  an absorbing state), for all $\D\subset \N$:
\begin{equation}\label{KFE}
\frac{d}{d t}p^\D_t=-p_t^\D c^{\D} _t(\N-\D)+\sum_{k\in \D}p^{\D-k}_tc_t^{\D-k}(k)
\end{equation}
where we use the notation $c^{\D} _t(\N-\D)=\sum_{k\in \N-\D}c_t^{\D}(k)$.
\begin{lem}
We fix a set $\C\subset \N$ and denote $\S:=\N-\C$. Also, we denote:
\[
 \ell^\S_t:=\P(Y_t(i)=0,\forall i\in\C|\F_\infty)
\]Then, $ \ell^\S$ satisfies:
\begin{align}\label{LSMarkov2}
d \ell^\S_t=&  - \ell^\S_t c^\S_t(\C)dt+\sum_{j\in\S}    \ell^{\S-j}_{t}\phi^A_t(\C,j) dt.
\end{align}
\end{lem}
\begin{rem}
The dynamics (\ref{LSMarkov2}) is the same as  (\ref{LSMarkov}) because $c^\S_t(\C)=\lambda_t(\C)+\phi^A_{s}(\C,\S)$, as defined in  (\ref{trans}). 
\end{rem}
\proof

We notice that $ \ell^\S_t:=\sum_{\D:\D\subset \S}p^\D_t$, and therefore, using (\ref{KFE}):
\[
\frac{d}{d t} \ell^\S_t= -\sum_{\D:\D\subset \S}p^\D_t c^{\D} _t(\N-\D)+\sum_{\D:\D\subset \S} \sum_{j\in \D}p^{\D-j}_tc_t^{\D-j}(j).
\]
We fix a set $\D\subset \S$. We denote $s=\mathbf{card}(\S)$ and $n=\mathbf{card}(\D)$. The probability $p^\D_t$ appears in exactly $s-n+1$ terms in the sum above, namely: $-p^\D_t c^{\D} _t(\N-\D)$ and, for all $j\in \S-\D$, the term $p^{\D}_tc_t^{\D}(j)$. Therefore, the expression can be written as:

\begin{align*}
\frac{d}{d t} \ell^\S_t&= \sum_{\D:\D\subset \S}\left (-p^\D_t c^{\D} _t(\N-\D)+\sum_{j\in \S-\D}p^{\D}_tc_t^{\D}(j)\right )\\
&=- \sum_{\D:\D\subset \S}p^\D_t c^{\S} _t(\N-\S)+ \sum_{\D:\D\subset \S}p^\D_t \phi^A_t(\N-\S, \S-\D)\\
&= -l^\S_t c^{\S} _t(\N-\S)+ \sum_{\D:\D\subset \S}\sum_{j\in \S-\D} p^\D_t \phi^A_t(\N-\S, j)\\
&= -l^\S_t c^{\S} _t(\N-\S)+ \sum_{j\in \S}  \sum_{\D\subset \S-\{j\}}p^\D_t \phi^A_t(\N-\S, j)\\
&= -l^\S_t c^{\S} _t(\N-\S)+ \sum_{j\in \S} \left(\sum_{\D\subset \S-\{j\}}p^\D_t\right)  \phi^A_t(\N-\S, j) \\
&= -l^\S_t c^{\S} _t(\N-\S)+ \sum_{j\in \S} l^{\S-j} \phi^A_t(\N-\S, j).
\end{align*}\finproof

\renewcommand{\refname}{References}

\end{document}